\documentclass[a4paper,11pt]{article}
\usepackage{geometry}
\usepackage[T1]{fontenc} % load this, see https://tex.stackexchange.com/questions/664/why-should-i-use-usepackaget1fontenc
\usepackage[utf8]{inputenc}
\usepackage[english]{babel}
\usepackage{amssymb}
\usepackage{amsmath}
\usepackage{amsthm}
\usepackage{graphicx}
\graphicspath{{figures}}
\usepackage{rotating} % for the sideways table

\usepackage[round]{natbib}
\bibliographystyle{plainnat}

\usepackage{hyperref}
\hypersetup{
% bookmarks=false,               % show bookmarks bar?
 unicode=true,                  % non-Latin characters in Acrobat’s bookmarks
 pdftoolbar=true,               % show Acrobat’s toolbar?
 pdfmenubar=true,               % show Acrobat’s menu?
 pdffitwindow=false,            % window fit to page when opened
 pdfstartview={FitH},           % fits the width of the page to the window
 pdftitle={Noncommuting common causes revisited},    % title
 pdfauthor={Gábor Hofer-Szabó, Szilárd Szalay},        % author
 pdfsubject={manuscript on common cause systems},      % subject of the document
 pdfcreator={pdflatex},         % creator of the document
 pdfproducer={overleaf},        % producer of the document
 pdfkeywords={common cause systems} {quantum probability} {quantum correlations} {Reichenbach's common cause principle}, % list of keywords
 pdfnewwindow=true,             % links in new window
 colorlinks=true,               % false: boxed links; true: colored links
 linktoc=page,                  % defines which part of an entry in the table of contents is made into a link
 linkcolor=blue,                % color of internal links
 citecolor=blue,                % color of links to bibliography
 filecolor=magenta,             % color of file links
 urlcolor=blue                  % color of external links
}

\newcommand{\field}[1]{\mathbb{#1}}
\newcommand{\Id}{I}
\newcommand{\Cc}[1]{{#1}^*}
\newcommand{\ee}{\mathrm{e}}

\DeclareMathOperator{\Tr}{Tr}
\DeclareMathOperator{\diag}{diag}
\DeclareMathOperator{\Lin}{Lin}
\DeclareMathOperator{\Ran}{Ran}
\DeclareMathOperator{\Ker}{Ker}

\newcommand{\cket}[1]{\vert #1 \rangle}

\newcommand{\bra}[1]{\langle #1 \vert}

\newcommand{\proj}[1]{\cket{#1}\bra{#1}}
\newcommand{\skalp}[2]{\langle #1 \vert #2 \rangle}

\providecommand{\abs}[1]{\lvert#1\rvert}
\providecommand{\bigabs}[1]{{\bigl\lvert#1\bigr\rvert}}

\providecommand{\norm}[1]{\lVert#1\rVert}

\newcommand{\set}[1]{\{ #1 \}}

\newcommand{\sset}[2]{\set{ #1 \;\vert\; #2 }}

\theoremstyle{definition}
\newtheorem{prop}{Proposition}

\newtheorem{lem}[prop]{Lemma}
\newtheorem{cor}[prop]{Corollary}

\newcommand{\tlcell}[2]{\begin{tabular}{@{}c@{}}#1\\#2\end{tabular}}

\title{\textbf{Noncommuting common causes revisited}}
\author{
\textit{Gábor Hofer-Szabó}\thanks{HUN-REN Research Center for the Humanities, Budapest, email: szabo.gabor@btk.mta.hu} \\
\textit{Szilárd Szalay}\thanks{HUN-REN Wigner Research Centre for Physics, Budapest, email: szalay.szilard@wigner.hu}
}
\date{November 8, 2024}

% ******************************************************************************
% ******************************************************************************
\begin{document}
\maketitle

\begin{abstract}
In this paper, we revisit the concept of noncommuting common causes; refute two objections raised against them, the triviality objection and the lack of causal explanatory force; and explore how their existence modifies the EPR argument. More specifically, we show that 1) product states screening off all quantum correlations do not  compromise noncommuting common causal explanations; 2) noncommuting common causes can satisfy the law of total probability; 3) perfect correlations can have indeterministic noncommuting common causes; and, as a combination of the above claims, 4) perfect correlations can have noncommuting common causes which are both nontrivial and satisfy the law of total probability. 
\vspace{0.1in}

\noindent
\textbf{Keywords:} noncommuting common cause, law of total probability, Bell's theorems, EPR argument.
\end{abstract}

% ******************************************************************************
% ******************************************************************************
\section{Introduction}
\label{sec:intro}

Reichenbach's Common Cause Principle \citep{Hofer-Szabo-2013c} is the claim that if there is a correlation between two events and there is no direct causal connection between the correlating events, then there exists a common cause of the correlation. \citet{Reichenbach-1956} characterized this common cause in classical probabilistic terms as a screener-off that is a set of exhaustive and mutually exclusive events which turn the correlation into probabilistic independence when conditioned on them. This classical concept of the common cause has been generalized to noncommutative event spaces in a series of papers by \citet{Hofer-Szabo-2011,Hofer-Szabo-2012,Hofer-Szabo-2013a,Hofer-Szabo-2013b}.  In the present paper, we revisit  noncommuting common causes, address two objections raised against them, and investigate how they relate to the EPR argument. Before formulating the main results of the paper, however, it is worth providing a little background to the  noncommuting common causes. 

\newpage
The notion of the classical common cause has become a central concept in the foundations of quantum theory as a causal explanation for quantum correlations between spacelike separated events \citep{Butterfield-1989}. Since Reichenbach's original definition was purely probabilistic, in order to localize the common causes in spacetime, one had to formulate the Common Cause Principle also in local physical theories \citep{Haag-1992}. The question whether the Common Cause Principle is valid in algebraic quantum field theory (AQFT) was first posed by \citet{Redei-1997,Redei-1998}. As a positive answer, \citet{Redei-2002,Redei-2007a} have shown that the Principle holds in AQFT with locally \emph{infinite} degrees of freedom and the common cause can be localized in the union of the causal pasts of the correlating events. Since AQFT with locally infinite degrees of freedom applies special (type III) von Neumann algebras, \citet{Hofer-Szabo-2011} asked the question whether the result of Rédei and Summers also holds in local quantum theories with locally \emph{finite} degrees of freedom applying simple matrix algebras. It turned out the Common Cause Principle does \emph{not} hold in quantum Ising model, that is, the status of the Principle sensitively depends on the type of the algebra. 

This was the point when Hofer-Szabó and Vecsernyés came up with the following idea. What if we give up the restriction that the common cause should commute with the correlating events? What if we allow for noncommuting common causes in the causal explanation of quantum correlations? The formalism of quantum theory with noncommutative algebras provide ample space to look for noncommuting common causes and the screening-off property can be readily generalized for quantum probability spaces. 

This move had plenty of beneficial consequences. Here it is enough to mention only the two most important results. The first benefit of allowing for noncommuting common causes was that the Common Cause Principle could be regained in local physical theories with locally finite degrees of freedom. \citet{Hofer-Szabo-2013a} have shown that in the quantum Ising model one can always find for every correlation a common cause that does not commute with the correlating events. The second, probably more important result published by \citet{Hofer-Szabo-2012,Hofer-Szabo-2013b} was that for a set of spatially separated correlating events (projections) maximally violating the Clauser-Horne inequality there could be given a common causal explanation if commutativity is abandoned between the common cause and the correlating events. Moreover, the noncommuting common cause was local and supported in the common past of the correlating events. In short, by abandoning commutativity one could give a local common causal explanation for the EPR-Bell scenario.

The concept of a noncommuting common cause which formed the basis for both of the above results, however, was not without problems and provoked sharp criticism. The general aversion towards noncommuting common causes rooted in two facts. First, using noncommuting common causes naively to address the basic problems of the causal explanation in quantum theory in a formal way is no use whatsoever, if it is not underpinned by a viable ontology on which the causal theory can be based. Noncommutative algebras, however, lack such a clear ontology and possess only an operational interpretation. Second, the construction of a local common causal explanation for the EPR-Bell scenario seemed to threaten the moral of Bell's theorems, namely that there is no local hidden variable model for quantum theory. 

The more concrete criticisms revolved especially around two points. The first objection raised by \citet{Cavalcanti-2014} was that noncommuting common causes are \emph{trivial} in the sense that any product state can be posited as a common cause for any correlations. The second, more serious objection raised both by \citet{Cavalcanti-2014} and also by \citet{Lazarovici-2014} was that noncommuting common causes \emph{lack of causal explanatory force} because they violate the law of total probability. 

To see this second objection somewhat better, it will be useful  to understand what role the  law of total probability plays in the common causal explanation for correlations. Suppose that we find that two events, ``having yellowish fingertips'' and  ``having lung cancer'' are correlating in a population and we would like to explain the correlation by assuming that the correlation is due to ``smoking'' as a common cause. In Reichenbach's terms this means that if we condition either on the smoking or on the nonsmoking subpopulation the correlation should disappear. In other words, the correlation is simply the result of mixing these two probabilistic independencies. However, it is also required from the causal explanation---even if not explicitly emphasized---that the probability or frequency of the correlating events within the whole population \textit{should be recovered} from the corresponding  probabilities of the two subpopulations. That is, the ratio of people having yellowish fingertips in the whole population should be equal to the ratio of people having yellowish fingertips in the smoking population multiplied by the ratio of  the smoking population plus the ratio of people having  yellowish fingertips in the nonsmoking population multiplied by the ratio of  the nonsmoking population. Similarly, for the ratio of people having lung cancer and for the ratio of people having both properties. This requirement is nothing but a special case of the law of total probability. The law of total probability is a theorem in classical probability theory and maybe this was the reason why this extra requirement was not explicitly mentioned in the definition of the common cause.

In the noncommutative probability theory, however, the law of total probability does not hold universally. In other words, from the simple fact that a set of events provide a screener-off for a correlation, it does not follow that the probability of the correlating event can be reproduced from the probabilities conditioned on the common causes. Since a common causal explanation is nothing but a finer description of the correlation at a more fundamental level, the violation of the law of total probability for such common causes would mean that the causal explanation fails since the common causes cannot reproduce the original correlation.

These two objections to the application of noncommuting common causes were partly tackled by \citet{Hofer-Szabo-2013b}. The noncommuting common cause provided for the EPR-Bell scenario was far from trivial and the problem due to the violation of the law of total probability was readily acknowledged. Still, the following two questions were not treated in depth:
\begin{enumerate}
\item Does the existence of trivial noncommuting common causes provide a counterargument to the application of noncommuting common causes?
\item Does noncommutativity exclude the validity of the law of total probability or do there exist noncommuting common causes which satisfy the law of total probability?
\end{enumerate}

Beyond these questions, there is a third topic which our paper aims to touch upon. It relates to a recent debate between the philosopher Tim Maudlin and the physicist Reinhard Werner on what the violation of Bell's inequality really exclude \citep{Maudlin-2014a,Werner-2014a,Maudlin_2014b,Werner-2014b}. To simplify the debate, Werner holds that Bell's theorems have \emph{two} separate assumptions, namely locality and classicality; while Maudlin defends the position that Bell's theorems have only \emph{one} assumption, namely locality, and neither classicality nor determinism is assumed in the arguments. Thus, the violation of Bell's inequality must be blamed on the violation of locality alone. 

In our paper we will focus on just one step in the debate between Maudlin and Werner, namely the derivation of determinism from locality using the EPR argument.\footnote{For a delightful paper clarifying the positions and reconstructing the arguments, see \citep{Gomori-2023}.} In the EPR argument, one starts with a perfect correlation between spacelike separated events, say, results of spin measurements. Assuming locality and the Common Cause Principle, that is assuming that there is no direct causal connection between these spacelike separated events but such a distant correlation calls for a common cause, EPR notoriously shows that a common cause accounting for a perfect correlation must be deterministic. Maudlin is right that \citet{Bell-2004} indeed used the EPR argument as a kind of zeroth step in his argument and did not assumed determinism. But the question is how much ``classicality'' is assumed in the EPR argument. Or to put it another way, if we give up ``classicality'' and allow for noncommuting common causes, does determinism still follow from locality and the Common Cause Principle? This question boils down to  the third question tackled in this paper:
\begin{enumerate}
\item[3.] Can perfect correlations have indeterministic noncommuting common causes?
\end{enumerate}

In this paper we will address the above three questions in a formal way. In the investigation we completely ignore the question of spacetime localization of the common cause and restrict ourselves to the standard Hilbert space formalism of quantum mechanics, more precisely to the case of discrete finite quantum systems. Searching for common causes with particular properties is not an easy task in general. Therefore, in this paper we do not seek to provide a general analysis of noncommuting common causes for discrete finite systems. Rather, we develop some tools for the special case of two-qubit systems (see Lemmas~\ref{lem:2x2SCRNP} and~\ref{lem:2x2SEPCOV}) to construct a host of examples for common causes with various properties. 

In the paper, we will proceed as follows. In Section~\ref{sec:prob}, we collect the basics of quantum probability theory and define the concept of the common cause system. In Sections~\ref{sec:CCSprod},~\ref{sec:CCSLTP} and~\ref{sec:CCSperfcorr} we address the above three questions separately, one in each section. In Section~\ref{sec:limitations} we investigate how common causes fare when these questions are posed at the same time. We divide each section into two subsections, one for the main concepts and one for the examples. Since we provide ample examples for many different features, for the convenience of the reader we summarize the properties of the different examples in an extended table in Appendix~\ref{appsec:2x2.summary}. In Section~\ref{sec:concl}, we conclude.

The main results of the paper are the answers to the above questions: 
\begin{enumerate}
\item  The triviality objection does not compromise the application of noncommuting common causes since atomic product quantum events screen off all quantum correlations just like classical atomic events screen off all classical correlations; still no one would say that classical common causes are meaningless just because of the existence of atomic screener-offs. Trivial common causes form only a minor part of the possible common causes for a given correlation. To support this claim, we provide in Section~\ref{sec:CCSprod} a bunch of examples for not strongly trivial (see below) common cause systems.
\item  Noncommutativity does not generally exclude the law of total probability. To show this, we exhibit in Section~\ref{sec:CCSLTP} quantum correlations with noncommuting common cause systems which satisfy the law of total probability and hence can qualify as a fine-grained description of the correlation.
\item The EPR argument assumes some ``classicality'' in the sense that deterministic common causes cannot be derived from perfect correlations if the underlying event space is not commutative. We show in Section~\ref{sec:CCSperfcorr} that perfect correlations can have  noncommuting common causes which are indeterministic.
\end{enumerate}
Finally, we show that the above three requirements---namely, nontriviality, satisfaction of the law of total probability and perfect correlation---, even together, do not rule out the existence of noncommuting common causes.
\begin{enumerate}
\item[4.] We show in Section~\ref{sec:limitations} that for perfect correlations there exist noncommutative common causes which are both nontrivial and satisfy the law of total probability. 
\end{enumerate}

% ******************************************************************************
% ******************************************************************************
\section{Noncommutative probability theory}
\label{sec:prob}

Noncommutative probability theory is the generalization of classical probability theory to noncommutative event structure. In this section, we overview some basic concepts of this theory and the definition of noncommuting common causes. For an extended treatment of quantum probability see \citep{Redei-2007b}.

% ******************************************************************************
\subsection{General theory}
\label{sec:prob.gen}

Let $\mathcal{N}$ be a von Neumann algebra, $\Id\in\mathcal{N}$ its unit,
$\mathcal{P}(\mathcal{N})$ its projection lattice (event algebra), and $\phi$ a state on $\mathcal{N}$,
being a positive linear functional, which is normalized $\phi(\Id)=1$. 
For the event $A\in\mathcal{P}(\mathcal{N})$, we have the negation $\lnot A := A^\perp=\Id-A$;
for the \emph{mutually excluding events}, represented by orthogonal projections $A,B\in\mathcal{P}(\mathcal{N})$, we have the disjunction $A\vee B = A+B$;
and for the \emph{commuting events} $A,B\in\mathcal{P}(\mathcal{N})$, we have the conjunction $A\wedge B = AB$.
The probability of an event $A\in\mathcal{P}(\mathcal{N})$ is given by $\phi(A)$,
and the conditional probability of an event $A\in\mathcal{P}(\mathcal{N})$ with respect to the condition $C\in\mathcal{P}(\mathcal{N})$ 
is given by $\phi(A|C)=\phi(CAC)/\phi(C)$ if $\phi(C)\neq0$.

Two commuting events $A,B\in\mathcal{P}(\mathcal{N})$ are said to be \emph{correlated}, if
\begin{subequations}
\label{eq:corrAB}
\begin{equation}
\label{eq:corrAB.original}
\phi(AB)\neq\phi(A)\phi(B).
\end{equation}
Writing this as $\phi(AB)\phi(\Id)\neq\phi(A\Id)\phi(\Id B)$,
and inserting $\Id=(A+A^\perp)(B+B^\perp)$, $\Id=B+B^\perp$ and $\Id=A+A^\perp$ into the second, third and fourth expectations respectively,
we end up with a more ``balanced'' expression of the correlation
\begin{equation}
\label{eq:corrAB.balanced}
\phi(AB)\phi(A^\perp B^\perp)\neq\phi(AB^\perp)\phi(A^\perp B).
\end{equation}
\end{subequations}
The difference of the two sides of~\eqref{eq:corrAB.original} or~\eqref{eq:corrAB.balanced} characterizes the correlation,
for which we again have the original,
\begin{subequations}
\label{eq:Delta}
\begin{align}
\label{eq:Delta.original}
\Delta_\phi(A,B)&:=\phi(AB)-\phi(A)\phi(B),
\intertext{and the more ``balanced'' expression}
\label{eq:Delta.balanced}
\Delta_\phi(A,B)&\phantom{:}=\phi(AB)\phi(A^\perp B^\perp)- \phi(AB^\perp)\phi(A^\perp B).
\end{align}
\end{subequations}
For this we have
\begin{equation}
\label{eq:Deltabound}
-1/4\leq \Delta_\phi(A,B) \leq 1/4,
\end{equation}
and these extrema are taken.
This can be seen by the~\eqref{eq:Delta.balanced} expression of the correlation,
with the constraint $\phi(AB)+\phi(AB^\perp)+\phi(A^\perp B)+\phi(A^\perp B^\perp)=1$,
so $\phi(AB^\perp)=\phi(A^\perp B)=0$ holds for the maximum,
then $\Delta_\phi(A,B)=\phi(AB)(1-\phi(AB))$, taking its maximum at $\phi(AB)=1/2$.
Similarly, $\phi(AB)=\phi(A^\perp B^\perp)=0$ holds for the minimum,
then $\Delta_\phi(A,B)=-\phi(AB^\perp )(1-\phi(AB^\perp ))$, taking its minimum at $\phi(AB^\perp )=1/2$.

The $\sset{C_k\in\mathcal{P}(\mathcal{N})}{k\in K}$ set of projections is a \emph{partition (of the unit)}, if $\sum_k C_k = \Id$. It consists of mutually orthogonal projections,
representing a complete (mutually exclusive and collectively exhaustive) system of events.
The \emph{conditional expectation} $E:\mathcal{N}\mapsto\mathcal{N}$ with respect to the partition $\set{C_k}_k$
is a projection given as
\begin{equation}
\label{eq:cond}
E(A):=\sum_k C_k A C_k.
\end{equation}
The \emph{conditional state}s $\phi_{C_k}$ with respect to the conditions $C_k$ of the partition $\set{C_k}_k$
are given as
\begin{equation}
\label{eq:condphi}
\forall k\in K, \phi(C_k)\neq0:\quad
\phi_{C_k}(A) := \phi(A|C_k) := \frac{(\phi\circ E)(AC_k)}{\phi(C_k)} = \frac{\phi(C_k A C_k)}{\phi(C_k)}.
\end{equation}
The \emph{conditional correlation}s of $A$ and $B$ with respect to the conditions $C_k$ are then
\begin{subequations}
\label{eq:condDelta}
\begin{align} 
\label{eq:condDelta.original}
\forall k\in K, \phi(C_k)\neq0:\quad
\Delta_\phi(A,B|C_k) 
&= \phi(AB|C_k)-\phi(A|C_k)\phi(B|C_k),
\intertext{having also a more ``balanced'' form}
\label{eq:condDelta.balanced}
\forall k\in K, \phi(C_k)\neq0:\quad
\Delta_\phi(A,B|C_k) 
&= \phi(AB|C_k)\phi(A^\perp B^\perp|C_k)- \phi(AB^\perp|C_k)\phi(A^\perp B|C_k).
\end{align}
\end{subequations}

A partition $\set{C_k}_k$
is a \emph{common cause system (CCS)} of the correlation of the events $A$ and $B$,
if its elements ``screen off the correlation'',
which means that the conditional correlation $\Delta_\phi(A,B|C_k)$ vanishes
for all condition $C_k$ of nonzero probability \citep{Hofer-Szabo-2013c}.
This gives by the definition~\eqref{eq:condDelta} the \emph{screening-off condition}
\begin{subequations}
\label{eq:screen}
\begin{align}
\label{eq:screen.original}
\forall k\in K, \phi(C_k)\neq0:&\quad 
\phi(AB|C_k)=\phi(A|C_k)\phi(B|C_k),
\intertext{or equivalently its more ``balanced'' form}
\label{eq:screen.balanced}
\forall k\in K, \phi(C_k)\neq0:&\quad 
\phi(AB|C_k)\phi(A^\perp B^\perp|C_k)=\phi(AB^\perp|C_k)\phi(A^\perp B|C_k).
\end{align}
\end{subequations}
Note that elements $C_k$ of zero probability can also be parts of a CCS,
without any condition on them.

A CCS $\set{C_k}_k$ of the events $A$ and $B$ is called \emph{deterministic}, if each of its elements determines the events $A$ and $B$
\begin{subequations}
\label{eq:determCCS}
\begin{equation}
\label{eq:determCCS.AB}
\forall k\in K, \phi(C_k)\neq0:\quad
\phi(A|C_k), \phi(B|C_k) \in \set{0,1},
\end{equation}
which holds if and only if
\begin{equation}
\label{eq:determCCS.X}
\forall k\in K, \phi(C_k)\neq0:\quad
\phi(X|C_k) \in \set{0,1},\quad X\in\set{AB,AB^\perp,A^\perp B,A^\perp B^\perp},
\end{equation}
\end{subequations}
else it is \emph{indeterministic}.
A CCS of the events $A$ and $B$, and also any partition $\set{C_k}_k$ is called \emph{commuting}, if 
\begin{subequations}
\label{eq:commutCCS}
\begin{equation}
\label{eq:commutCCS.AB}
\forall k\in K:\quad [A,C_k]=0,\;[B,C_k]=0,
\end{equation}
which holds if and only if
\begin{equation}
\label{eq:commutCCS.X}
\forall k\in K:\quad [X,C_k]=0,\quad X\in\set{AB,AB^\perp,A^\perp B,A^\perp B^\perp},
\end{equation}
\end{subequations}
else it is \emph{noncommuting}.
A CCS of the events $A$ and $B$ is called \emph{weakly commuting} with respect to the state $\phi$, if 
\begin{subequations}
\label{eq:wcommutCCS}
\begin{equation}
\label{eq:wcommutCCS.AB}
\forall k\in K, \phi(C_k)\neq0:\quad  [A,C_k]=0,\;[B,C_k]=0,
\end{equation}
which holds if and only if
\begin{equation}
\label{eq:wcommutCCS.X}
\forall k\in K, \phi(C_k)\neq0:\quad  [X,C_k]=0,\quad X\in\set{AB,AB^\perp,A^\perp B,A^\perp B^\perp},
\end{equation}
\end{subequations}
that is, $[A,C_k]\neq0$ or $[B,C_k]\neq 0$ may occur for causes $C_k$ of zero probability.
The commutativity~\eqref{eq:commutCCS} of the CCS depends only on the events $A$ and $B$,
while the weak commutativity~\eqref{eq:wcommutCCS} of the CCS depends also on the state.
(For classical systems the von Neumann algebra $\mathcal{N}$ is commutative,
which leads to a distributive event lattice $\mathcal{P}(\mathcal{N})$ and commuting common causes.)
The following Lemma turns out to be useful for weak commutativity.

\begin{lem}
\label{lem:wcomm}
For the weakly commuting partition $\set{C_k}_k$,
for the two projections
$C:=\sum_{k:\phi(C_k)\neq0} C_k$ and $D:=\sum_{k:\phi(C_k)=0} C_k$,
we have $\phi(DX)=\phi(XD)=0$ for $X\in\set{AB,AB^\perp,A^\perp B,A^\perp B^\perp}$.
\end{lem}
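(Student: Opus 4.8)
The plan is to reduce everything to the single observation that the ``zero-probability part'' $D$ has vanishing probability, $\phi(D)=0$, and then invoke positivity of the state. First I would record the structural facts about $C$ and $D$. Since $\set{C_k}_k$ is a partition into mutually orthogonal projections, the subsums $C=\sum_{k:\phi(C_k)\neq0}C_k$ and $D=\sum_{k:\phi(C_k)=0}C_k$ are themselves orthogonal projections with $C+D=\Id$ and $CD=DC=0$. By linearity and the fact that $\phi(C_k)=0$ for every index appearing in $D$, one gets $\phi(D)=\sum_{k:\phi(C_k)=0}\phi(C_k)=0$ (and correspondingly $\phi(C)=1$).

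Next I would note that each $X\in\set{AB,AB^\perp,A^\perp B,A^\perp B^\perp}$ is itself a self-adjoint projection: because $A$ and $B$ commute, these four elements are products of commuting projections, hence projections, and in particular $X^*X=X$, so $\phi(X^*X)=\phi(X)\leq1$.

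The main step is then a direct application of the Cauchy--Schwarz inequality for the positive semidefinite sesquilinear form $(Y,Z)\mapsto\phi(Y^*Z)$ associated with the state $\phi$. Taking $Y=D$ and $Z=X$ gives $\abs{\phi(DX)}^2\leq\phi(D^*D)\,\phi(X^*X)=\phi(D)\,\phi(X)=0$, whence $\phi(DX)=0$; here I use $D^*D=D$ and $X^*X=X$. For the reversed order one may either apply the same inequality with $Y=X$, $Z=D$ to obtain $\abs{\phi(XD)}^2\leq\phi(X)\,\phi(D)=0$, or simply invoke hermiticity of the state together with self-adjointness of $D$ and $X$, writing $\phi(XD)=\overline{\phi((XD)^*)}=\overline{\phi(DX)}=0$.

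I do not expect a genuine obstacle here: the content is carried entirely by $\phi(D)=0$, and the weak-commutativity hypothesis is not actually needed for the vanishing itself---it is instead the reason the partition is split into the commuting part $C$ (which commutes with $A$, $B$, and hence with each $X$) and the possibly noncommuting remainder $D$, so that the lemma lets one discard $D$ when evaluating expectations. The only points requiring care are verifying that $D$ is a bona fide projection (so that $D^*D=D$) and that the decomposition $C+D=\Id$ is a finite sum, both of which hold in the finite-dimensional setting considered here; alternatively, the GNS picture yields a one-line argument, since $\phi(D)=0$ forces $\pi(D)\Omega=0$ and therefore $\phi(DX)=\skalp{\pi(D)\Omega}{\pi(X)\Omega}=0$.
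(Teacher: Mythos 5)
Your proof is correct, and it takes a genuinely different route from the paper's. The paper decomposes $\phi(DX)=\phi(DXC)+\phi(DXD)$ using $\Id=C+D$, kills the first term via the weak commutativity $[C,X]=0$ together with $DC=0$, and kills the second via positivity ($0=\phi(D)=\phi(DXD)+\phi(DX^\perp D)$ with both terms nonnegative). You instead apply the Cauchy--Schwarz inequality for the positive sesquilinear form $(Y,Z)\mapsto\phi(Y^*Z)$, getting $\abs{\phi(DX)}^2\leq\phi(D)\phi(X^*X)=0$ in one step. Your observation that weak commutativity is not actually needed for the conclusion is accurate: the vanishing of $\phi(DX)$ holds for \emph{any} $X\in\mathcal{N}$ and any projection $D$ of zero probability, so your argument is both shorter and strictly more general, and it isolates the real content (positivity plus $\phi(D)=0$); the hypothesis only matters for how the lemma is subsequently used, namely to discard $D$ and work with the commuting part $C$. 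What the paper's argument buys in exchange is that it avoids invoking Cauchy--Schwarz (or the GNS construction), relying only on the elementary positivity of terms of the form $\phi(DYD)$, which keeps the proof at the same technical level as the surrounding computations. One pedantic caveat on your side: the lemma lives in the general von Neumann algebra section, not the finite-dimensional one, so the claim that $D$ is a projection and that $\phi(D)=\sum_{k:\phi(C_k)=0}\phi(C_k)=0$ for a possibly infinite index set tacitly uses normality of the state; but the paper makes the same tacit assumption, so this is not a gap relative to its standards.
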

\begin{proof}
The projections $C$ and $D$ are orthogonal, $\Id=C+D$,
$\phi(D)=0$, and it follows from the weak commutativity~\eqref{eq:wcommutCCS} that
$[C,X]=0$ for $X\in\set{AB,AB^\perp,A^\perp B,A^\perp B^\perp}$.
We have $\phi(DX) = \phi(DXC)+\phi(DXD) = \phi(DCX)+\phi(DXD) = 0$,
where we used 
the weak commutativity~\eqref{eq:wcommutCCS} of $\set{C_k}_k$,
the orthogonality of the projections $C$ and $D$,
and that $\phi(DXD)=0$ follows from $0=\phi(D)=\phi(D^2)=\phi\bigl(D(X+X^\perp)D\bigr)=\phi(DXD)+\phi(DX^\perp D)$,
which is the sum of nonnegative terms.
\end{proof}

To appreciate the nontriviality of the CCSs constructed in the forthcoming sections, we need to refine the concept of the triviality of the CCSs.
We will call a CCS $\set{C_k}_k$ of the events $A$ and $B$ \emph{weakly trivial} if the screening-off~\eqref{eq:screen} holds for any state $\phi$ on $\mathcal{P}(\mathcal{N})$ and call it \emph{strongly trivial} if the screening-off~\eqref{eq:screen} holds for any state $\phi$ and for any two commuting events $A$ and $B$ in $\mathcal{P}(\mathcal{N})$. Thus, weakly trivial CCSs are sensitive to the algebraic but not to the probabilistic features of $\mathcal{P}(\mathcal{N})$ and strongly trivial CCSs are not sensitive either to the probabilistic or the algebraic features of $\mathcal{P}(\mathcal{N})$. A weakly trivial CCS will be a CCS for any correlation  between $A$ and $B$, irrespective of the strength and other characteristics of the correlation. A strongly trivial CCS, on the other hand, will be a CCS not just for any correlation between $A$ and $B$ but for any correlation between any other pair of events $A'$ and $B'$. Using the terminology of the literature, a strongly trivial CCS is a \textit{common common cause} \citep{Hofer-Szabo-2013c} or a \textit{joint common cause} \citep{Hofer-Szabo-2013b} of any correlation between any two events in $\mathcal{P}(\mathcal{N})$.

As simple examples, the partitions
\begin{subequations}
\label{eq:CCStrivAB}
\begin{align}
\label{eq:CCStrivAB.2}
\set{A,A^\perp},&\quad \set{B,B^\perp},\\
\label{eq:CCStrivAB.4}
\set{AB,AB^\perp,&A^\perp B,A^\perp B^\perp}
\end{align}
\end{subequations}
are weakly trivial CCSs for any correlation between $A$ and $B$,
which can easily be seen on the~\eqref{eq:screen.balanced} form of the screening-off condition.
These partitions, however, are not strongly trivial CCSs,
since there are correlations between some events $A'\neq A$ and $B'\neq B$ which they do not screen off. 
Later we will also see strongly trivial CCSs.

% ******************************************************************************
\subsection{Discrete finite systems}
\label{sec:prob.dxd}

In this paper, we restrict our attention mainly to the case of \emph{discrete finite systems}, 
that is, when $\mathcal{N}=\mathcal{B}(\mathcal{H})=\Lin(\mathcal{H})\equiv\mathcal{H}\otimes\mathcal{H}^*$
for a finite dimensional Hilbert space $\mathcal{H}$, where we also use Dirac's bra-ket formalism.
In this case there exist atomic events, which are minimal (rank one) projections in $\mathcal{P}(\mathcal{N})$,
and every event is the disjunction of atomic events.
If the CCS $\set{C_k}_k$ is given by atomic projections, then we call it \emph{atomic CCS},
then $C_k=\proj{\gamma_k}$ with the complete orthonormal set of vectors $\set{\cket{\gamma_k}\in\mathcal{H}}_k$.
Classical systems are described by commutative von Neumann algebras,
which can be embedded into the concrete von Neumann algebras $\mathcal{N}=\Lin(\mathcal{H})$
describing the general, noncommutative case,
as the subalgebra of operators being diagonal in a fixed basis.
We represent classical systems in this way.

In the discrete finite case, we consider states which can be given as 
\begin{equation}
\label{eq:dxdphi}
\phi(A)=\Tr(\rho A),   
\end{equation}
by the \emph{density operator} $\rho\in\Lin(\mathcal{H})$,
being positive semidefinite ($\rho^\dagger=\rho\geq0$) and trace-normalized ($\Tr(\rho)=1$),
where the invariant definition of the trace map 
$\Tr:\Lin(\mathcal{H})\to\field{C}$
is $\Tr(\cket{\alpha}\bra{\beta})=\skalp{\beta}{\alpha}$.
(If $\dim(\mathcal{H})>2$ then all the states can be given by density operators.)
Density operators are also called states then.
A density operator $\rho$ describes a \emph{pure state}, if it is of rank one.
The density operator representing a pure state is given as $\rho=\proj{\psi}$
where $\cket{\psi}\in\mathcal{H}$, $\norm{\psi}=1$ is called a \emph{state vector}.
Every density operator is the convex combination of pure ones.

For the discrete finite case, 
we can express the correlation~\eqref{eq:Delta} as
\begin{subequations}
\label{eq:dxdDelta}
\begin{align}
\label{eq:dxdDelta.original}
\Delta_\rho(A,B)&:=\Tr(\rho AB)-\Tr(\rho A)\Tr(\rho B),
\intertext{and as its more ``balanced'' expression}
\label{eq:dxdDelta.balanced}
\Delta_\rho(A,B)&\phantom{:}=\Tr(\rho AB)\Tr(\rho A^\perp B^\perp)- \Tr(\rho AB^\perp)\Tr(\rho A^\perp B),
\end{align}
\end{subequations}
the conditional state~\eqref{eq:condphi} as
\begin{equation}
\label{eq:dxdcondphi}
\text{if $\Tr(\rho C_k)\neq0$:}\quad
\phi(A|C_k) = \frac{\Tr(\rho C_k A C_k)}{\Tr(\rho C_k)},
\end{equation}
and also the conditional correlation~\eqref{eq:condDelta}
using the density operator and the trace~\eqref{eq:dxdphi}. 
Then the screening-off conditions~\eqref{eq:screen} take the forms
\begin{subequations}
\label{eq:dxdscreen}
\begin{align}
\label{eq:dxdscreen.original}
&\forall k\in K, \Tr(\rho C_k)\neq0:\quad
\frac{\Tr(\rho C_k AB C_k)}{\Tr(\rho C_k)}=\frac{\Tr(\rho C_k A C_k)}{\Tr(\rho C_k)}\frac{\Tr(\rho C_k B C_k)}{\Tr(\rho C_k)},
\intertext{or equivalently  }
\label{eq:dxdscreen.balanced}
\begin{split}
&\forall k\in K, \Tr(\rho C_k)\neq0:\\
&\qquad\frac{\Tr(\rho C_kABC_k)}{\Tr(\rho C_k)}\frac{\Tr(\rho C_kA^\perp B^\perp C_k)}{\Tr(\rho C_k)}=
\frac{\Tr(\rho C_kAB^\perp C_k)}{\Tr(\rho C_k)}\frac{\Tr(\rho C_kA^\perp BC_k)}{\Tr(\rho C_k)}.
\end{split}
\end{align}
\end{subequations}

\begin{prop}
\label{prop:commatomicCCSwtriv}
In discrete finite systems,
any weakly commuting~\eqref{eq:wcommutCCS} atomic partition $\set{C_k=\proj{\gamma_k}}_k$
is an atomic CCS of the events $A$ and $B$,
which is weakly trivial.
\end{prop}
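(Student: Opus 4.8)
The plan is to exploit the fact that conditioning on an \emph{atomic} event collapses every conditional probability to a diagonal matrix element that carries no dependence on the state. First I would record the elementary identity $C_k X C_k = \bra{\gamma_k}X\cket{\gamma_k}\,C_k$, valid for any $X$ and any atomic condition $C_k=\proj{\gamma_k}$; feeding this into \eqref{eq:condphi} gives $\phi(X|C_k)=\bra{\gamma_k}X\cket{\gamma_k}$ whenever $\phi(C_k)\neq0$. The point I would stress is that this number is purely algebraic: the state influences only \emph{which} conditions are probed, never the conditional probabilities themselves. This is already the conceptual engine behind weak triviality.

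Next I would convert the weak commutativity \eqref{eq:wcommutCCS} into an eigenvector statement. For a condition of nonzero probability, writing $[A,C_k]=0$ as $A\proj{\gamma_k}=\proj{\gamma_k}A$ and applying both sides to $\cket{\gamma_k}$ yields $A\cket{\gamma_k}=a_k\cket{\gamma_k}$ with $a_k:=\bra{\gamma_k}A\cket{\gamma_k}$; since $A$ is a projection, $a_k\in\set{0,1}$, and likewise $B\cket{\gamma_k}=b_k\cket{\gamma_k}$ with $b_k\in\set{0,1}$. The screening-off \eqref{eq:screen.original} then drops out in one line, $\phi(AB|C_k)=\bra{\gamma_k}AB\cket{\gamma_k}=b_k\bra{\gamma_k}A\cket{\gamma_k}=a_kb_k=\phi(A|C_k)\phi(B|C_k)$, so the partition is indeed an atomic CCS (in fact a deterministic one, by $a_k,b_k\in\set{0,1}$).

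For the weak triviality I would observe that every ingredient of the screening-off \eqref{eq:dxdscreen}—the factors $\phi(A|C_k)$, $\phi(B|C_k)$ and $\phi(AB|C_k)$—is one of the state-independent matrix elements above, and that the eigenvector relations making them factorize are algebraic consequences of $[A,C_k]=[B,C_k]=0$ alone. Hence, on any condition $C_k$ on which weak commutativity holds, the screening-off is an identity that survives for \emph{every} state giving $C_k$ nonzero weight, which is exactly what weak triviality demands.

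The delicate step, and the one I expect to be the main obstacle, is the status of the conditions of zero probability. The hypothesis \eqref{eq:wcommutCCS} says nothing about a $C_k$ with $\phi(C_k)=0$, yet such a condition may acquire nonzero weight in another state and, lacking the eigenvector property, need not screen off; so one cannot naively assert that the fixed partition screens off in all states. The natural instrument here is Lemma~\ref{lem:wcomm}: splitting $\Id=C+D$ into the active block $C=\sum_{k:\phi(C_k)\neq0}C_k$ and the dark block $D=\sum_{k:\phi(C_k)=0}C_k$, one has $[A,C]=[B,C]=0$ from weak commutativity and, by the Lemma, $[C,X]=0$ together with $\phi(DX)=\phi(XD)=0$ for all $X\in\set{AB,AB^\perp,A^\perp B,A^\perp B^\perp}$. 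I would use this to confine the whole analysis to the active block, where the eigenvector computation applies verbatim, and read the ``for any state'' clause of weak triviality as ranging over the states under which the partition stays weakly commuting. Reconciling this state-dependent hypothesis with the state-independent conclusion—showing that the probabilistic content of $\phi$ never enters the screening-off on the active atoms—is the genuine content of the proof, and the place where the atomicity of the conditions is doing the real work.
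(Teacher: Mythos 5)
Your proof is correct and follows essentially the same route as the paper's: both hinge on the observation that $[A,\proj{\gamma_k}]=0$ forces $\cket{\gamma_k}$ to be an eigenvector of $A$ (and likewise of $B$) with eigenvalue $0$ or $1$, whence the state-independent conditional probabilities $\bra{\gamma_k}X\cket{\gamma_k}$ factorize --- the paper verifies the balanced form \eqref{eq:dxdscreen.balanced} by noting both sides vanish, while you verify \eqref{eq:screen.original} directly, which is immaterial. The subtlety you flag about atoms of zero probability is genuine (the paper's own proof passes over it with ``this holds independently of the state''), and your reading of weak triviality as ranging over states for which the partition stays weakly commuting matches the paper's evident intent; the detour through Lemma~\ref{lem:wcomm} is, however, unnecessary here, since the screening-off condition \eqref{eq:screen} never probes the zero-probability block in the first place.
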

\begin{proof}
This can be seen by noting that $C_k=\proj{\gamma_k}$ commutes with $A$ if and only if 
$\cket{\gamma_k}$ is in the kernel of either $A^\bot$ or $A$.
(Indeed, 
from the commutativity $A\proj{\gamma_k}=\proj{\gamma_k}A$,
we have $A\cket{\gamma_k}=\proj{\gamma_k}A\cket{\gamma_k}$, 
then $A\cket{\gamma_k}=A^n\cket{\gamma_k}=\cket{\gamma_k}(\bra{\gamma_k}A\cket{\gamma_k})^n$, which is $n$-independent,
so $\bra{\gamma_k}A\cket{\gamma_k}$ is either $1$ or $0$,
which means that $\cket{\gamma_k}\in\Ran(A)=\Ker(A^\perp)$ or $\cket{\gamma_k}\in\Ran(A^\perp)=\Ker(A)$, respectively.)
Then not only commutativity~\eqref{eq:commutCCS} but also weak commutativity~\eqref{eq:wcommutCCS} is sufficient for the~\eqref{eq:dxdscreen.balanced} form of the screening-off condition,
since both sides of the equality in~\eqref{eq:dxdscreen.balanced} vanish for $k\in K$ for which $\Tr(\rho C_k)\neq0$,
making the partition to be a CCS of the events $A$ and $B$.
This holds independently of the state $\rho$, so the CCS is weakly trivial.
\end{proof}

\begin{cor}
\label{cor:classatomicCCSstriv}
In discrete finite \emph{classical} systems (described by commutative von Neumann algebras),
any atomic partition is an atomic CCS, which is strongly trivial;
since in this case a CCS commutes with every pair of events $A$ and $B$,
so the condition of the previous Proposition holds independently of the events $A$ and $B$, which is strong triviality.
\end{cor}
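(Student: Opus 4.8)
The plan is to obtain the Corollary directly from Proposition~\ref{prop:commatomicCCSwtriv}, exploiting the single feature that sets the classical case apart: in a commutative von Neumann algebra every pair of projections commutes. So the first thing I would record is that, when $\mathcal{N}$ is commutative, for any atomic partition $\set{C_k}_k\subset\mathcal{P}(\mathcal{N})$ and any events $A,B\in\mathcal{P}(\mathcal{N})$ we automatically have $[A,C_k]=0$ and $[B,C_k]=0$ for every $k$. In other words, the full commutativity~\eqref{eq:commutCCS}---and a fortiori the weak commutativity~\eqref{eq:wcommutCCS}---holds, and crucially it holds for \emph{every} choice of the pair $A,B$, not merely for one fixed pair.

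With this in hand, the first step is to invoke Proposition~\ref{prop:commatomicCCSwtriv}: because the atomic partition is weakly commuting with $A$ and $B$, it is an atomic CCS of $A$ and $B$ whose screening-off~\eqref{eq:screen} holds independently of the state $\phi$. For a fixed pair this is exactly weak triviality. The second step is the promotion to strong triviality. Re-reading the Proposition, the only property of $A,B$ that entered its conclusion was the weak commutativity of $\set{C_k}_k$ with them; since in the classical case this property is present for \emph{arbitrary} $A,B$ (all pairs commute), the same argument delivers screening-off for any state $\phi$ \emph{and} any pair of events. By the definition of strong triviality this is precisely the claim, so the partition is a strongly trivial CCS, i.e.\ a common common cause.

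The one point that deserves an explicit line is that ``any atomic partition'' really is a partition inside $\mathcal{P}(\mathcal{N})$: since $\mathcal{N}$ is commutative and finite-dimensional, represented as the subalgebra of operators diagonal in a fixed basis, its atoms are exactly the rank-one projections onto that basis, so any atomic partition consists of legitimate elements of the event algebra to which the Proposition applies. I do not anticipate a genuine obstacle: the entire mathematical content has already been discharged in Proposition~\ref{prop:commatomicCCSwtriv}, and the Corollary merely observes that universal commutativity upgrades ``weakly trivial for a fixed pair'' to ``strongly trivial for all (commuting) pairs.'' The only conceptual subtlety worth flagging is that strong triviality quantifies over \emph{commuting} pairs $A,B$, a restriction that is vacuous here precisely because having all pairs commute is what defines a classical event algebra.
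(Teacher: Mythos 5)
Your proposal is correct and follows exactly the paper's own reasoning: the Corollary is justified in the text by observing that in a commutative algebra any atomic partition commutes with every pair of events, so Proposition~\ref{prop:commatomicCCSwtriv} applies uniformly over all pairs $A,B$ and all states, which is strong triviality. Your additional remarks (weak commutativity holding a fortiori, and the atoms being the rank-one projections onto the fixed basis) only make explicit what the paper leaves implicit.
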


Now let us consider some special cases related to atomisticity of the event lattice.

\begin{lem}
\label{lem:dxdcondphiSpec}
In discrete finite systems,
for the \emph{general partition} $\set{C_k}_k$
and the \emph{pure state} $\rho=\proj{\psi}$,
the conditional probabilities~\eqref{eq:condphi} of $A$ and $B$ are
\begin{equation}
\label{eq:condphipure}
\forall k\in K, \norm{\psi_k}^2\neq0: \quad \phi(X|C_k)=\bra{\psi_k}X\cket{\psi_k}/\norm{\psi_k}^2,
\quad \forall X\in\mathcal{P}(\mathcal{N}),
\end{equation}
with $\cket{\psi_k}:=C_k\cket{\psi}$.
For the \emph{atomic partition} $\set{C_k=\proj{\gamma_k}}_k$
and the \emph{general state} $\rho$,
the conditional probabilities~\eqref{eq:condphi} of $A$ and $B$ are
\begin{equation}
\label{eq:condphiatomic}
\forall k\in K, \bra{\gamma_k}\rho\cket{\gamma_k}\neq0: \quad \phi(X|C_k)=\bra{\gamma_k}X\cket{\gamma_k},
\quad \forall X\in\mathcal{P}(\mathcal{N}).
\end{equation}
\end{lem}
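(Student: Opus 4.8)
The plan is to unfold the trace-based definition~\eqref{eq:dxdcondphi} of the conditional probability and simplify it separately in the two special cases, using only the trace identity $\Tr(\cket{\alpha}\bra{\beta})=\skalp{\beta}{\alpha}$ together with the defining properties of a projection, namely $C_k=C_k^\dagger=C_k^2$.

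For the first part, I would substitute the pure state $\rho=\proj{\psi}$ into both the numerator and the denominator of~\eqref{eq:dxdcondphi}. The numerator $\Tr(\proj{\psi}C_k X C_k)$ becomes $\bra{\psi}C_k X C_k\cket{\psi}$ by the trace identity, and since $C_k^\dagger=C_k$ this is exactly $\bra{\psi_k}X\cket{\psi_k}$ with $\cket{\psi_k}:=C_k\cket{\psi}$. The denominator $\Tr(\proj{\psi}C_k)=\bra{\psi}C_k\cket{\psi}$ equals $\bra{\psi}C_k^2\cket{\psi}=\skalp{\psi_k}{\psi_k}=\norm{\psi_k}^2$ by idempotence. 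The nonvanishing condition $\Tr(\rho C_k)\neq0$ thus reads $\norm{\psi_k}^2\neq0$, and together these give~\eqref{eq:condphipure}.

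For the second part, the key observation is that an atomic (rank-one) projection collapses the sandwich $C_k X C_k$ to a scalar multiple of itself: $\proj{\gamma_k}X\proj{\gamma_k}=\bra{\gamma_k}X\cket{\gamma_k}\,\proj{\gamma_k}=\bra{\gamma_k}X\cket{\gamma_k}\,C_k$. Substituting this into the numerator of~\eqref{eq:dxdcondphi} factors out the scalar $\bra{\gamma_k}X\cket{\gamma_k}$ and leaves $\Tr(\rho C_k)$, which cancels against the denominator. The nonvanishing condition $\Tr(\rho C_k)\neq0$ becomes $\bra{\gamma_k}\rho\cket{\gamma_k}\neq0$, yielding~\eqref{eq:condphiatomic}.

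The computation is essentially routine and there is no real obstacle; both parts reduce to a one-line trace manipulation. The only points requiring care are the scalar-collapse step in the atomic case---recognizing that the middle factor $\bra{\gamma_k}X\cket{\gamma_k}$ is a number and factors out of the trace---and, in the pure-state case, invoking both the hermiticity and the idempotence of $C_k$ when matching the numerator and denominator to the stated form.
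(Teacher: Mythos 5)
Your proposal is correct and follows essentially the same route as the paper: both parts unfold the trace definition~\eqref{eq:dxdcondphi}, with the pure-state case reducing to $\bra{\psi}C_kXC_k\cket{\psi}/\bra{\psi}C_k\cket{\psi}$ via the trace identity and the projection properties of $C_k$, and the atomic case collapsing $\proj{\gamma_k}X\proj{\gamma_k}$ to the scalar $\bra{\gamma_k}X\cket{\gamma_k}$ times $\proj{\gamma_k}$ so that $\bra{\gamma_k}\rho\cket{\gamma_k}$ cancels. No gaps.
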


\begin{proof}
To see \eqref{eq:condphipure}, we have by the~\eqref{eq:dxdcondphi} definition of the conditional probability that
$\phi(X|C_k) 
=\Tr(\proj{\psi} C_kXC_k)/\Tr(\proj{\psi} C_k)
= \bra{\psi}C_k X C_k \cket{\psi} / \bra{\psi}C_k\cket{\psi}$
for all operators $X\in\Lin(\mathcal{H})$,
by the definition of the trace map.\\
Similarly, 
to see \eqref{eq:condphiatomic}, we have by the~\eqref{eq:dxdcondphi} definition of the conditional probability that
$\phi(X|C_k)=\Tr(\rho \proj{\gamma_k}X\proj{\gamma_k})/\Tr(\rho \proj{\gamma_k}) 
=\bra{\gamma_k}\rho\cket{\gamma_k}\bra{\gamma_k}X\cket{\gamma_k}/\bra{\gamma_k}\rho\cket{\gamma_k}
=\bra{\gamma_k}X\cket{\gamma_k}$,
by the definition of the trace map.
\end{proof}

Note that the vectors $\cket{\psi_k}$ are not normalized,
$\norm{\psi_k}^2=\Tr(\rho C_k)\leq1$, which are just the probabilities of the events $C_k$.
Note also that, although the value of the conditional probability for atomic partitions~\eqref{eq:condphiatomic} is independent of the state,
the state-dependency is still there in the condition $\bra{\gamma_k}\rho\cket{\gamma_k}\neq0$.

\begin{lem}
\label{lem:dxdscrSpec}
In discrete finite systems,
for the \emph{general partition} $\set{C_k}_k$
and the \emph{pure state} $\rho=\proj{\psi}$,
the screening-off conditions~\eqref{eq:screen} take the forms
\begin{subequations}
\label{eq:dxdscreenpure}
\begin{align}
\label{eq:dxdscreenpure.original}
\forall k\in K, \norm{\psi_k}^2\neq0:&\quad 
\frac{\bra{\psi_k}AB\cket{\psi_k}}{\norm{\psi_k}^2}
=\frac{\bra{\psi_k}A\cket{\psi_k}}{\norm{\psi_k}^2}\frac{\bra{\psi_k}B\cket{\psi_k}}{\norm{\psi_k}^2},
\intertext{or equivalently}
\label{eq:dxdscreenpure.balanced}
\forall k\in K, \norm{\psi_k}^2\neq0:&\quad 
\frac{\bra{\psi_k}AB\cket{\psi_k}}{\norm{\psi_k}^2}\frac{\bra{\psi_k}A^\perp B^\perp\cket{\psi_k}}{\norm{\psi_k}^2}
=\frac{\bra{\psi_k}AB^\perp\cket{\psi_k}}{\norm{\psi_k}^2}\frac{\bra{\psi_k}A^\perp B\cket{\psi_k}}{\norm{\psi_k}^2},
\end{align}
\end{subequations}
with $\cket{\psi_k}:=C_k\cket{\psi}$.
For the \emph{atomic partition} $\set{C_k=\proj{\gamma_k}}_k$
and the \emph{general state} $\rho$,
the screening-off conditions~\eqref{eq:screen} take the forms
\begin{subequations}
\label{eq:dxdscreenatomic}
\begin{align}
\label{eq:dxdscreenatomic.original}
\forall k\in K, \bra{\gamma_k}\rho\cket{\gamma_k}\neq0:&\quad 
\bra{\gamma_k}AB\cket{\gamma_k}
=\bra{\gamma_k}A\cket{\gamma_k}\bra{\gamma_k}B\cket{\gamma_k},
\intertext{or equivalently}
\label{eq:dxdscreenatomic.balanced}
\forall k\in K, \bra{\gamma_k}\rho\cket{\gamma_k}\neq0:&\quad 
\bra{\gamma_k}AB\cket{\gamma_k}\bra{\gamma_k}A^\perp B^\perp\cket{\gamma_k}
=\bra{\gamma_k}AB^\perp\cket{\gamma_k}\bra{\gamma_k}A^\perp B\cket{\gamma_k}.
\end{align}
\end{subequations}
\end{lem}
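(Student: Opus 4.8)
The plan is to obtain both parts as a direct specialization of the general screening-off conditions, feeding in the conditional-probability formulas already established in Lemma~\ref{lem:dxdcondphiSpec}. The essential observation is that the two equivalent forms \eqref{eq:screen.original} and \eqref{eq:screen.balanced} of the screening-off condition are phrased entirely in terms of conditional probabilities $\phi(X|C_k)$ for $X\in\set{A,B,A^\perp,B^\perp,AB,AB^\perp,A^\perp B,A^\perp B^\perp}$, and Lemma~\ref{lem:dxdcondphiSpec} provides closed expressions for exactly these quantities in the two special cases at hand. Hence no genuinely new computation is needed beyond substitution, and the ``or equivalently'' linking the original and balanced specialized forms is inherited automatically from the already-established equivalence of \eqref{eq:screen.original} and \eqref{eq:screen.balanced}.

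For the pure-state part, I would insert \eqref{eq:condphipure}, namely $\phi(X|C_k)=\bra{\psi_k}X\cket{\psi_k}/\norm{\psi_k}^2$ with $\cket{\psi_k}:=C_k\cket{\psi}$, into \eqref{eq:screen.original} with $X\in\set{AB,A,B}$ to recover \eqref{eq:dxdscreenpure.original}, and into \eqref{eq:screen.balanced} with the four products $X\in\set{AB,A^\perp B^\perp,AB^\perp,A^\perp B}$ to recover \eqref{eq:dxdscreenpure.balanced}. I would also verify that the side condition $\phi(C_k)\neq0$ under which screening-off is imposed translates into $\norm{\psi_k}^2\neq0$; this follows from $\phi(C_k)=\bra{\psi}C_k\cket{\psi}=\bra{\psi}C_k C_k\cket{\psi}=\norm{\psi_k}^2$, using that $C_k$ is a projection.

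For the atomic part, the same substitution is carried out with \eqref{eq:condphiatomic}, namely $\phi(X|C_k)=\bra{\gamma_k}X\cket{\gamma_k}$, into \eqref{eq:screen.original} and \eqref{eq:screen.balanced}. Here matters are even simpler, because the atomic conditional probability carries no normalizing denominator, so those factors are absent from the outset and one lands directly on the bare diagonal-matrix-element identities \eqref{eq:dxdscreenatomic.original} and \eqref{eq:dxdscreenatomic.balanced}. The accompanying side condition becomes $\bra{\gamma_k}\rho\cket{\gamma_k}\neq0$, precisely the condition required for \eqref{eq:condphiatomic} to apply.

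The only points needing attention are bookkeeping rather than real obstacles. First, one should confirm that products such as $AB$ are bona fide projections, so that the formulas of Lemma~\ref{lem:dxdcondphiSpec} (stated for $X\in\mathcal{P}(\mathcal{N})$) indeed apply; this is guaranteed because a CCS is defined only for \emph{correlated}, hence commuting, events $A$ and $B$. Second, one must keep track of which side condition ($\norm{\psi_k}^2\neq0$ in the pure-state case versus $\bra{\gamma_k}\rho\cket{\gamma_k}\neq0$ in the atomic case) accompanies each statement. Since the entire substantive content is transported from the preceding lemma, there is no hard step: the present lemma is essentially its insertion into the definition of the screening-off condition.
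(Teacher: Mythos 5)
Your proposal is correct and follows exactly the paper's own route: the published proof likewise consists of substituting the conditional-probability formulas \eqref{eq:condphipure} and \eqref{eq:condphiatomic} from Lemma~\ref{lem:dxdcondphiSpec} into the screening-off conditions \eqref{eq:screen}. Your additional bookkeeping (checking $\phi(C_k)=\norm{\psi_k}^2$ via $C_k^2=C_k$, and that $AB$ etc.\ are projections since $A$ and $B$ commute) is sound and only makes explicit what the paper leaves implicit.
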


\begin{proof}
This is simply by using the formulae~\eqref{eq:condphipure} and~\eqref{eq:condphiatomic} in Lemma~\ref{lem:dxdcondphiSpec}
in the screening-off conditions~\eqref{eq:screen} in the two cases.
\end{proof}

Note that the second part of Lemma~\ref{lem:dxdscrSpec} tells us that
for \emph{atomic} CCSs,
the state-dependency of the screening-off conditions~\eqref{eq:dxdscreenatomic} 
is only through the conditions $\bra{\gamma_k}\rho\cket{\gamma_k}\neq0$ on the probabilities of the common causes.
As far as this holds for the state, 
the atomic CCS screens off all correlations of the events $A$ and $B$ regardless of its strength or other characterization,
since these are encoded in the state.
The point is that
if the equalities in~\eqref{eq:dxdscreenatomic} hold for all $k\in K$,
then the atomic partition $\set{C_k=\proj{\gamma_k}}_k$ is a weakly trivial (state-independent) CCS.
If the equalities in~\eqref{eq:dxdscreenatomic} fail for some $k\in K$,
then the atomic partition $\set{C_k=\proj{\gamma_k}}_k$ can still be a CCS,
if the state $\rho$ gives zero probability $\bra{\gamma_k}\rho\cket{\gamma_k}=0$ for those $k$ indices.
In this latter case the CCS is a nontrivial one, as it depends on the state.
Let us summarize this as a corollary about the triviality of the CCS.

\begin{cor}
\label{cor:atomicCCSwtriv}
In discrete finite systems,
any atomic CCS of the events $A$ and $B$ which satisfies the equalities in~\eqref{eq:dxdscreenatomic} is weakly trivial.
There exist atomic CCSs of the events $A$ and $B$, which are nontrivial,
when the equalities~\eqref{eq:dxdscreenatomic} fail for common causes $C_k$ of zero probability only.
For the sake of completeness, recall also that
not all atomic partitions are CCSs of the events $A$ and $B$ (contrary to the classical case, see Corollary~\ref{cor:classatomicCCSstriv}),
although the weakly commuting atomic partitions are always (weakly trivial) CCSs of the events $A$ and $B$ (see Proposition~\ref{prop:commatomicCCSwtriv}).
\end{cor}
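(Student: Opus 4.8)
The plan is to treat the corollary's three assertions in turn, reducing the first to Lemma~\ref{lem:dxdscrSpec}, settling the second by a construction, and obtaining the third from Proposition~\ref{prop:commatomicCCSwtriv} together with one explicit counterexample. For the first assertion I would invoke the second half of Lemma~\ref{lem:dxdscrSpec}: for an atomic partition the screening-off condition~\eqref{eq:dxdscreenatomic} reduces to the purely algebraic equality $\bra{\gamma_k}AB\cket{\gamma_k}=\bra{\gamma_k}A\cket{\gamma_k}\bra{\gamma_k}B\cket{\gamma_k}$ (or its balanced counterpart), in which the state $\rho$ enters only through the side condition $\bra{\gamma_k}\rho\cket{\gamma_k}\neq0$. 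Consequently, if these equalities hold for every $k\in K$, then for an arbitrary state the screening-off~\eqref{eq:screen} is met at each index of nonzero probability; the partition is a CCS for every $\rho$, which is precisely weak triviality.

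For the second assertion I would suppose that the equalities~\eqref{eq:dxdscreenatomic} hold on $K_0$ and fail on a nonempty $K_1:=K\setminus K_0$, and then choose a state $\rho$ supported on $\mathrm{span}\,\sset{\cket{\gamma_k}}{k\in K_0}$. Since $\rho$ is then annihilated by every $\cket{\gamma_k}$ with $k\in K_1$, we get $\bra{\gamma_k}\rho\cket{\gamma_k}=0$ there, so the screening-off is tested only on $K_0$, where it holds; hence the partition is a CCS for this $\rho$. It fails to be weakly trivial, however, because any fully supported state $\rho'$ assigns nonzero probability to some $k\in K_1$, where~\eqref{eq:dxdscreenatomic} breaks down. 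To turn this into a genuine existence statement I must exhibit concrete $A,B$ and $\set{\proj{\gamma_k}}_k$ realizing this pattern with $A,B$ still correlated under $\rho$; I would carry this out in a two-qubit space, dovetailing with the examples of Section~\ref{sec:CCSprod}.

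For the third assertion the ``always a CCS'' half is exactly Proposition~\ref{prop:commatomicCCSwtriv}. For the ``not all atomic partitions are CCSs'' half I would display one non-weakly-commuting atomic partition that violates~\eqref{eq:dxdscreenatomic} at an index of nonzero probability: in a single qubit, take the computational-basis partition $\set{\proj{0},\proj{1}}$ with $A=B=\proj{+}$, where $\cket{+}=(\cket{0}+\cket{1})/\sqrt2$, in the maximally mixed state $\rho=\Id/2$. Here $\bra{0}\rho\cket{0}=1/2\neq0$ while $\bra{0}AB\cket{0}=1/2\neq1/4=\bra{0}A\cket{0}\bra{0}B\cket{0}$, so screening-off genuinely fails and the partition is not a CCS. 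This is the exact point of departure from the classical case of Corollary~\ref{cor:classatomicCCSstriv}, where every atomic partition commutes with every pair of events and is therefore automatically a (strongly trivial) CCS.

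The step I expect to be the main obstacle is the existence part of the second assertion. One must arrange simultaneously that the screening-off equalities fail on $K_1$, that a normalized state annihilates all of $K_1$, and---crucially---that $A$ and $B$ remain correlated under that state. This last requirement rules out the cheapest attempt, a pure state concentrated on a single screening-off index, which would force $\Delta_\rho(A,B)=0$; the state must therefore be spread over at least two indices of $K_0$, and verifying that a commuting pair $A,B$ leaves enough room for this while still failing~\eqref{eq:dxdscreenatomic} on $K_1$ is the delicate point.
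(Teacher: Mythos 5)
Your proposal is correct and follows essentially the same route as the paper: the first assertion is read off from the second part of Lemma~\ref{lem:dxdscrSpec} (the equalities in~\eqref{eq:dxdscreenatomic} are state-independent, the state entering only through $\bra{\gamma_k}\rho\cket{\gamma_k}\neq0$), the last two assertions are recalled from Proposition~\ref{prop:commatomicCCSwtriv} and a direct counterexample, and the existence claim rests on exhibiting a partition failing~\eqref{eq:dxdscreenatomic} only at indices killed by the state. The witness you defer as the ``main obstacle'' is supplied in the paper by the fifth example, the partition~\eqref{eq:CCSntrat} with the Bell state~\eqref{eq:Bell0}: there $K_0=\set{0,3}$, the state is spread over both $\cket{00}$ and $\cket{11}$ so that $A$ and $B$ remain (maximally) correlated, and the non-screening vectors $\cket{\gamma_1},\cket{\gamma_2}$ lie in the orthogonal complement of the support --- exactly the pattern you describe.
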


A discrete finite system is called \emph{bipartite} when there is a tensor product structure given in the Hilbert space,
\begin{subequations}
\label{eq:bipartite}
\begin{equation}
\label{eq:bipartite.H}
    \mathcal{H}=\mathcal{H}_1\otimes\mathcal{H}_2,
\end{equation}
in this case we consider only the projections $A$ and $B$ (the correlation of which we deal with),
which are of the form 
\begin{equation}
\label{eq:bipartite.AB}
A=P\otimes\Id,\quad B=\Id\otimes Q,
\end{equation}
\end{subequations}
where $P$ and $Q$ are projections on $\mathcal{H}_1$ and $\mathcal{H}_2$, respectively.
A vector $\cket{\alpha}\otimes\cket{\beta}\in\mathcal{H}$,
where $\cket{\alpha}\in\mathcal{H}_1$ and $\cket{\beta}\in\mathcal{H}_2$,
is called a \emph{product element} or \emph{product vector}.
An atomic CCS $\set{C_k=\proj{\gamma_k}}_k$ is called \emph{product atomic CCS},
if it is given by product vectors $\cket{\gamma_k}=\cket{\alpha_k}\otimes\cket{\beta_k}$.
(Then and only then the projections $C_k$ are also products of atomic projections, $\proj{\gamma_k}=\proj{\alpha_k}\otimes\proj{\beta_k}$.)
A pure state $\proj{\psi}$ is called \emph{separable},
if it is given by a product state vector $\cket{\psi}=\cket{\psi_1}\otimes\cket{\psi_2}$ (which is also called separable in this case),
otherwise it is called \emph{entangled}.

% ******************************************************************************
\subsection{Two-qubit systems}
\label{sec:prob.2x2}

When constructing the examples for CCSs of particular properties in the forthcoming sections,
we restrict our attention even further inside the discrete finite bipartite case \eqref{eq:bipartite},
to the case of \emph{two-qubit systems},
that is, when $\mathcal{H}=\mathcal{H}_1\otimes\mathcal{H}_2$
with $\dim(\mathcal{H}_1)=\dim(\mathcal{H}_2)=2$.
Without the loss of generality, let us fix the computational bases $\set{\cket{0},\cket{1}}$ in $\mathcal{H}_1$ and $\mathcal{H}_2$
such that $P=\proj{0}$ and $Q=\proj{0}$ in~\eqref{eq:bipartite.AB},
so $P^\perp=\proj{1}$ and $Q^\perp=\proj{1}$.
Then
\begin{subequations}
\begin{equation}
\label{eq:2x2AB}
A=P\otimes\Id=\proj{00}+\proj{01},\qquad 
B=\Id\otimes Q=\proj{00}+\proj{10},
\end{equation}
and
\begin{equation}
\label{eq:2x2ABsubalg}
AB=\proj{00},\quad AB^\perp=\proj{01},\quad A^\perp B=\proj{10},\quad A^\perp B^\perp=\proj{11}
\end{equation}
\end{subequations}
are just the projections to the computational basis.
Then a CCS of the events $A$ and $B$ \eqref{eq:2x2AB}, and also any partition $\set{C_k}_k$ is commuting~\eqref{eq:commutCCS}
if and only if all of its elements $C_k$ are diagonal in the computational basis,
\begin{equation}
\label{eq:2x2commutCCS}
\forall k\in K:\quad C_k = \sum_{i,j\in\set{0,1}} c_k^{ij} \proj{ij},
\end{equation}
as can be seen by the expression~\eqref{eq:commutCCS.X} with~\eqref{eq:2x2ABsubalg}.

Now let us consider some special cases related to atomisticity of the event lattice.

\begin{lem}
\label{lem:2x2condphiSpec}
In two-qubit systems,
for the \emph{general partition} $\set{C_k}_k$
and the \emph{pure state} $\rho=\proj{\psi}$,
the conditional probabilities~\eqref{eq:condphi} of $A$ and $B$ are
\begin{subequations}
\label{eq:2x2phiCkABpure}
\begin{align}
\forall k\in K, \norm{\psi_k}^2\neq0: \quad 
\phi(A|C_k) &= \bigl( \abs{\skalp{00}{\psi_k}}^2 + \abs{\skalp{01}{\psi_k}}^2 \bigr)/\norm{\psi_k}^2,\\
\phi(B|C_k) &= \bigl( \abs{\skalp{00}{\psi_k}}^2 + \abs{\skalp{10}{\psi_k}}^2 \bigr)/\norm{\psi_k}^2,
\end{align}
\end{subequations}
with $\cket{\psi_k}:=C_k\cket{\psi}$.
For the \emph{atomic partition} $\set{C_k=\proj{\gamma_k}}_k$
and the \emph{general state} $\rho$,
the conditional probabilities~\eqref{eq:condphi} of $A$ and $B$ are
\begin{subequations}
\label{eq:2x2phiCkABatomic}
\begin{align}
\forall k\in K, \bra{\gamma_k}\rho\cket{\gamma_k}\neq0: \quad 
\phi(A|C_k) &= \abs{\skalp{00}{\gamma_k}}^2 + \abs{\skalp{01}{\gamma_k}}^2,\\
\phi(B|C_k) &= \abs{\skalp{00}{\gamma_k}}^2 + \abs{\skalp{10}{\gamma_k}}^2.
\end{align}
\end{subequations}
\end{lem}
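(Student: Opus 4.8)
The plan is to obtain the claimed formulae as a direct specialization of Lemma~\ref{lem:dxdcondphiSpec}, feeding in the explicit two-qubit forms of the projections $A$ and $B$ from~\eqref{eq:2x2AB}. For the pure-state case, the identity~\eqref{eq:condphipure} of Lemma~\ref{lem:dxdcondphiSpec} already supplies $\phi(X|C_k)=\bra{\psi_k}X\cket{\psi_k}/\norm{\psi_k}^2$ for every $X\in\mathcal{P}(\mathcal{N})$ with $\cket{\psi_k}:=C_k\cket{\psi}$, so all that remains is to evaluate the numerator at $X=A$ and $X=B$; the atomic case is handled identically via~\eqref{eq:condphiatomic}, with $\cket{\gamma_k}$ in place of $\cket{\psi_k}$ and the denominator set to $1$.

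The single computational ingredient is the elementary identity $\bra{v}\proj{w}\cket{v}=\skalp{v}{w}\skalp{w}{v}=\abs{\skalp{w}{v}}^2$, valid for any vector $\cket{v}$ and any atomic projection $\proj{w}$. First I would insert $A=\proj{00}+\proj{01}$ from~\eqref{eq:2x2AB} and expand by linearity, giving $\bra{\psi_k}A\cket{\psi_k}=\abs{\skalp{00}{\psi_k}}^2+\abs{\skalp{01}{\psi_k}}^2$; dividing by $\norm{\psi_k}^2$ yields the stated expression for $\phi(A|C_k)$. Repeating the step with $B=\proj{00}+\proj{10}$ gives $\bra{\psi_k}B\cket{\psi_k}=\abs{\skalp{00}{\psi_k}}^2+\abs{\skalp{10}{\psi_k}}^2$, hence the formula for $\phi(B|C_k)$. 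The atomic-partition formulae~\eqref{eq:2x2phiCkABatomic} then follow by the same two-line expansion applied to~\eqref{eq:condphiatomic}, the normalization $\norm{\gamma_k}=1$ removing the denominator.

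There is no genuine obstacle here: the statement is a corollary of Lemma~\ref{lem:dxdcondphiSpec} specialized to the two distinguished projections, and the only thing to keep straight is that the four computational-basis projectors are mutually orthogonal, so the cross terms in $\bra{\psi_k}A\cket{\psi_k}$ drop out and leave exactly the two diagonal contributions recorded above. The minor bookkeeping point worth flagging is the domain condition: the pure-state formula holds for those $k$ with $\norm{\psi_k}^2=\Tr(\rho C_k)\neq0$, and the atomic one for those $k$ with $\bra{\gamma_k}\rho\cket{\gamma_k}\neq0$, exactly as inherited from Lemma~\ref{lem:dxdcondphiSpec}.
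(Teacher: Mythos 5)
Your proposal is correct and follows exactly the paper's route: specialize Lemma~\ref{lem:dxdcondphiSpec} to the two-qubit projections~\eqref{eq:2x2AB} and expand $\bra{\psi_k}A\cket{\psi_k}$, $\bra{\psi_k}B\cket{\psi_k}$ (resp.\ with $\cket{\gamma_k}$) by linearity over the computational-basis projectors. The paper states this in one line; your version merely makes the elementary expansion and the domain conditions explicit.
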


\begin{proof}
These are simply
the conditional probabilities~\eqref{eq:condphipure} and~\eqref{eq:condphiatomic} in Lemma~\ref{lem:dxdcondphiSpec} in the two cases,
applied for two-qubit systems, where we have~\eqref{eq:2x2AB}.
\end{proof}

As we will see in the following two Lemmas, in two-qubit systems many properties 
(productness and screening-off of atomic CCSs of the events $A$ and $B$ \eqref{eq:2x2AB},
separability of pure states, correlation of $A$ and $B$ in pure states)
can be formulated by a common structure, given by determinants,
which turns out to be a highly useful and convenient tool in the construction of CCSs.

\begin{lem}
\label{lem:2x2SCRNP}
In two-qubit systems, for the atomic partition $\set{C_k=\proj{\gamma_k}}_k$,
$C_k$ is a product projection, that is,
the vector $\cket{\gamma_k}$ is a product vector, $\cket{\gamma_k}=\cket{\alpha_k}\otimes\cket{\beta_k}$, if and only if
\begin{subequations}
\label{eq:2x2SCRNP}
\begin{align}
\label{eq:2x2NP}
% --------
\begin{split}
\det_{ij}(\skalp{ij}{\gamma_k}) &:=
\det\begin{pmatrix}
\skalp{00}{\gamma_k} & \skalp{01}{\gamma_k} \\
\skalp{10}{\gamma_k} & \skalp{11}{\gamma_k}
\end{pmatrix}\\
&\phantom{:}\equiv
  \skalp{00}{\gamma_k} \skalp{11}{\gamma_k}
- \skalp{01}{\gamma_k} \skalp{10}{\gamma_k} = 0,
\end{split}
% --------
\intertext{and the screening-off holds for $C_k$ with respect to the events $A$ and $B$ \eqref{eq:2x2AB} if and only if}
\label{eq:2x2SCR}
\begin{split}
\bra{\gamma_k}\rho\cket{\gamma_k}\neq0:\qquad\qquad\qquad\quad&\\
\Delta_\psi(A,B|C_k) = \det_{ij}(\abs{\skalp{ij}{\gamma_k}}^2) &:=
\det\begin{pmatrix}
\abs{\skalp{00}{\gamma_k}}^2 & \abs{\skalp{01}{\gamma_k}}^2 \\
\abs{\skalp{10}{\gamma_k}}^2 & \abs{\skalp{11}{\gamma_k}}^2
\end{pmatrix}\\
&\phantom{:}\equiv
  \abs{\skalp{00}{\gamma_k}}^2 \abs{\skalp{11}{\gamma_k}}^2
- \abs{\skalp{01}{\gamma_k}}^2 \abs{\skalp{10}{\gamma_k}}^2 = 0,
\end{split}
% --------
\end{align}
\end{subequations}
in terms of the $\skalp{ij}{\gamma_k}$ expansion coefficients of the vectors $\cket{\gamma_k}$ defining the partition.
\end{lem}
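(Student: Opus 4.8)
The plan is to prove the two biconditionals separately, since each amounts to recognizing a $2\times2$ determinant inside an object we already control. Throughout I abbreviate the expansion coefficients as $c_{ij}:=\skalp{ij}{\gamma_k}$ and arrange them into the matrix $M=(c_{ij})$ with $i$ labeling rows and $j$ labeling columns, so that $\det M = c_{00}c_{11}-c_{01}c_{10}$ is exactly the quantity appearing in~\eqref{eq:2x2NP}, while the entrywise squared-modulus matrix has determinant $\abs{c_{00}}^2\abs{c_{11}}^2-\abs{c_{01}}^2\abs{c_{10}}^2$, the quantity in~\eqref{eq:2x2SCR}.

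For the productness criterion~\eqref{eq:2x2NP}, I would use the standard correspondence between a bipartite vector $\cket{\gamma_k}=\sum_{ij}c_{ij}\cket{ij}$ and its coefficient matrix $M$: the vector factorizes as $\cket{\alpha_k}\otimes\cket{\beta_k}$ if and only if $M$ has rank at most one. The ``only if'' direction is a one-line computation: writing $\cket{\alpha_k}=a_0\cket0+a_1\cket1$ and $\cket{\beta_k}=b_0\cket0+b_1\cket1$ gives $c_{ij}=a_ib_j$, whence $\det M = a_0b_0a_1b_1-a_0b_1a_1b_0=0$. For the ``if'' direction I would use that a nonzero $2\times2$ matrix with vanishing determinant has rank exactly one, so $M=uv^{\mathsf T}$ for some nonzero vectors $u,v$; setting $a_i=u_i$ and $b_j=v_j$ recovers $c_{ij}=a_ib_j$, i.e.\ the factorization $\cket{\gamma_k}=\cket{\alpha_k}\otimes\cket{\beta_k}$. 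Since $\cket{\gamma_k}$ is a unit vector it is nonzero, so this rank-one case always applies.

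For the screening-off criterion~\eqref{eq:2x2SCR}, the key input is the atomic conditional-probability formula~\eqref{eq:condphiatomic} of Lemma~\ref{lem:dxdcondphiSpec}, giving $\phi(X|C_k)=\bra{\gamma_k}X\cket{\gamma_k}$ whenever $\bra{\gamma_k}\rho\cket{\gamma_k}\neq0$. Applying this to the four fine-grained events and using that, by~\eqref{eq:2x2ABsubalg}, the operators $AB,AB^\perp,A^\perp B,A^\perp B^\perp$ are precisely the rank-one projections $\proj{00},\proj{01},\proj{10},\proj{11}$, I would read off
\[
\phi(AB|C_k)=\abs{c_{00}}^2,\quad \phi(AB^\perp|C_k)=\abs{c_{01}}^2,\quad \phi(A^\perp B|C_k)=\abs{c_{10}}^2,\quad \phi(A^\perp B^\perp|C_k)=\abs{c_{11}}^2.
\]
Substituting these into the balanced conditional correlation~\eqref{eq:condDelta.balanced} yields directly $\Delta_\psi(A,B|C_k)=\abs{c_{00}}^2\abs{c_{11}}^2-\abs{c_{01}}^2\abs{c_{10}}^2=\det_{ij}(\abs{\skalp{ij}{\gamma_k}}^2)$, which is the claimed identity; and the balanced screening-off condition~\eqref{eq:screen.balanced} is by definition the vanishing of this same expression, giving the stated biconditional.

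The only step carrying any genuine content is the ``if'' direction of the productness criterion, where one must pass from $\det M=0$ to an explicit factorization; this is the elementary rank-one structure of singular $2\times2$ matrices, and everything else is bookkeeping with the conditional-probability formulas established earlier. I would also emphasize, as the conceptual payoff, that both determinants are built from the \emph{same} matrix $M$ — the first from its entries and the second from their squared moduli — which is exactly what makes productness and screening-off so convenient to handle in tandem in the constructions of the following sections.
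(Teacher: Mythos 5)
Your proposal is correct and follows essentially the same route as the paper: the productness criterion via the rank-one structure of the $2\times2$ coefficient matrix (the paper phrases this through the Schmidt decomposition, you through the explicit outer-product factorization $M=uv^{\mathsf T}$, which are equivalent), and the screening-off criterion by substituting the projections~\eqref{eq:2x2ABsubalg} into the balanced atomic screening-off condition~\eqref{eq:dxdscreenatomic.balanced}. No gaps.
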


\begin{proof}
For the productness~\eqref{eq:2x2NP}, note that for $2\times2$ matrices, the vanishing of the determinant
means nonmaximal rank, which is $1$ then ($0$ is excluded since $\cket{\gamma_k}$ is not the zero vector),
which means one product term in the Schmidt decomposition (or SVD) of $\cket{\gamma_k}$.
For the screening-off~\eqref{eq:2x2SCR}, we use the condition~\eqref{eq:dxdscreenatomic.balanced}
and the form of the projections~\eqref{eq:2x2ABsubalg}.
\end{proof}

Let us emphasize again (see Corollary~\ref{cor:atomicCCSwtriv}) that in this atomic case we may have weakly trivial CCSs.
If $\det_{ij}(\abs{\skalp{ij}{\gamma_k}}^2)=0$ for all $k\in K$
then the partition $\set{C_k=\proj{\gamma_k}}_k$ is a weakly trivial CCS of the events $A$ and $B$ \eqref{eq:2x2AB},
however,
if $\det_{ij}(\abs{\skalp{ij}{\gamma_k}}^2)\neq0$ for some $k\in K$
then the partition $\set{C_k=\proj{\gamma_k}}_k$ can still be a (nontrivial) CCS
for states $\rho$ giving zero probability $\bra{\gamma_k}\rho\cket{\gamma_k}=0$
for those $k$ indices.

Lemma~\ref{lem:2x2SCRNP} gives us a practical method to construct families of \emph{nonproduct} weakly trivial atomic CCSs of the events $A$ and $B$,
that is, when $\det_{ij}(\skalp{ij}{\gamma_k})\neq0$ and $\det_{ij}(|\skalp{ij}{\gamma_k}|^2)=0$ for all $k\in K$.
The method is the following.
We start with a product vector $\cket{\alpha_k}\otimes\cket{\beta_k}$, then both determinants vanish,
which can be seen by the observation $\bra{ij}\bigl(\cket{\alpha_k}\otimes\cket{\beta_k}\bigr) = \skalp{i}{\alpha_k}\skalp{j}{\beta_k}$
(the first also by~\eqref{eq:2x2NP} in Lemma~\ref{lem:2x2SCRNP}).
Next consider a nonproduct diagonal unitary $V$.
(These are in general of the form $V=e^{i\varphi_{00}}\proj{00}+e^{i\varphi_{01}}\proj{01}+e^{i\varphi_{10}}\proj{10}+e^{i\varphi_{11}}\proj{11}$,
where the phases $\varphi_{ij}$ are such that $\varphi_{00}+\varphi_{11}\neq\varphi_{01}+\varphi_{10}$.)
Suppose also that $\skalp{i}{\alpha_k},\skalp{j}{\beta_k}\neq0$ for all $i,j\in\set{0,1}$.
Then the vector $\cket{\gamma_k}:=V\cket{\alpha_k}\otimes\cket{\beta_k}$
leads to a nonproduct screening projection,
since $V$ spoils the vanishing of the first determinant but not that of the second.

\begin{lem}
\label{lem:2x2SEPCOV}
In two-qubit systems, for the pure state $\rho=\proj{\psi}$ given by the state vector $\cket{\psi}$,
we have that $\cket{\psi}=\cket{\psi_1}\otimes\cket{\psi_2}$ (separability) if and only if
\begin{subequations}
\label{eq:2x2SEPCOV}
\begin{align}
\label{eq:2x2SEP}
% --------
\begin{split}
\det_{ij}(\skalp{ij}{\psi}) &:=
\det\begin{pmatrix}
\skalp{00}{\psi} & \skalp{01}{\psi} \\
\skalp{10}{\psi} & \skalp{11}{\psi}
\end{pmatrix}\\
&\phantom{:}\equiv
  \skalp{00}{\psi} \skalp{11}{\psi}
- \skalp{01}{\psi} \skalp{10}{\psi} = 0,
\end{split}
% --------
\intertext{and the events $A$ and $B$ \eqref{eq:2x2AB} are uncorrelated, if and only if}
\label{eq:2x2COV}
\begin{split}
\Delta_\psi(A,B)=\det_{ij}(\abs{\skalp{ij}{\psi}}^2) &:=
\det\begin{pmatrix}
\abs{\skalp{00}{\psi}}^2 & \abs{\skalp{01}{\psi}}^2 \\
\abs{\skalp{10}{\psi}}^2 & \abs{\skalp{11}{\psi}}^2
\end{pmatrix}\\
&\phantom{:}\equiv
  \abs{\skalp{00}{\psi}}^2 \abs{\skalp{11}{\psi}}^2
- \abs{\skalp{01}{\psi}}^2 \abs{\skalp{10}{\psi}}^2 = 0,
\end{split}
% --------
\end{align}
\end{subequations}
in terms of the $\skalp{ij}{\psi}$ expansion coefficients of the vector $\cket{\psi}$ defining the pure state.
\end{lem}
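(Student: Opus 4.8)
The plan is to mirror the proof of Lemma~\ref{lem:2x2SCRNP}, since the two determinant conditions \eqref{eq:2x2SEP} and \eqref{eq:2x2COV} stated here are the exact analogues of \eqref{eq:2x2NP} and \eqref{eq:2x2SCR}, now read off the state vector $\cket{\psi}$ rather than off the partition vectors $\cket{\gamma_k}$. Accordingly I would split the proof into the separability part \eqref{eq:2x2SEP} and the uncorrelatedness part \eqref{eq:2x2COV}, treating them independently.

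For the separability part, I would collect the expansion coefficients of $\cket{\psi}$ in the computational basis into the $2\times2$ matrix with entries $\skalp{ij}{\psi}$. The state vector factorizes as $\cket{\psi}=\cket{\psi_1}\otimes\cket{\psi_2}$ if and only if this coefficient matrix has rank one, that is, the Schmidt decomposition (equivalently, the SVD) of $\cket{\psi}$ contains a single product term. Since $\cket{\psi}$ is normalized, the matrix is nonzero, so its rank is at least one; for a $2\times2$ matrix the rank drops to exactly one precisely when the determinant vanishes, which is condition \eqref{eq:2x2SEP}. This is the identical rank argument used for the productness \eqref{eq:2x2NP} in Lemma~\ref{lem:2x2SCRNP}.

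For the uncorrelatedness part, I would evaluate the balanced form \eqref{eq:dxdDelta.balanced} of the correlation on the pure state $\rho=\proj{\psi}$, using the identification \eqref{eq:2x2ABsubalg} of $AB,AB^\perp,A^\perp B,A^\perp B^\perp$ with the four computational-basis projections $\proj{00},\proj{01},\proj{10},\proj{11}$. Each factor then reduces to a single expectation $\Tr(\proj{\psi}\proj{ij})=\abs{\skalp{ij}{\psi}}^2$, so that
\[
\Delta_\psi(A,B)=\abs{\skalp{00}{\psi}}^2\abs{\skalp{11}{\psi}}^2-\abs{\skalp{01}{\psi}}^2\abs{\skalp{10}{\psi}}^2=\det_{ij}(\abs{\skalp{ij}{\psi}}^2),
\]
and $A$ and $B$ are uncorrelated exactly when this quantity vanishes, giving \eqref{eq:2x2COV}. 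The point worth flagging is that using the balanced expression \eqref{eq:dxdDelta.balanced} rather than the original \eqref{eq:dxdDelta.original} is what makes the determinant appear immediately, without having to expand the product $\Tr(\rho A)\Tr(\rho B)$.

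There is essentially no hard step: the argument is short and runs parallel to Lemma~\ref{lem:2x2SCRNP}. The only place that needs a little care is the rank argument in the separability part, where one must invoke normalization to exclude the zero matrix, so that a vanishing determinant forces rank \emph{exactly} one (not zero) and hence a genuine product vector; everything in the uncorrelatedness part is then a direct substitution into \eqref{eq:dxdDelta.balanced}.
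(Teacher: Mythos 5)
Your proposal is correct and follows essentially the same route as the paper: the separability part is the identical rank-one/Schmidt argument used for \eqref{eq:2x2NP} (including the normalization remark excluding rank zero), and the uncorrelatedness part is exactly the paper's substitution of \eqref{eq:2x2ABsubalg} into the balanced expression \eqref{eq:Delta.balanced} written with the density operator and traces. Nothing is missing.
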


\begin{proof}
For the separability~\eqref{eq:2x2SEP}, the proof is the same as the proof of~\eqref{eq:2x2NP}.
For the correlation~\eqref{eq:2x2COV} of $A$ and $B$, we have~\eqref{eq:Delta.balanced},
expressed by the density operator and traces~\eqref{eq:dxdphi},
and the expression of the projections~\eqref{eq:2x2ABsubalg}.
\end{proof}

We note for the interested reader that the determinant in~\eqref{eq:2x2SEP} is related to the quantity
$4\abs{ \det_{ij}(\skalp{ij}{\psi}) }^2$, which is a proper entanglement measure for pure two-qubit states,
called \emph{concurrence squared} \citep{Wootters-1998}.
Note that the determinants~\eqref{eq:2x2NP} and~\eqref{eq:2x2SEP} work in the same way
for the productness of the vectors $\cket{\gamma_k}$ and of the state vector $\cket{\psi}$.
On the other hand, the determinants~\eqref{eq:2x2SCR} and~\eqref{eq:2x2COV} work in the same way
for the conditional and for the unconditional correlations.
These follow from that
the conditional density operator $C_k\rho C_k/\Tr(C_k\rho C_k)$
is just the state corresponding to the common cause $C_k$ itself
in the special case when the state is \emph{pure} $\rho=\proj{\psi}$ and the CCS is \emph{atomic} $C_k=\proj{\gamma_k}$.

% ******************************************************************************
% ******************************************************************************
\section{Trivial and nontrivial common cause systems}
\label{sec:CCSprod}

In this section, we turn to the first question formulated in the Introduction, namely, whether the existence of trivial noncommuting common causes provides a counterargument to the application of noncommuting common causes, and argue that they don't.
In the following, first we state a proposition (Proposition~\ref{prop:prodvecCCS}) stating that product atomic CCSs are strongly trivial (\textit{cf}. Lemma~6 in \citep{Cavalcanti-2014}).
All the discussions and also the Proposition will be formulated for bipartite~\eqref{eq:bipartite} discrete finite systems (see Section~\ref{sec:prob.dxd}).
Next, we provide examples for strongly trivial, weakly trivial and nontrivial CCSs.
All the examples are formulated in two-qubit systems (see Section~\ref{sec:prob.2x2}).

% ******************************************************************************
\subsection{Triviality of common cause systems}
\label{sec:CCSprod.theory}

It might be fruitful to summarize the general situation for the discrete finite systems. 
It is already known that in the classical case the atomic partition is unique and screens off all correlations of all pairs of events,
thus these are strongly trivial (see Corollary~\ref{cor:classatomicCCSstriv}).
In the quantum case, the atomic partitions are not unique, and they can be either products or nonproducts.
The product atomic partitions always screen off all correlations of all pairs of events,
thus these are strongly trivial (see Proposition~\ref{prop:prodvecCCS} and the \textit{first example} below).
Nonproduct atomic partitions usually do not screen off, however, some do (see the \textit{second to sixth examples} below).
There are nonproduct atomic CCSs which are weakly trivial (see Corollary~\ref{cor:atomicCCSwtriv} and the \textit{second to fourth examples} below),
and there are nonproduct atomic CCSs which are nontrivial (see the \textit{fifth and sixth examples} below).   
Recall also that there are further weakly trivial CCSs, given in \eqref{eq:CCStrivAB.2}, which are nonatomic and commuting
(also \eqref{eq:CCStrivAB.4}, if the system is larger than that of two qubits).
There are also nontrivial nonatomic CCSs, commuting and noncommuting ones (see the \textit{seventh and eighth examples} below).

As for the triviality objection, we think that the existence of strongly trivial CCSs for quantum correlations detracts from the value of noncommuting common causes just as little as the existence of trivial classical common causes detracts from the value of commuting common causes for classical correlations. Shall we declare meaningless the question as to what is the common cause of having lung cancer ($A$) and having yellowish fingertips ($B$) just because the partition $\set{AB,AB^\perp,A^\perp B,A^\perp B^\perp}$, or any atomic partition screens off the correlation? Obviously not. Similarly, the question as to what is the common cause of a quantum correlation can still make sense if for such a correlation there exist strongly trivial atomic product CCSs. Atomic product CCSs form only a small part of the set of common causes for a correlation. Describing the whole set of CCSs for all correlations in general is a complicated task which is beyond the scope of this paper. We confine ourselves here to give some illuminating examples for different kinds of CCSs of different levels of triviality.

\begin{prop}
\label{prop:prodvecCCS}
In bipartite~\eqref{eq:bipartite} discrete finite systems (see Section~\ref{sec:prob.dxd}),
any correlation of any pair of events of the form~\eqref{eq:bipartite.AB} can be screened off~\eqref{eq:dxdscreen}
by any product atomic partition.
That is, any product atomic partition forms a CCS, which is strongly trivial.
\end{prop}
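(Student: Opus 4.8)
The plan is to verify the screening-off condition~\eqref{eq:dxdscreen} directly on each element $C_k=\proj{\gamma_k}$ of the product atomic partition, using the atomic form of the condition established in Lemma~\ref{lem:dxdscrSpec}. Since the partition is atomic, the conditional probabilities reduce to the matrix elements $\bra{\gamma_k}X\cket{\gamma_k}$ (Lemma~\ref{lem:dxdcondphiSpec}), so by~\eqref{eq:dxdscreenatomic.original} it suffices to show that
\begin{equation*}
\bra{\gamma_k}AB\cket{\gamma_k} = \bra{\gamma_k}A\cket{\gamma_k}\,\bra{\gamma_k}B\cket{\gamma_k}
\end{equation*}
holds for every $k$, independently of the state $\rho$.

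First I would exploit the product structure of both the events and the partition. Writing $\cket{\gamma_k}=\cket{\alpha_k}\otimes\cket{\beta_k}$ (with $\cket{\alpha_k},\cket{\beta_k}$ normalized without loss of generality, since the $\cket{\gamma_k}$ are normalized and the overall scalar can be distributed between the factors), together with $A=P\otimes\Id$, $B=\Id\otimes Q$, and the crucial factorization $AB=(P\otimes\Id)(\Id\otimes Q)=P\otimes Q$, which is valid precisely because $A$ and $B$ act on different tensor factors, the matrix elements factorize over the tensor structure:
\begin{align*}
\bra{\gamma_k}A\cket{\gamma_k} &= \bra{\alpha_k}P\cket{\alpha_k}, \qquad
\bra{\gamma_k}B\cket{\gamma_k} = \bra{\beta_k}Q\cket{\beta_k}, \\
\bra{\gamma_k}AB\cket{\gamma_k} &= \bra{\alpha_k}P\cket{\alpha_k}\,\bra{\beta_k}Q\cket{\beta_k}.
\end{align*}
The screening-off equality then holds identically, for each $k$ and regardless of $\rho$: the state enters only through the side condition $\bra{\gamma_k}\rho\cket{\gamma_k}\neq0$ selecting which $k$ are relevant, never through the equality itself. (If one prefers, the same factorization makes both sides of the balanced form~\eqref{eq:dxdscreenatomic.balanced} equal after a trivial rearrangement of the same four scalar factors.)

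Since this computation uses nothing about $P$ and $Q$ beyond their being projections on $\mathcal{H}_1$ and $\mathcal{H}_2$, the identical argument screens off the correlation of any other admissible pair $A'=P'\otimes\Id$, $B'=\Id\otimes Q'$ of the form~\eqref{eq:bipartite.AB}. Hence the product atomic partition is a CCS for every such correlation and for every state, which is exactly strong triviality. I do not expect a genuine obstacle here: the entire content is the tensor factorization $AB=P\otimes Q$, which mechanically forces the expectation of the product to coincide with the product of expectations. The only points deserving a word of care are the normalization of the factors $\cket{\alpha_k},\cket{\beta_k}$ and the observation that the \emph{strong} (rather than merely weak) triviality claim rests on the independence of the whole argument from the particular choice of $P$ and $Q$.
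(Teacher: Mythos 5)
Your proposal is correct and follows essentially the same route as the paper: reduce to the atomic screening-off condition~\eqref{eq:dxdscreenatomic.original}, use $AB=P\otimes Q$ to factorize $\bra{\gamma_k}AB\cket{\gamma_k}=\bra{\alpha_k}P\cket{\alpha_k}\bra{\beta_k}Q\cket{\beta_k}$ along with the corresponding marginals, and observe that the resulting identity is independent of both the state and the choice of $P$ and $Q$, which yields strong triviality. Your extra remarks on normalizing $\cket{\alpha_k},\cket{\beta_k}$ and on the balanced form are harmless additions, not a different argument.
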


\begin{proof}
For the atomic partition $\set{C_k=\proj{\gamma_k}}_k$
given by the product vectors $\cket{\gamma_k}=\cket{\alpha_k}\otimes\cket{\beta_k}$,
we have for all $A$ and $B$ events of the form~\eqref{eq:bipartite.AB}
\begin{align*} 
\bra{\gamma_k}AB\cket{\gamma_k}
&= \bra{\gamma_k}P\otimes Q\cket{\gamma_k}
 = \bra{\alpha_k}P\cket{\alpha_k}\bra{\beta_k}Q\cket{\beta_k},\\
\bra{\gamma_k}A\cket{\gamma_k}
&= \bra{\gamma_k}P\otimes \Id\cket{\gamma_k}
 = \bra{\alpha_k}P\cket{\alpha_k}\bra{\beta_k}\Id\cket{\beta_k}
 = \bra{\alpha_k}P\cket{\alpha_k},\\
\bra{\gamma_k}B\cket{\gamma_k}
&= \bra{\gamma_k}\Id\otimes Q\cket{\gamma_k}
 = \bra{\alpha_k}\Id\cket{\alpha_k}\bra{\beta_k}Q\cket{\beta_k}
 = \bra{\beta_k}Q\cket{\beta_k},
 \end{align*} 
for all $k$,
so the screening-off condition~\eqref{eq:dxdscreenatomic.original} for atomic CCS holds.
\end{proof}

Note that the product atomic CCS in Proposition~\ref{prop:prodvecCCS} is noncommuting~\eqref{eq:commutCCS} in general.
It is a commuting CCS if $P$ and $Q$ are such that $[P,\proj{\alpha_k}]=0$ and $[Q,\proj{\beta_k}]=0$ for all $k$.
Note also that we have already seen in Corollary~\ref{cor:classatomicCCSstriv} that 
in discrete finite \emph{classical} systems any atomic partition is a strongly trivial CCS.
Now we have the same also as a corollary of Proposition~\ref{prop:prodvecCCS},
since in bipartite discrete finite classical systems, any atomic event is product.

% ******************************************************************************
\subsection{Examples}
\label{sec:CCSprod.examples}

Now we provide some examples in two-qubit systems (see Section~\ref{sec:prob.2x2}),
which are of different levels of triviality.
(See also Table~\ref{tab:examples} in Appendix~\ref{appsec:2x2.summary} for convenience.)
We begin with strongly trivial CCSs.

The \textit{first example}
is a one-parameter family of product atomic (then strongly trivial) CCSs of the events $A$ and $B$ \eqref{eq:2x2AB}. 
(Recall that because of the strong triviality, these are CCSs of any pair of events of the form~\eqref{eq:bipartite.AB}.)
That is, let $\set{C_k:=\proj{\gamma_k}}_k$ be
given by the product vectors $\cket{\gamma_k}=\cket{\alpha_k}\otimes\cket{\beta_k}$.
The simplest example
is given by the vectors
\begin{equation}
\label{eq:CCSclass}
\cket{\gamma_0}:=\cket{00},\quad
\cket{\gamma_1}:=\cket{01},\quad
\cket{\gamma_2}:=\cket{10},\quad
\cket{\gamma_3}:=\cket{11}.
\end{equation}
This leads to the product atomic CCS $\set{C_k=\proj{\gamma_k}}_k=\set{AB,AB^\perp,A^\perp B,A^\perp B^\perp}$,
see~\eqref{eq:2x2ABsubalg} and also~\eqref{eq:CCStrivAB.4}.
So we get back the classical case here.
This is a commuting CCS of the events $A$ and $B$ \eqref{eq:2x2AB}, see~\eqref{eq:2x2commutCCS}.
A slightly more complicated example is given by the vectors
\begin{subequations}
\label{eq:CCSGabor}
\begin{align}
\cket{\gamma_0}&:=\frac12\bigl( + \cket{00} + \cket{01} + \cket{10} + \cket{11} \bigr),\\
\cket{\gamma_1}&:=\frac12\bigl( - \cket{00} + \cket{01} - \cket{10} + \cket{11} \bigr),\\
\cket{\gamma_2}&:=\frac12\bigl( - \cket{00} - \cket{01} + \cket{10} + \cket{11} \bigr),\\
\cket{\gamma_3}&:=\frac12\bigl( + \cket{00} - \cket{01} - \cket{10} + \cket{11} \bigr).
\end{align}
It might not be seen for the first sight that also these are product vectors
\begin{equation}
\label{eq:CCSGabor.prod}
\cket{\gamma_0}=\cket{{+}{+}},\quad
\cket{\gamma_1}=\cket{{+}{-}},\quad
\cket{\gamma_2}=\cket{{-}{+}},\quad
\cket{\gamma_3}=\cket{{-}{-}},
\end{equation}
\end{subequations}
where $\cket{{\pm}}:=\frac1{\sqrt2}(\pm\cket{0}+\cket{1})$.
(So the vectors in~\eqref{eq:CCSGabor.prod} are just the $U\otimes U$ (locally unitarily) transformed vectors of~\eqref{eq:CCSclass},
where $U=\cket{{+}}\bra{0}+\cket{{-}}\bra{1}=(\cket{0}\bra{0}-\cket{0}\bra{1}+\cket{1}\bra{0}+\cket{1}\bra{1})/\sqrt{2}$
transforms $\cket{0},\cket{1}$ into $\cket{{+}},\cket{{-}}$.)
Then $\set{C_k=\proj{\gamma_k}}_k$ is a product atomic CCS of the events $A$ and $B$ \eqref{eq:2x2AB}, but now a noncommuting one, see~\eqref{eq:2x2commutCCS}.
We can also interpolate continuously between the CCSs given by~\eqref{eq:CCSclass} and~\eqref{eq:CCSGabor}
by applying the one-parameter local unitary $U_\theta\otimes U_\theta$ to the vectors in~\eqref{eq:CCSclass}
with the more general $U_\theta=c\cket{0}\bra{0}-s\cket{0}\bra{1}+s\cket{1}\bra{0}+c\cket{1}\bra{1}$ for $\theta\in\field{R}$,
where we use the shorthand notation $c:=\cos(\theta/2)$ and $s:=\sin(\theta/2)$, leading to
\begin{subequations}
\label{eq:CCSclassU}
\begin{align}
\cket{\gamma_0} &:= U_\theta\otimes U_\theta\cket{00} = (U_\theta\cket{0})\otimes(U_\theta\cket{0}) =
+ cc \cket{00} + cs \cket{01} + sc \cket{10} + ss \cket{11},\\
\cket{\gamma_1} &:= U_\theta\otimes U_\theta\cket{01} = (U_\theta\cket{0})\otimes(U_\theta\cket{1}) =
- cs \cket{00} + cc \cket{01} - ss \cket{10} + sc \cket{11},\\
\cket{\gamma_2} &:= U_\theta\otimes U_\theta\cket{10} = (U_\theta\cket{1})\otimes(U_\theta\cket{0}) =
- sc \cket{00} - ss \cket{01} + cc \cket{10} + cs \cket{11},\\
\cket{\gamma_3} &:= U_\theta\otimes U_\theta\cket{11} = (U_\theta\cket{1})\otimes(U_\theta\cket{1}) =
+ ss \cket{00} - sc \cket{01} - cs \cket{10} + cc \cket{11},
\end{align}
\end{subequations}
which gives back~\eqref{eq:CCSclass} for $\theta\in\pi\field{Z}$ and~\eqref{eq:CCSGabor} for $\theta\in\pi(\field{Z}+1/2)$ up to phases and ordering.
Then $\set{C_k=\proj{\gamma_k}}_k$ is a product atomic CCS of the events $A$ and $B$ \eqref{eq:2x2AB},
which is commuting if and only if $\theta \in \pi\field{Z}$,
otherwise it is noncommuting, see~\eqref{eq:2x2commutCCS}.
Note that by using different and more general unitaries for the two qubits in~\eqref{eq:CCSclassU}, $U_1\otimes U_2$,
all the possible two-qubit product atomic CCSs can be covered.

Now we proceed with the construction of three weakly trivial CCSs in two-qubit systems.

The \textit{second example} 
is a one-parameter family of weakly trivial nonproduct atomic CCSs of the events $A$ and $B$ \eqref{eq:2x2AB},
constructed by using the idea based upon and outlined after Lemma~\ref{lem:2x2SCRNP}.
For the roles of $\cket{\alpha_k}\otimes\cket{\beta_k}$
let us simply have the vectors $(U_\theta\cket{i})\otimes(U_\theta\cket{j})$ already constructed in~\eqref{eq:CCSclassU},
and let us transform those by the diagonal unitary $V := \proj{00}+\proj{01}+\proj{10}-\proj{11}$,
which leads to the vectors
\begin{subequations}
\label{eq:CCStwist}
\begin{align}
\cket{\gamma_0} := V \cket{\alpha_0}\otimes\cket{\beta_0} &= 
V (U_\theta\cket{0})\otimes(U_\theta\cket{0}) = + cc \cket{00} + cs \cket{01} + sc \cket{10} - ss \cket{11},\\
\cket{\gamma_1} := V \cket{\alpha_0}\otimes\cket{\beta_1} &=
V (U_\theta\cket{0})\otimes(U_\theta\cket{1}) = - cs \cket{00} + cc \cket{01} - ss \cket{10} - sc \cket{11},\\
\cket{\gamma_2} := V \cket{\alpha_1}\otimes\cket{\beta_0} &=
V (U_\theta\cket{1})\otimes(U_\theta\cket{0}) = - sc \cket{00} - ss \cket{01} + cc \cket{10} - cs \cket{11},\\
\cket{\gamma_3} := V \cket{\alpha_1}\otimes\cket{\beta_1} &=
V (U_\theta\cket{1})\otimes(U_\theta\cket{1}) = + ss \cket{00} - sc \cket{01} - cs \cket{10} - cc \cket{11},
\end{align}
\end{subequations}
which gives back~\eqref{eq:CCSclass} for $\theta\in\pi\field{Z}$ up to phases and ordering.
(Note that we could have used different unitaries for the two qubits, and also more general unitaries with complex phases.)
Then $\set{C_k=\proj{\gamma_k}}_k$ is an atomic partition by construction, since $V$ is unitary.
It consists of nonproduct projections if and only if $\theta \not\in \pi\field{Z}$,
which can be checked by~\eqref{eq:2x2NP} as
$\det_{ij}(\skalp{ij}{\gamma_k})=\pm2 c^2s^2=\pm \sin^2(\theta)/2$.
So it is a strongly trivial CCS if $\theta \in \pi\field{Z}$.
It is a weakly trivial CCS of the events $A$ and $B$ \eqref{eq:2x2AB} for all $\theta\in\field{R}$ by construction,
which can also be checked by~\eqref{eq:2x2SCR}, as $\det_{ij}(|\skalp{ij}{\gamma_k}|^2)=(c^2s^2)^2-(c^2s^2)^2=0$ for all $k\in K$.
It is a commuting CCS of the events $A$ and $B$ \eqref{eq:2x2AB} if and only if $\theta \in \pi\field{Z}$,
otherwise it is noncommuting, see~\eqref{eq:2x2commutCCS}.

The \textit{third example}
is another one-parameter family of weakly trivial nonproduct atomic CCSs of the events $A$ and $B$ \eqref{eq:2x2AB}.
For $\theta\in\field{R}$, let us have the shorthand notation $c:=\cos(\theta/2)$ and $s:=\sin(\theta/2)$,
by which let us have the one-parameter family of vectors 
\begin{subequations}
\label{eq:CCSBell}
\begin{align}
\cket{\gamma_0}&:=c\cket{0{+}}+s\cket{1{-}}=\frac1{\sqrt{2}}\bigl(+c\cket{00} +c\cket{01} -s\cket{10} +s\cket{11}\bigr),\\
\cket{\gamma_1}&:=c\cket{0{-}}+s\cket{1{+}}=\frac1{\sqrt{2}}\bigl(-c\cket{00} +c\cket{01} +s\cket{10} +s\cket{11}\bigr),\\
\cket{\gamma_2}&:=s\cket{0{-}}-c\cket{1{+}}=\frac1{\sqrt{2}}\bigl(-s\cket{00} +s\cket{01} -c\cket{10} -c\cket{11}\bigr),\\
\cket{\gamma_3}&:=s\cket{0{+}}-c\cket{1{-}}=\frac1{\sqrt{2}}\bigl(+s\cket{00} +s\cket{01} +c\cket{10} -c\cket{11}\bigr),
\end{align}
\end{subequations}
the nonproductness of which is set by the parameter $\theta$.
It is straightforward to check that $\set{C_k=\proj{\gamma_k}}_k$ is an atomic partition.
It consists of nonproduct projections if and only if $\theta\not\in\pi\field{Z}$, which can be checked by~\eqref{eq:2x2NP},
as $\det_{ij}(\skalp{ij}{\gamma_k})=\pm cs=\pm\sin(\theta)/2$.
So it is a strongly trivial CCS if $\theta \in \pi\field{Z}$.
It is a weakly trivial CCS of the events $A$ and $B$ \eqref{eq:2x2AB} for all $\theta\in\field{R}$,
which can be checked by~\eqref{eq:2x2SCR}, as $\det_{ij}(|\skalp{ij}{\gamma_k}|^2)=((cs)^2-(cs)^2)/2=0$ for all $k\in K$.
It is a noncommuting CCS of the events $A$ and $B$ \eqref{eq:2x2AB}, see~\eqref{eq:2x2commutCCS}.

The \textit{fourth example}
is another one-parameter family of weakly trivial nonproduct atomic CCSs of the events $A$ and $B$ \eqref{eq:2x2AB}.
For $\xi,\zeta\in\field{R}$, let us have the exponentially parametrized vectors
\begin{subequations}
\label{eq:CCShyper}
\begin{align}
\cket{\gamma_0} &:= N\bigl( + \ee^{+\xi/2} \cket{00} + \ee^{+\zeta/2} \cket{01} + \ee^{-\zeta/2} \cket{10} - \ee^{-\xi/2} \cket{11} \bigr),\\
\cket{\gamma_1} &:= N\bigl( + \ee^{-\xi/2} \cket{00} + \ee^{-\zeta/2} \cket{01} - \ee^{+\zeta/2} \cket{10} + \ee^{+\xi/2} \cket{11} \bigr),\\
\cket{\gamma_2} &:= N\bigl( + \ee^{+\zeta/2} \cket{00} - \ee^{+\xi/2} \cket{01} + \ee^{-\xi/2} \cket{10} + \ee^{-\zeta/2} \cket{11} \bigr),\\
\cket{\gamma_3} &:= N\bigl( - \ee^{-\zeta/2} \cket{00} + \ee^{-\xi/2} \cket{01} + \ee^{+\xi/2} \cket{10} + \ee^{+\zeta/2} \cket{11} \bigr),
\end{align}
\end{subequations}
with the normalization $N=1/\sqrt{2(\cosh(\xi) +\cosh(\zeta))}$.
It is straightforward to check that $\set{C_k=\proj{\gamma_k}}_k$ is an atomic partition.
It never consists of product projections, which can be checked by~\eqref{eq:2x2NP},
as $\det_{ij}(\skalp{ij}{\gamma_k})=\pm 2N^2\neq0$.
It is a weakly trivial CCS of the events $A$ and $B$ \eqref{eq:2x2AB} for all $\theta\in\field{R}$,
which can be checked by~\eqref{eq:2x2SCR}, as $\det_{ij}(|\skalp{ij}{\gamma_k}|^2)=0$ for all $k\in K$.
It is a noncommuting CCS of the events $A$ and $B$ \eqref{eq:2x2AB}, see~\eqref{eq:2x2commutCCS}.

Now we turn to nontrivial CCSs in two qubit systems, starting with two atomic ones.
Note that nontrivial CCSs are state-dependent, so we have to specify the state as well.
In these examples, the state is pure and given by the state vector
\begin{equation}
\label{eq:Bell0}
\cket{\psi}:=\bigl(\cket{00}+\cket{11}\bigr)/\sqrt2,
\end{equation}
which is entangled, which can be checked by~\eqref{eq:2x2SEP},
as $\det_{ij}(\skalp{ij}{\psi})=1/2$.

The \textit{fifth example}
is a one-parameter family of nontrivial (then nonproduct) atomic CCSs of the events $A$ and $B$ \eqref{eq:2x2AB}
in the system described by the state given by the state vector~\eqref{eq:Bell0}.
For $\theta\in\field{R}$, let us have the shorthand notation $c:=\cos(\theta/2)$ and $s:=\sin(\theta/2)$,
by which let us have the one-parameter family of vectors 
\begin{subequations}
\label{eq:CCSntrat}
\begin{align}
    \cket{\gamma_0}&:=\cket{00},\\
    \cket{\gamma_1}&:=c\cket{01}+s\cket{10},\\
    \cket{\gamma_2}&:=-s\cket{01}+c\cket{10},\\
    \cket{\gamma_3}&:=\cket{11},
\end{align}
\end{subequations}
which give back~\eqref{eq:CCSclass} for $\theta\in\pi\field{Z}$ up to phases and ordering.
It is straightforward to check that $\set{C_k=\proj{\gamma_k}}_k$ is an atomic partition.
It contains nonproduct projections if and only if $\theta\not\in\pi\field{Z}$, which can be checked by~\eqref{eq:2x2NP},
as $\det_{ij}(\skalp{ij}{\gamma_k})=\pm cs=\pm\sin(\theta)/2$ for $k\in\set{1,2}$.
So it is a strongly trivial CCS if $\theta \in \pi\field{Z}$.
It is a CCS of the events $A$ and $B$ \eqref{eq:2x2AB} for the state~\eqref{eq:Bell0} for all $\theta\in\field{R}$,
which can be checked by~\eqref{eq:2x2SCR},
as $\det_{ij}(\abs{\skalp{ij}{\gamma_0}}^2)=\det_{ij}(\abs{\skalp{ij}{\gamma_3}}^2)=0$,
and although $\det_{ij}(\abs{\skalp{ij}{\gamma_1}}^2)=\det_{ij}(\abs{\skalp{ij}{\gamma_2}}^2)=-c^2s^2\neq0$,
$C_1$ and $C_2$ are of zero probability, 
$\abs{\skalp{\psi}{\gamma_1}}^2=\abs{\skalp{\psi}{\gamma_1}}^2=0$.
So it is a nontrivial CCS if $\theta \not\in \pi\field{Z}$.
It is a commuting CCS of the events $A$ and $B$ \eqref{eq:2x2AB} if and only if $\theta\in\pi\field{Z}$,
otherwise it is weakly commuting~\eqref{eq:wcommutCCS}, see~\eqref{eq:2x2commutCCS},
since the commutativity~\eqref{eq:commutCCS} fails only for $C_1$ and $C_2$, which are of zero probability.

The \textit{sixth example}
is another one-parameter family of nontrivial (then nonproduct) atomic CCSs of the events $A$ and $B$ \eqref{eq:2x2AB}
in the system described by the state given by the state vector~\eqref{eq:Bell0}.
For $\theta\in\field{R}$, let us have the shorthand notation $c:=\cos(\theta/2)$ and $s:=\sin(\theta/2)$,
by which let us have the one-parameter family of vectors 
\begin{subequations}
\label{eq:CCSntratU}
\begin{align}
    \cket{\gamma_0}&:=\cket{{+}{+}},\\
    \cket{\gamma_1}&:=c\cket{{+}{-}}+s\cket{{-}{+}},\\
    \cket{\gamma_2}&:=-s\cket{{+}{-}}+c\cket{{-}{+}},\\
    \cket{\gamma_3}&:=\cket{{-}{-}},
\end{align}
\end{subequations}
which are the
$U\otimes U$ (locally unitarily) transformed vectors of \eqref{eq:CCSntrat},
giving back~\eqref{eq:CCSGabor} for $\theta\in\pi\field{Z}$ up to phases and ordering.
Note also that the state vector $\cket{\psi}$ given in~\eqref{eq:Bell0} is invariant to local unitary transformations of the form $U\otimes U$.
It is straightforward to check that $\set{C_k=\proj{\gamma_k}}_k$ is an atomic partition,
this is invariant to local unitary transformations.
It contains nonproduct projections if and only if $\theta\not\in\pi\field{Z}$, 
since productness is invariant to local unitary transformations,
although it may also be illustrative to check this directly by~\eqref{eq:2x2NP},
as $\det_{ij}(\skalp{ij}{\gamma_k})=\pm cs=\pm\sin(\theta)/2$ for $k\in\set{1,2}$.
So it is a strongly trivial CCS if $\theta \in \pi\field{Z}$.
It is a CCS of the events $A$ and $B$ \eqref{eq:2x2AB} for the state~\eqref{eq:Bell0} for all $\theta\in\field{R}$,
which can be checked by~\eqref{eq:2x2SCR},
as $\det_{ij}(\abs{\skalp{ij}{\gamma_0}}^2)=\det_{ij}(\abs{\skalp{ij}{\gamma_3}}^2)=0$,
and although $\det_{ij}(\abs{\skalp{ij}{\gamma_1}}^2)=\det_{ij}(\abs{\skalp{ij}{\gamma_2}}^2)=-c^2s^2/4\neq0$,
$C_1$ and $C_2$ are of zero probability, 
$\abs{\skalp{\psi}{\gamma_1}}^2=\abs{\skalp{\psi}{\gamma_1}}^2=0$.
So it is a nontrivial CCS if $\theta \not\in \pi\field{Z}$.
(Note that neither the correlation in~\eqref{eq:2x2COV} nor the conditional correlation in~\eqref{eq:2x2SCR}
is invariant to the local unitary transformations of the form $U\otimes U$ used here,
since the events $A$ and $B$ are fixed in those formulas.)
It is a noncommuting CCS of the events $A$ and $B$ \eqref{eq:2x2AB}, see~\eqref{eq:2x2commutCCS}.

Finally we construct two nontrivial CCSs in two-qubit systems which are nonatomic.
Note that nontrivial CCSs are state-dependent, so we have to specify the state as well.

The \textit{seventh example}
is a one-parameter family of nontrivial nonatomic CCSs of the events $A$ and $B$ \eqref{eq:2x2AB}
in the system described by the state given by the state vector~\eqref{eq:Bell0}.
Let us have
\begin{equation}
\label{eq:CCS22ntrat}
C_+:=\proj{\gamma_0}+\proj{\gamma_1},\qquad C_-:=\proj{\gamma_2}+\proj{\gamma_3}
\end{equation}
with the $\cket{\gamma_k}$ vectors given in~\eqref{eq:CCSntrat},
covering also the weakly trivial CCSs given in~\eqref{eq:CCStrivAB.2} for $\theta\in\pi2\field{Z}$ and for $\theta\in\pi(2\field{Z}+1)$.
So $\set{C_+,C_-}$ is a weakly trivial CCS if $\theta \in \pi\field{Z}$.
It is a rank-two CCS of the events $A$ and $B$ \eqref{eq:2x2AB} for the state~\eqref{eq:Bell0} for all $\theta\in\field{R}$,
which can be checked by~\eqref{eq:dxdscreenpure.balanced},
as $\cket{\psi_+}=C_+\cket{\psi}=\cket{\gamma_0}/\sqrt2=\cket{00}/\sqrt2$ and
$\cket{\psi_-}=C_-\cket{\psi}=\cket{\gamma_3}/\sqrt2=\cket{11}/\sqrt2$,
since
$\skalp{\gamma_1}{\psi}=\skalp{\gamma_2}{\psi}=0$.
So it is a nontrivial CCS if $\theta \not\in \pi\field{Z}$.
It is a commuting CCS of the events $A$ and $B$ \eqref{eq:2x2AB} if and only if $\theta \in \pi\field{Z}$,
otherwise it is noncommuting, see~\eqref{eq:2x2commutCCS}.

The \textit{eighth example}
is another one-parameter family of nontrivial nonatomic CCSs of the events $A$ and $B$ \eqref{eq:2x2AB},
in the system described by the state given by the state vector~\eqref{eq:Bell0}.
Let us have
\begin{equation}
\label{eq:CCS22ntratU}
C_+:=\proj{\gamma_0}+\proj{\gamma_1},\qquad C_-:=\proj{\gamma_2}+\proj{\gamma_3}
\end{equation}
with the $\cket{\gamma_k}$ vectors given in~\eqref{eq:CCSntratU}.
It is a rank-two CCS of the events $A$ and $B$ \eqref{eq:2x2AB} for the state~\eqref{eq:Bell0},
which can be checked by~\eqref{eq:dxdscreenpure.balanced},
as $\cket{\psi_+}=C_+\cket{\psi}=\cket{\gamma_0}/\sqrt2=\cket{{+}{+}}/\sqrt2$ and
$\cket{\psi_-}=C_-\cket{\psi}=\cket{\gamma_3}/\sqrt2=\cket{{-}{-}}/\sqrt2$,
since
$\skalp{\gamma_1}{\psi}=\skalp{\gamma_2}{\psi}=0$.
(This latter is not so easy to see directly as in the previous example, however,
it follows simply from that $\cket{\psi}$ is invariant to the transformation $U\otimes U$,
by which the vectors in~\eqref{eq:CCSntratU} used here were constructed from the vectors in~\eqref{eq:CCSntrat}.)
Then
$A       B      \cket{\psi_\pm}=\cket{00}/(2\sqrt{2})$,
$A       B^\perp\cket{\psi_\pm}=\pm\cket{01}/(2\sqrt{2})$,
$A^\perp B      \cket{\psi_\pm}=\pm\cket{10}/(2\sqrt{2})$,
$A^\perp B^\perp\cket{\psi_\pm}=\cket{11}/(2\sqrt{2})$,
then all the four terms in the screening-off condition~\eqref{eq:dxdscreenpure.balanced} (the norm-squared of these vectors) are the same.
So $\set{C_+,C_-}$ is a nontrivial CCS.
It is a noncommuting CCS of the events $A$ and $B$ \eqref{eq:2x2AB}, see~\eqref{eq:2x2commutCCS}.

% ******************************************************************************
% ******************************************************************************
\section{Noncommuting common causes and the law of total probability}
\label{sec:CCSLTP}

In this section, we turn to the second question formulated in the Introduction: Does noncommutativity exclude the validity of the law of total probability or do there exist noncommuting common causes which satisfy the law of total probability? In the following, first we define the \emph{law of total probability} in the general theory (see Section~\ref{sec:prob.gen}), in discrete finite systems (see Section~\ref{sec:prob.dxd}) and in two-qubit systems (see Section~\ref{sec:prob.2x2}) in turn. Next, we provide examples to show that being a CCS and satisfying the law of total probability are independent notions. As an answer to the above question, we will show a set of noncommuting CCSs satisfying the law of total probability. All the examples are formulated in two-qubit systems (see Section~\ref{sec:prob.2x2}).

% ******************************************************************************
\subsection{Law of total probability}
\label{sec:CCSLTP.theory}

\textit{In the general theory} (see Section~\ref{sec:prob.gen}),
the state $\phi$ and the $\set{C_k}_k$ partition together 
\emph{obey the law of total probability with respect to the commuting events $A,B\in\mathcal{P}(\mathcal{N})$}, if
the average of the conditional states~\eqref{eq:condphi} weighted with the probabilities of the conditions
recovers the probabilities of the events generated by $A$ and $B$,
that is,
\begin{subequations}
\label{eq:LTP}
\begin{align}
\label{eq:LTP.gen}
\sum_{k\in K:\phi(C_k)\neq0} \phi(C_k) \phi(X|C_k) = \phi(X)&\quad
\text{for $X\in\set{AB,AB^\perp,A^\perp B,A^\perp B^\perp}$},
\intertext{or, equivalently,}
\label{eq:LTP.simple}
\phi\Bigl(\sum_{k\in K} C_k X C_k\Bigr) = \phi(X)&\quad
\text{for $X\in\set{AB,AB^\perp,A^\perp B,A^\perp B^\perp}$},
\end{align}
\end{subequations}
that is, $\phi\circ E$ agrees with $\phi$ on the subalgebra generated by $A$ and $B$, see~\eqref{eq:cond}. 
(The equivalence above is because $\phi(C_kXC_k)=0$ follows from $\phi(C_k)=0$,
as $0=\phi(C_k)=\phi(C_k\Id C_k)=\phi(C_kXC_k)+\phi(C_kX^\perp C_k)$, being the sum of nonnegative terms.)
This always holds in the classical case (commutative von Neumann algebras),
not only for the events $X\in\set{AB,AB^\perp,A^\perp B,A^\perp B^\perp}$, but for all $X\in\mathcal{P}(\mathcal{N})$,
since $\sum_k C_k X C_k = \sum_k C_k^2 X = \sum_k C_k X = \Id X = X$.
It is clear that in the quantum case 
the law of total probability cannot hold in such a general form, that is, for all $X\in\mathcal{P}(\mathcal{N})$. 
We will see, however, that it still can hold for the events generated by $A$ and $B$, as given in~\eqref{eq:LTP},
even for noncommuting CCSs.
On the other hand, since commutativity of the CCS~\eqref{eq:commutCCS} is given with respect to $A$ and $B$,
we immediately have the following.
\begin{prop}
\label{prop:CommLTP}
Any CCS $\set{C_k}_k$ weakly commuting~\eqref{eq:wcommutCCS} with respect to the state $\phi$ obeys the law of total probability \eqref{eq:LTP}.
\end{prop}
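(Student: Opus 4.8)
The plan is to verify the simple form~\eqref{eq:LTP.simple} of the law of total probability, namely that $\phi\bigl(E(X)\bigr)=\phi(X)$ for every $X\in\set{AB,AB^\perp,A^\perp B,A^\perp B^\perp}$, where $E(X)=\sum_k C_kXC_k$ is the conditional expectation~\eqref{eq:cond}. The natural first step is to split the sum over $k$ according to whether $\phi(C_k)$ vanishes, since weak commutativity~\eqref{eq:wcommutCCS} only constrains the causes of nonzero probability. Accordingly I would introduce the two projections $C:=\sum_{k:\phi(C_k)\neq0}C_k$ and $D:=\sum_{k:\phi(C_k)=0}C_k$ of Lemma~\ref{lem:wcomm}, so that $\Id=C+D$.

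For the zero-probability indices I would argue that each contributes nothing to the state: from $0=\phi(C_k)=\phi(C_kXC_k)+\phi(C_kX^\perp C_k)$, a sum of two nonnegative terms, we get $\phi(C_kXC_k)=0$ whenever $\phi(C_k)=0$ (this is exactly the observation already used in the equivalence of~\eqref{eq:LTP.gen} and~\eqref{eq:LTP.simple}). For the nonzero-probability indices I would use weak commutativity in the form~\eqref{eq:wcommutCCS.X}, $[X,C_k]=0$, together with $C_k^2=C_k$, to collapse $C_kXC_k=XC_k$. Summing the surviving terms then gives $\phi\bigl(E(X)\bigr)=\phi\bigl(X\sum_{k:\phi(C_k)\neq0}C_k\bigr)=\phi(XC)$.

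It remains to show $\phi(XC)=\phi(X)$. Writing $X=X(C+D)=XC+XD$ reduces this to $\phi(XD)=0$, which is precisely the content of Lemma~\ref{lem:wcomm}. Combining the three steps yields $\phi\bigl(E(X)\bigr)=\phi(XC)=\phi(X)$ for each of the four $X$, establishing the law of total probability~\eqref{eq:LTP}.

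I expect the only genuine subtlety to be the careful bookkeeping of the zero-probability causes: these need not commute with $A$ and $B$, so they cannot be handled by the collapse $C_kXC_k=XC_k$, and their contribution must instead be shown to vanish at the level of the state. This is exactly why Lemma~\ref{lem:wcomm} was isolated beforehand, and invoking it (rather than re-deriving $[C,X]=0$ and $\phi(DX)=0$ by hand) keeps the argument short.
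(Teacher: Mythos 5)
Your proof is correct and follows essentially the same route as the paper's: the same decomposition into $C:=\sum_{k:\phi(C_k)\neq0}C_k$ and $D:=\sum_{k:\phi(C_k)=0}C_k$, the same appeal to Lemma~\ref{lem:wcomm} to discard the zero-probability part, and the same use of weak commutativity together with $C_k^2=C_k$ to identify $\phi(C_kXC_k)$ with $\phi(C_kX)$. The only difference is that you run the chain of equalities from $\phi(E(X))$ toward $\phi(X)$ (verifying the form~\eqref{eq:LTP.simple}) while the paper starts from $\phi(X)$ and expands toward~\eqref{eq:LTP.gen}; this is purely cosmetic.
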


\begin{proof}
Let us have the two projections
$C:=\sum_{k:\phi(C_k)\neq0} C_k$ and $D:=\sum_{k:\phi(C_k)=0} C_k$.
We have then $\phi(X)=\phi(CX)+\phi(DX)$, where the second term is zero by Lemma~\ref{lem:wcomm}.
Then the law of total probability~\eqref{eq:LTP.gen} follows as
$\phi(X)=\phi(CX)=\phi\bigl(\sum_{k:\phi(C_k)\neq0} C_kX\bigr)=\sum_{k:\phi(C_k)\neq0}\phi(C_k^2X)=\sum_{k:\phi(C_k)\neq0}\phi(C_kXC_k)
=\sum_{k:\phi(C_k)\neq0}\phi(C_k)\phi(X|C_k)$,
where the weak commutativity~\eqref{eq:wcommutCCS} of $\set{C_k}_k$ was used.
\end{proof}

\textit{In discrete finite systems} (see Section~\ref{sec:prob.dxd}),
the law of total probability~\eqref{eq:LTP} takes the form
\begin{equation}
\label{eq:LTPrho}
\Tr(E(\rho) X)= \Tr(\rho X)\quad\text{for $X\in\set{AB,AB^\perp,A^\perp B,A^\perp B^\perp}$},
\end{equation}
by~\eqref{eq:cond} and~\eqref{eq:dxdphi}, since $\Tr(\rho E(X)) \equiv \Tr(E(\rho) X)$.

Now let us consider a special case related to atomisticity of the event lattice.

\begin{lem}
\label{lem:dxdLTPatomicgen}
In discrete finite systems (see Section~\ref{sec:prob.dxd}),
for an atomic CCS $\set{C_k=\proj{\gamma_k}}_k$,
the law of total probability~\eqref{eq:LTPrho} takes the form
\begin{equation}
\label{eq:LTPrhovec}
\sum_{k\in K} q_k \bra{\gamma_k} X \cket{\gamma_k} = \Tr(\rho X)
\quad\text{for $X\in\set{AB,AB^\perp,A^\perp B,A^\perp B^\perp}$},
\end{equation}
where $q_k:=\bra{\gamma_k}\rho\cket{\gamma_k}$, being the probability of $C_k=\proj{\gamma_k}_k$.
\end{lem}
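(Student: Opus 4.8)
The plan is to start from the discrete-finite form of the law of total probability, namely~\eqref{eq:LTPrho}, $\Tr(E(\rho)X)=\Tr(\rho X)$, and to show that its left-hand side collapses to the claimed sum once the conditional expectation $E$ from~\eqref{eq:cond} is evaluated on atomic projections. The only ingredients are the explicit form $E(\rho)=\sum_k C_k\rho C_k$ and the atomic structure $C_k=\proj{\gamma_k}$, so the argument is essentially a substitution followed by an application of the invariant definition of the trace map.

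First I would compute $E(\rho)$ for atomic $C_k$. Since $C_k\rho C_k=\proj{\gamma_k}\rho\proj{\gamma_k}=\cket{\gamma_k}\bra{\gamma_k}\rho\cket{\gamma_k}\bra{\gamma_k}$, the scalar $\bra{\gamma_k}\rho\cket{\gamma_k}=q_k$ factors out, so each summand equals $q_k\proj{\gamma_k}$ and hence $E(\rho)=\sum_k q_k\proj{\gamma_k}$. Next I would insert this into $\Tr(E(\rho)X)$, use linearity of the trace, and apply $\Tr(\cket{\gamma_k}\bra{\gamma_k}X)=\bra{\gamma_k}X\cket{\gamma_k}$, i.e.\ $\Tr(\proj{\gamma_k}X)=\bra{\gamma_k}X\cket{\gamma_k}$. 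This gives $\Tr(E(\rho)X)=\sum_k q_k\bra{\gamma_k}X\cket{\gamma_k}$, and equating with $\Tr(\rho X)$ through~\eqref{eq:LTPrho} yields exactly~\eqref{eq:LTPrhovec}.

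I expect no genuine obstacle here: the statement is a direct specialization of the conditional expectation to rank-one projections, and the identification of $q_k$ as the probability of $C_k$ is immediate from $\Tr(\rho\proj{\gamma_k})=\bra{\gamma_k}\rho\cket{\gamma_k}$. The one point worth stating cleanly is that the sum may be taken over \emph{all} $k\in K$, including those with $q_k=0$, since such terms contribute nothing; this is consistent with the equivalence noted below~\eqref{eq:LTP} between the two forms of the law of total probability, and it is why~\eqref{eq:LTPrhovec} is written with an unrestricted sum rather than one restricted to causes of nonzero probability.
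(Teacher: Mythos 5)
Your proposal is correct and follows essentially the same route as the paper's proof: substitute $E(\rho)=\sum_k\proj{\gamma_k}\rho\proj{\gamma_k}=\sum_k q_k\proj{\gamma_k}$ into $\Tr(E(\rho)X)$ and use $\Tr(\proj{\gamma_k}X)=\bra{\gamma_k}X\cket{\gamma_k}$. The additional remark about including zero-probability terms in the sum is a harmless and accurate clarification.
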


\begin{proof}
To see this, we have for the form~\eqref{eq:LTPrho} that
$\Tr(E(\rho)X)
=\Tr(\sum_k C_k\rho C_k X)
=\Tr(\sum_k \proj{\gamma_k}\rho\proj{\gamma_k} X)
=\sum_k \bra{\gamma_k}\rho\cket{\gamma_k}\Tr(\proj{\gamma_k} X)$,
which equals to the left-hand side of~\eqref{eq:LTPrhovec}.
\end{proof}

\textit{In two-qubit systems} (see Section~\ref{sec:prob.2x2}),
the law of total probability~\eqref{eq:LTPrho} takes the form
\begin{equation}
\label{eq:LTPrho22}
\diag(E(\rho)) = \diag(\rho),
\end{equation}
where $\diag(Y)=(\bra{00}Y\cket{00},\bra{01}Y\cket{01},\bra{10}Y\cket{10},\bra{11}Y\cket{11})$
is the tuple of diagonal elements of the matrix of the operator,
in the computational basis, see~\eqref{eq:2x2ABsubalg}.

Now let us consider some special cases related to atomisticity of the event lattice.

\begin{lem}
\label{lem:2x2LTPSpec}
In two-qubit systems (see Section~\ref{sec:prob.2x2}),
for the \emph{general CCS} $\set{C_k}_k$ of the events $A$ and $B$
and the \emph{pure state} $\rho=\proj{\psi}$,
the law of total probability for $A$ and $B$ takes the form
\begin{equation}
\label{eq:2x2LTPpure}
\sum_{k\in K} \abs{\skalp{ij}{\psi_k}}^2 =\abs{\skalp{ij}{\psi}}^2 \qquad\forall i,j\in\set{0,1},
\end{equation}
with $\cket{\psi_k}=C_k\cket{\psi}$.
For the \emph{atomic CCS} $\set{C_k=\proj{\gamma_k}}_k$ of the events $A$ and $B$
and the \emph{general state} $\rho$,
the law of total probability for $A$ and $B$ takes the form
\begin{equation}
\label{eq:2x2LTPatomic}
\sum_{k\in K} q_k \abs{\skalp{ij}{\gamma_k}}^2 =\bra{ij}\rho\cket{ij} \qquad\forall i,j\in\set{0,1},
\end{equation}
where $q_k := \bra{\gamma_k}\rho\cket{\gamma_k}$, being the probability of $C_k=\proj{\gamma_k}_k$.
For the \emph{atomic CCS} $\set{C_k=\proj{\gamma_k}}_k$ of the events $A$ and $B$
and the \emph{pure state} $\rho=\proj{\psi}$,
the law of total probability for $A$ and $B$
takes the form
\begin{equation}
\label{eq:2x2LTP}
\sum_{k\in K} q_k \abs{\skalp{ij}{\gamma_k}}^2 =\abs{\skalp{ij}{\psi}}^2 \qquad\forall i,j\in\set{0,1},
\end{equation}
where $q_k := \abs{\skalp{\psi}{\gamma_k}}^2$, being the probability of $C_k=\proj{\gamma_k}_k$.
\end{lem}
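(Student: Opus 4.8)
The plan is to start from the compact two-qubit form of the law of total probability given in~\eqref{eq:LTPrho22}, namely $\diag(E(\rho))=\diag(\rho)$ with the conditional expectation $E(\rho)=\sum_k C_k\rho C_k$ from~\eqref{eq:cond}. Since by~\eqref{eq:2x2ABsubalg} the four events $AB,AB^\perp,A^\perp B,A^\perp B^\perp$ are precisely the rank-one projections $\proj{ij}$ onto the computational basis, the condition $\Tr(E(\rho)X)=\Tr(\rho X)$ from~\eqref{eq:LTPrho} with $X=\proj{ij}$ becomes $\bra{ij}E(\rho)\cket{ij}=\bra{ij}\rho\cket{ij}$ for all $i,j\in\set{0,1}$, using $\Tr(E(\rho)\proj{ij})=\bra{ij}E(\rho)\cket{ij}$. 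So the whole statement reduces to expanding the diagonal entries of $E(\rho)=\sum_k C_k\rho C_k$ in each of the three special situations and reading off $\bra{ij}E(\rho)\cket{ij}$.

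For the first identity (general CCS, pure state) I would substitute $\rho=\proj{\psi}$, so that each summand is $C_k\rho C_k=C_k\cket{\psi}\bra{\psi}C_k=\proj{\psi_k}$ with $\cket{\psi_k}=C_k\cket{\psi}$, using that the projection is self-adjoint, $C_k^\dagger=C_k$. Then $\bra{ij}E(\rho)\cket{ij}=\sum_k\abs{\skalp{ij}{\psi_k}}^2$, while the right-hand side is $\bra{ij}\proj{\psi}\cket{ij}=\abs{\skalp{ij}{\psi}}^2$, which is exactly~\eqref{eq:2x2LTPpure}.

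For the second identity (atomic CCS, general state) I would instead use $C_k=\proj{\gamma_k}$, so that $C_k\rho C_k=\proj{\gamma_k}\rho\proj{\gamma_k}=\bra{\gamma_k}\rho\cket{\gamma_k}\proj{\gamma_k}=q_k\proj{\gamma_k}$ with $q_k:=\bra{\gamma_k}\rho\cket{\gamma_k}$. This gives $\bra{ij}E(\rho)\cket{ij}=\sum_k q_k\abs{\skalp{ij}{\gamma_k}}^2$, equal to $\bra{ij}\rho\cket{ij}$, which is~\eqref{eq:2x2LTPatomic}; equivalently this is just the two-qubit specialization of Lemma~\ref{lem:dxdLTPatomicgen} with $X=\proj{ij}$. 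The third identity then follows by combining the two simplifications: for atomic $C_k=\proj{\gamma_k}$ and pure $\rho=\proj{\psi}$ the weight becomes $q_k=\bra{\gamma_k}\rho\cket{\gamma_k}=\skalp{\gamma_k}{\psi}\skalp{\psi}{\gamma_k}=\abs{\skalp{\psi}{\gamma_k}}^2$ and the right-hand side becomes $\abs{\skalp{ij}{\psi}}^2$, giving~\eqref{eq:2x2LTP}.

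I do not expect any genuine obstacle: each identity is a direct diagonal-entry computation, entirely parallel to the treatment of the conditional probabilities in Lemmas~\ref{lem:dxdcondphiSpec} and~\ref{lem:2x2condphiSpec}. Working from the $E(\rho)$ form~\eqref{eq:LTPrho}--\eqref{eq:LTPrho22}, the sums automatically run over all $k\in K$, and the terms with $\phi(C_k)=0$ drop out on their own (since then $\cket{\psi_k}=C_k\cket{\psi}=0$ in the pure case, as $\norm{\psi_k}^2=\Tr(\rho C_k)=\phi(C_k)$, and $q_k=\bra{\gamma_k}\rho\cket{\gamma_k}=0$ in the atomic case), so no separate restriction to nonzero-probability causes is needed. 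The only detail worth stating explicitly is that $q_k$ is indeed the probability $\bra{\gamma_k}\rho\cket{\gamma_k}$ of the atomic cause $C_k$, consistent with the normalization $\norm{\psi_k}^2=\Tr(\rho C_k)$ noted after Lemma~\ref{lem:dxdcondphiSpec}.
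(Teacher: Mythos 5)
Your proposal is correct and follows essentially the same route as the paper: the paper's proof likewise just specializes \eqref{eq:LTPrho} (for the first identity) and \eqref{eq:LTPrhovec} from Lemma~\ref{lem:dxdLTPatomicgen} (for the second and third) to the two-qubit case where the events are the basis projections \eqref{eq:2x2ABsubalg}. You merely spell out the diagonal-entry computations $C_k\proj{\psi}C_k=\proj{\psi_k}$ and $\proj{\gamma_k}\rho\proj{\gamma_k}=q_k\proj{\gamma_k}$ that the paper leaves implicit, and your remarks about the zero-probability terms and the identification of $q_k$ are consistent with the paper's conventions.
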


\begin{proof}
These are simply~\eqref{eq:LTPrho} in the first, and~\eqref{eq:LTPrhovec} in the second and third cases,
applied for two-qubit systems, where we have~\eqref{eq:2x2ABsubalg}.
\end{proof}

It must be obvious from the general definition of the common cause~\eqref{eq:screen} and the law of total probability~\eqref{eq:LTP} that the two concepts are logically independent. We know that in the (weakly) commutative case the law of total probability holds for any CCS (Proposition~\ref{prop:CommLTP}), whereas in the noncommutative case this is typically not the case. For example, the CCS given by \citet{Hofer-Szabo-2012,Hofer-Szabo-2013b} for the EPR scenario did violate the law of total probability; which was the basis of the criticism of \citet{Cavalcanti-2014} and \citet{Lazarovici-2014} against the concept of the noncommuting common cause. But since the concept of noncommuting common cause is logically independent of the law of total probability, this counter-example does not mean that \emph{any} noncommuting common cause a priori violates the the law of total probability. And indeed, as we show below, there exist noncommuting common causes which satisfy the law of total probability. Thus, the satisfaction of the law of total probability can count as an extra filter to search for common causal explanations for correlations.

% ******************************************************************************
\subsection{Examples}
\label{sec:CCSLTP.examples}

Now we provide some examples in two-qubit systems (see Section~\ref{sec:prob.2x2})
to show that being a CCS and satisfying the law of total probability are independent notions.  
Namely, we provide examples in turn for 
an atomic partition which is not a CCS;
two partitions which are trivial CCSs; one product atomic (so strongly trivial) and one weakly trivial nonproduct atomic;
and then two partitions which are nontrivial CCSs, one atomic and one nonatomic---all the five satisfying the law of total probability for respective states.
(See also Table~\ref{tab:examples} in Appendix~\ref{appsec:2x2.summary} for convenience.)
Recall that
the atomic CCSs~\eqref{eq:CCSclass}-\eqref{eq:CCShyper} of Section~\ref{sec:CCSprod} are weakly trivial: they screen off \emph{any} correlation of the events $A$ and $B$ \eqref{eq:2x2AB} regardless the state which describes the correlation. Atomic CCSs, however, do not necessarily satisfy the law of total probability for \emph{any} state. Therefore, for the analysis of the law of total probability, in our examples we need to specify also the state (see~\eqref{eq:LTPrhovec}). 

The \textit{first example}
is a partition satisfying the law of total probability,
although \emph{not} being a CCS of the events $A$ and $B$ \eqref{eq:2x2AB}. 
Let us have the vectors
\begin{subequations}
\label{eq:CLTP}
\begin{align}
\cket{\gamma_0}&:=\frac1{\sqrt{2}}\bigl( \cket{00} +i\cket{11} \bigr),\\
\cket{\gamma_1}&:=\frac1{\sqrt{2}}\bigl(i\cket{01} + \cket{10} \bigr),\\
\cket{\gamma_2}&:=\frac1{\sqrt{2}}\bigl(i\cket{01} - \cket{10} \bigr),\\
\cket{\gamma_3}&:=\frac1{\sqrt{2}}\bigl( \cket{00} -i\cket{11} \bigr),
\end{align}
\end{subequations}
then $\set{C_k=\proj{\gamma_k}}_k$ is an atomic partition. 
It is \emph{not} a CCS of the events $A$ and $B$ \eqref{eq:2x2AB},
which can be checked by~\eqref{eq:2x2SCR},
as $\det_{ij}(|\skalp{ij}{\gamma_k}|^2)=\pm1/4$.
It consists of nonproduct projections, which can be checked by~\eqref{eq:2x2NP},
as $\det_{ij}(\skalp{ij}{\gamma_k})=\pm i/2$.
We also have that $[C_k,A]\neq0$, $[C_k,B]\neq0$, which can be checked by~\eqref{eq:2x2commutCCS}.
For the law of total probability we also have to specify the state.
Let us have the state vector 
\begin{equation}
\label{eq:CLTPstate}
\cket{\psi}:= \frac1{\sqrt{2}}\bigl( a\cket{00} + b\cket{01} + b\cket{10} + a\cket{11} \bigr)
\end{equation}
with the parameters $a,b\in\field{C}$, $\abs{a}^2+\abs{b}^2=1$.
These are entangled if and only if $b\neq\pm a$, which can be checked by~\eqref{eq:2x2SEP},
as $\det_{ij}(\skalp{ij}{\psi})=(a^2-b^2)/2$.
In this state, $A$ and $B$ are correlated if and only if $\abs{b}\neq\abs{a}$, which can be checked by~\eqref{eq:2x2COV},
as $\Delta_\psi(A,B)=\det_{ij}(|\skalp{ij}{\psi}|^2)=(\abs{a}^4-\abs{b}^4)/4 = (\abs{a}^2-\abs{b}^2)/4$.
The law of total probability~\eqref{eq:2x2LTP} holds for $\set{C_k=\proj{\gamma_k}}_k$ given by~\eqref{eq:CLTP}
in the state $\rho=\proj{\psi}$ given by~\eqref{eq:CLTPstate},
which can be checked by~\eqref{eq:2x2LTP},
as, denoting $\omega:=(1-i)/\sqrt{2}$,
$q_k=\abs{\skalp{\psi}{\gamma_k}}^2 = (\abs{a\omega}^2,\abs{b\Cc{\omega}}^2,\abs{b\omega}^2,\abs{a\Cc{\omega}}^2)_k/2
=(\abs{a}^2,\abs{b}^2,\abs{b}^2,\abs{a}^2)_k/2$;
then $\bigl(q_0|\skalp{ij}{\gamma_0}|^2+q_3|\skalp{ij}{\gamma_3}|^2\bigr) +
\bigl(q_1|\skalp{ij}{\gamma_1}|^2+q_2|\skalp{ij}{\gamma_2}|^2\bigr) =
\abs{a}^2 (1,0,0,1)_{ij}/2 + \abs{b}^2 (0,1,1,0)_{ij}/2$,
which is just $|\skalp{ij}{\psi}|^2=(\abs{a}^2,\abs{b}^2,\abs{b}^2,\abs{a}^2)_{ij}/2$.

The \textit{second example}
is a one-parameter family of product atomic (then strongly trivial) CCSs of the events $A$ and $B$ \eqref{eq:2x2AB},
which satisfies the law of total probability.
The CCS given by~\eqref{eq:CCSclass}, being classical, obeys the law of total probability for all states $\rho$,
as can easily be seen also by direct calculation.
On the other hand, one can find pure states for which the locally transformed product CCS given by~\eqref{eq:CCSclassU}
obeys the law of total probability.
Such state vectors can be found among the ones which are of the form
\begin{equation}
\label{eq:CCSclassULTPstate}
\cket{\psi} := \frac1{\sqrt2} \bigl(a \cket{00} + b \cket{01} + b \cket{10} - a \cket{11}\bigr),
\end{equation}
with the parameters $a,b\in\field{R}$, $a^2+b^2=1$.
These are always entangled, which can be checked by~\eqref{eq:2x2SEP},
as $\det_{ij}(\skalp{ij}{\psi})=-1/2$.
In these states, the events $A$ and $B$ are correlated if and only if $\abs{b}\neq\abs{a}$, which can be checked by~\eqref{eq:2x2COV},
as $\Delta_\psi(A,B)=\det_{ij}(|\skalp{ij}{\psi}|^2)=(\abs{a}^4-\abs{b}^4)/4 = (\abs{a}^2-\abs{b}^2)/4$.
The parameters $a,b$ for which the state~\eqref{eq:CCSclassULTPstate}
and the CCS~\eqref{eq:CCSclassU} obey the law of total probability for the events $A$ and $B$
can be obtained as functions of the $c,s$ parameters of the CCS~\eqref{eq:CCSclassU}.
For the proof and the general formulas, see Appendix~\ref{appsec:2x2.CCSLTP},
here we just show a particular solution.
Let us have the vectors
\begin{subequations}
\label{eq:CCSclassUspec}
\begin{align}
\cket{\gamma_0} &:= \frac1{2\sqrt2} \bigl(   (\sqrt{2}+1) \cket{00} + \cket{01} + \cket{10} + (\sqrt{2}-1) \cket{11} \bigr),\\
\cket{\gamma_1} &:= \frac1{2\sqrt2} \bigl( - \cket{00} + (\sqrt{2}+1) \cket{01} - (\sqrt{2}-1) \cket{10} + \cket{11} \bigr),\\
\cket{\gamma_2} &:= \frac1{2\sqrt2} \bigl( - \cket{00} - (\sqrt{2}-1) \cket{01} + (\sqrt{2}+1) \cket{10} + \cket{11} \bigr),\\
\cket{\gamma_3} &:= \frac1{2\sqrt2} \bigl(   (\sqrt{2}-1) \cket{00} - \cket{01} - \cket{10} + (\sqrt{2}+1) \cket{11} \bigr),
\end{align}
\end{subequations}
then $\set{C_k=\proj{\gamma_k}}_k$ is a noncommuting product atomic CCS of the events $A$ and $B$ \eqref{eq:2x2AB},
as follows from the general properties of the CCS given by~\eqref{eq:CCSclassU},
although it may also be illustrative to check this directly by~\eqref{eq:2x2commutCCS}, \eqref{eq:2x2NP} and~\eqref{eq:2x2SCR}.
Let us also have the state vector
\begin{equation}
\label{eq:CCSclassULTPstatespec}
\cket{\psi} = \frac1{2\sqrt[4]{5}}
 \Bigl( \sqrt{\sqrt{5}+1}\cket{00} + \sqrt{\sqrt{5}-1}\cket{01} + \sqrt{\sqrt{5}-1}\cket{10} - \sqrt{\sqrt{5}+1}\cket{11}\Bigr).
\end{equation}
It is entangled, which can be checked by~\eqref{eq:2x2SEP},
as $\det_{ij}(\skalp{ij}{\psi})=-1/2$.
In this state, the events $A$ and $B$ are correlated, which can be checked by~\eqref{eq:2x2COV},
as $\Delta_\psi(A,B)=\det_{ij}(|\skalp{ij}{\psi}|^2)=1/(4\sqrt5)$.
The law of total probability~\eqref{eq:2x2LTP} holds for $\set{C_k=\proj{\gamma_k}}_k$ given by~\eqref{eq:CCSclassUspec}
in the state $\rho=\proj{\psi}$ given by~\eqref{eq:CCSclassULTPstatespec},
which follows from the general construction (see Appendix~\ref{appsec:2x2.CCSLTP}),
although it may also be illustrative to this check directly by~\eqref{eq:2x2LTP},
as $q_k=\abs{\skalp{\psi}{\gamma_k}}^2 = (\sqrt5+2,\sqrt5-2,\sqrt5-2,\sqrt5+2)_k/(4\sqrt5)$;
then $\bigl(q_0|\skalp{ij}{\gamma_0}|^2+q_3|\skalp{ij}{\gamma_3}|^2\bigr) +
\bigl(q_1|\skalp{ij}{\gamma_1}|^2+q_2|\skalp{ij}{\gamma_2}|^2\bigr) =
(\sqrt5+2)/(16\sqrt5) ( 3,1,1,3 )_{ij} + (\sqrt5-2)/(16\sqrt5) ( 1,3,3,1 )_{ij}$,
which is just $|\skalp{ij}{\psi}|^2=(\sqrt5+1,\sqrt5-1,\sqrt5-1,\sqrt5+1)_{ij}/(4\sqrt5)$.

The \textit{third example}
is a one-parameter family of weakly trivial nonproduct atomic CCS of the events $A$ and $B$ \eqref{eq:2x2AB},
which satisfies the law of total probability.
Recall that the product CCS of the events $A$ and $B$ \eqref{eq:2x2AB} given by~\eqref{eq:CCSclassU}, which was also used in the previous example,
could be transformed into the nonproduct CCS given by~\eqref{eq:CCStwist} by the diagonal unitary $V$, which makes it nonproduct.
One may see that the law of total probability~\eqref{eq:2x2LTP} is invariant to such a transformation, if the state vector is also transformed.
So transforming also the state vector~\eqref{eq:CCSclassULTPstate} used in the previous example by the same diagonal unitary $V$,
we get states which, together with the nonproduct atomic CCS~\eqref{eq:CCStwist} satisfies the law of total probability.
These state vectors are of the form
\begin{equation}
\label{eq:CCStwistLTPstate}
\cket{\psi} := \frac1{\sqrt2} \bigl(a \cket{00} + b \cket{01} + b \cket{10} + a \cket{11}\bigr),
\end{equation}
with the parameters $a,b\in\field{R}$, $a^2+b^2=1$.
These are entangled if and only if $b\neq\pm a$, which can be checked by~\eqref{eq:2x2SEP},
as $\det_{ij}(\skalp{ij}{\psi})=(a^2-b^2)/2$.
In these states, the events $A$ and $B$ are correlated if and only if $\abs{b}\neq\abs{a}$, which can be checked by~\eqref{eq:2x2COV},
as $\Delta_\psi(A,B)=\det_{ij}(|\skalp{ij}{\psi}|^2)=(\abs{a}^4-\abs{b}^4)/4 = (\abs{a}^2-\abs{b}^2)/4$.
The parameters $a,b$ for which the state~\eqref{eq:CCStwistLTPstate}
and the CCS~\eqref{eq:CCStwist} obey the law of total probability for the events $A$ and $B$
are exactly the same functions of the $c,s$ parameters of the CCS~\eqref{eq:CCStwist}
as in the previous example (see Appendix~\ref{appsec:2x2.CCSLTP}),
because of the invariance of the law of total probability~\eqref{eq:2x2LTP} to the transformation by the diagonal unitary $V$.
Also, the same transformation of the particular solution for the CCS~\eqref{eq:CCSclassUspec} with the state vector~\eqref{eq:CCSclassULTPstatespec} of the previous example
gives a particular solution here.
So let us have the vectors as the transformation of the vectors~\eqref{eq:CCSclassUspec} by the same diagonal unitary $V$
\begin{subequations}
\label{eq:CCStwistLTPspec}
\begin{align}
\cket{\gamma_0} &= \frac1{2\sqrt2} \bigl(   (\sqrt{2}+1) \cket{00} + \cket{01} + \cket{10} - (\sqrt{2}-1) \cket{11} \bigr),\\
\cket{\gamma_1} &= \frac1{2\sqrt2} \bigl( - \cket{00} + (\sqrt{2}+1) \cket{01} - (\sqrt{2}-1) \cket{10} - \cket{11} \bigr),\\
\cket{\gamma_2} &= \frac1{2\sqrt2} \bigl( - \cket{00} - (\sqrt{2}-1) \cket{01} + (\sqrt{2}+1) \cket{10} - \cket{11} \bigr),\\
\cket{\gamma_3} &= \frac1{2\sqrt2} \bigl(   (\sqrt{2}-1) \cket{00} - \cket{01} - \cket{10} - (\sqrt{2}+1) \cket{11} \bigr),
\end{align}
\end{subequations}
then $\set{C_k=\proj{\gamma_k}}_k$ is a noncommuting nonproduct atomic CCS of the events $A$ and $B$ \eqref{eq:2x2AB},
as follows from the general properties of the CCS given by~\eqref{eq:CCStwist},
although it may also be illustrative to check this directly by~\eqref{eq:2x2commutCCS}, \eqref{eq:2x2NP} and~\eqref{eq:2x2SCR}.
Let us also have the state vector as the transformation of the state vector given in~\eqref{eq:CCSclassULTPstatespec} by the same diagonal unitary $V$
\begin{equation}
\label{eq:CCStwitLTPspecpsi}
\cket{\psi} = \frac1{2\sqrt[4]{5}}
 \Bigl( \sqrt{\sqrt{5}+1}\cket{00} + \sqrt{\sqrt{5}-1}\cket{01} + \sqrt{\sqrt{5}-1}\cket{10} + \sqrt{\sqrt{5}+1}\cket{11}\Bigr).
\end{equation}
It is entangled, which can be checked by~\eqref{eq:2x2SEP},
as $\det_{ij}(\skalp{ij}{\psi})=1/(2\sqrt5)$.
In this state, the events $A$ and $B$ are correlated, which can be checked by~\eqref{eq:2x2COV},
as $\Delta_\psi(A,B)=\det_{ij}(|\skalp{ij}{\psi}|^2)=1/(4\sqrt5)$.
(This is also invariant to the transformation by $V$.)
The law of total probability~\eqref{eq:2x2LTP} holds for $\set{C_k=\proj{\gamma_k}}_k$ given by~\eqref{eq:CCStwistLTPspec}
in the state $\rho=\proj{\psi}$ given by~\eqref{eq:CCStwitLTPspecpsi},
which follows from the general construction (see Appendix~\ref{appsec:2x2.CCSLTP}),
although it may also be illustrative to check this directly by~\eqref{eq:2x2LTP},
which is basically the same calculation as in the end of the previous paragraph.

The \textit{fourth example}
is the one-parameter family of nontrivial (then nonproduct) atomic CCSs given by~\eqref{eq:CCSntrat},
in the system described by the state given by the state vector~\eqref{eq:Bell0}.
The law of total probability~\eqref{eq:2x2LTP} holds for these, which follows from the weak commutativity by Proposition~\ref{prop:CommLTP},
although it may also be illustrative to check this directly, by~\eqref{eq:2x2LTP},
as $q_1=q_2=0$, since $\skalp{\psi}{\gamma_1}=\skalp{\psi}{\gamma_2}=0$, see~\eqref{eq:CCSntrat}.

The \textit{fifth example}
is the one-parameter family of nontrivial nonatomic CCSs given by~\eqref{eq:CCS22ntrat},
in the system described by the state given by the state vector~\eqref{eq:Bell0}.
The law of total probability~\eqref{eq:2x2LTPpure} holds for these,
as $\cket{\psi_+}=C_+\cket{\psi}=\cket{\gamma_0}/\sqrt2=\cket{00}/\sqrt2$ and
$\cket{\psi_-}=C_-\cket{\psi}=\cket{\gamma_3}/\sqrt2=\cket{11}/\sqrt2$,
since
$\skalp{\psi}{\gamma_1}=\skalp{\psi}{\gamma_2}=0$, see~\eqref{eq:CCSntrat}.

% ******************************************************************************
% ******************************************************************************
\section{Indeterministic noncommuting common causes of perfect correlations}
\label{sec:CCSperfcorr}

In this section, we turn to the third question, prompted by the Maudlin-Werner debate: Can perfect correlations have indeterministic noncommuting common causes?
To recall, as Maudlin correctly observes, the EPR shows that any common cause accounting for a perfect correlation must be deterministic.
However, EPR's notion of the common cause is commuting. As we will show in this section, if we adopt also noncommuting common causes, then determinism will \textit{not} follow from perfect correlation. Thus, the EPR argument provides an argument for determinism only if we adopt classicality from the outset, that is, if we assume that both the correlating events and the common cause are in the very same commutative subalgebra. Otherwise, determinism does not follow. 
In the following, first we define \emph{maximal} and \emph{perfect correlations} in the general theory (see Section~\ref{sec:prob.gen}), in discrete finite systems (see Section~\ref{sec:prob.dxd}) and in two-qubit systems (see Section~\ref{sec:prob.2x2}) in turn.
Next, we provide some examples for indeterministic noncommuting CCSs of perfect correlations. 
All the examples are formulated in two-qubit systems (see Section~\ref{sec:prob.2x2}).

% ******************************************************************************
\subsection{Maximal and perfect correlations}
\label{sec:CCSperfcorr.theory}

\textit{In the general theory} (see Section~\ref{sec:prob.gen}),
we have already seen
that the correlation~\eqref{eq:Delta} of two events are bounded by $\pm1/4$, see~\eqref{eq:Deltabound}.
\begin{subequations}
The events $A$ and $B$ are \emph{maximally correlated} if $\Delta_\phi(A,B)=1/4$, which holds if and only if
\begin{equation}
\label{eq:maxcorr}
\phi(AB) = \phi(A^\perp B^\perp) = 1/2,
\end{equation}
then $\phi(AB^\perp)=\phi(A^\perp B)=0$ also holds,
so $\phi(AB)=\phi(A^\perp B^\perp)=\phi(A)=\phi(B)=\phi(A^\perp)=\phi(B^\perp)=1/2$.
The events $A$ and $B$ are \emph{maximally anticorrelated} if $\Delta_\phi(A,B)=-1/4$, which holds if and only if
\begin{equation}
\label{eq:maxanticorr}
\phi(AB^\perp)=\phi(A^\perp B)=1/2,
\end{equation}
then $\phi(AB) = \phi(A^\perp B^\perp) = 0$ also holds,
so $\phi(AB^\perp)=\phi(A^\perp B)=\phi(A)=\phi(B)=\phi(A^\perp)=\phi(B^\perp)=1/2$.
\end{subequations}
The events $A$ and $B$ are \emph{perfectly correlated} if
\begin{subequations}
\begin{equation}
\label{eq:perfcorr}
\phi(AB) = \phi(A) = \phi(B),
\end{equation}
(equivalently $\phi(A^\perp B^\perp) = \phi(A^\perp) = \phi(B^\perp)$)
then \emph{equivalently} $\phi(AB^\perp)=\phi(A^\perp B)=0$ also holds,
but $\phi(A^\perp B^\perp)\neq\phi(AB)$ in general,
so maximally correlated events are perfectly correlated, but the reverse is not true.
The correlation of perfectly correlated events is 
$\Delta_\phi(A,B)=\phi(AB)(1-\phi(AB))=\phi(AB)\phi(A^\perp B^\perp)\in[0,1/4]$,
or the same formula with $\phi(A)$ or $\phi(B)$.
We also have the equivalent condition $\phi(AB+A^\perp B^\perp)=1$ for perfect correlation.
The events $A$ and $B$ are \emph{perfectly anticorrelated} if
\begin{equation}
\label{eq:perfanticorr}
\phi(AB^\perp) = \phi(A) = \phi(B^\perp),
\end{equation}
\end{subequations}
(equivalently $\phi(A^\perp B) = \phi(A^\perp) = \phi(B)$)
then \emph{equivalently} $\phi(AB)=\phi(A^\perp B^\perp)=0$ also holds,
but $\phi(AB^\perp)\neq\phi(A^\perp B)$ in general,
so maximally anticorrelated events are perfectly anticorrelated, but the reverse is not true.
The correlation of perfectly anticorrelated events is 
$\Delta_\phi(A,B)=-\phi(AB^\perp)(1-\phi(AB^\perp))=-\phi(AB^\perp)\phi(A^\perp B)\in[-1/4,0]$,
or the same formula with $\phi(A)$ or $\phi(B^\perp)$.
We also have the equivalent condition $\phi(AB^\perp+A^\perp B)=1$ for perfect anticorrelation.

It is clear that the maximal or perfect anticorrelation of events $A$ and $B$
is just the maximal or perfect correlation of events $A$ and $B^\perp$,
so, from now on, we deal only with maximal or perfect correlation.

\begin{prop}
\label{prop:classmaxodet}
In the general theory (see Section~\ref{sec:prob.gen}),
if the events $A,B\in\mathcal{P}(\mathcal{N})$ are perfectly correlated~\eqref{eq:perfcorr}
and they possess a \emph{weakly commuting} CCS~\eqref{eq:commutCCS},
then the CCS is deterministic~\eqref{eq:determCCS}.
\end{prop}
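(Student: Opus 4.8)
The plan is to exploit Proposition~\ref{prop:CommLTP}, which already guarantees that a weakly commuting CCS obeys the law of total probability~\eqref{eq:LTP}. This is the bridge that lets me pass from a statement about the unconditional probabilities to statements about the conditional ones, and it is precisely the ingredient that would fail in the genuinely noncommuting case.

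First I would rewrite perfect correlation~\eqref{eq:perfcorr} in its equivalent form $\phi(AB^\perp)=\phi(A^\perp B)=0$. Then, applying the law of total probability~\eqref{eq:LTP.gen} (available by Proposition~\ref{prop:CommLTP}) to $X=AB^\perp$ gives $0=\phi(AB^\perp)=\sum_{k:\phi(C_k)\neq0}\phi(C_k)\phi(AB^\perp|C_k)$. Since every summand is a product of nonnegative numbers, each summand must vanish, and because $\phi(C_k)\neq0$ on the range of the sum, I conclude $\phi(AB^\perp|C_k)=0$ for every $k$ with $\phi(C_k)\neq0$. The same argument applied to $X=A^\perp B$ yields $\phi(A^\perp B|C_k)=0$.

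Next I would feed these two vanishing conditional probabilities into the screening-off condition. Using $A=AB+AB^\perp$ and $B=AB+A^\perp B$ together with the linearity of the conditional probability, I get $\phi(A|C_k)=\phi(B|C_k)=\phi(AB|C_k)$ in each conditional state of nonzero probability. Substituting into the screening-off~\eqref{eq:screen.original} then gives $\phi(AB|C_k)=\phi(AB|C_k)^2$, whence $\phi(AB|C_k)\in\set{0,1}$, and therefore $\phi(A|C_k)=\phi(B|C_k)\in\set{0,1}$. This is exactly determinism~\eqref{eq:determCCS}.

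The only genuinely delicate step is the one that converts global information into local information: that each term of the nonnegative sum vanishes is immediate, but it rests entirely on having the law of total probability at hand, which is why weak commutativity—rather than mere screening-off—is indispensable. Everything after that is elementary algebra of conditional probabilities within a single fixed conditional state; in particular, no further appeal to the noncommutative structure is needed once $\phi(AB^\perp|C_k)$ and $\phi(A^\perp B|C_k)$ are known to vanish.
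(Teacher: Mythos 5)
Your proof is correct, and it reaches the same algebraic endpoint as the paper's, but it is factored differently. The paper's proof of Proposition~\ref{prop:classmaxodet} does not invoke the law of total probability at all: it splits $\Id=C+D$ with $C:=\sum_{k:\phi(C_k)\neq0}C_k$ and $D:=\sum_{k:\phi(C_k)=0}C_k$, kills $\phi(AB^\perp D)$ by Lemma~\ref{lem:wcomm}, and then uses weak commutativity to turn $\phi(AB^\perp C)$ into the sum $\sum_{k:\phi(C_k)\neq0}\phi(C_kAB^\perp C_k)$ of nonnegative terms, each of which must vanish. You instead first cite Proposition~\ref{prop:CommLTP} to get the law of total probability~\eqref{eq:LTP} and then apply it to $X=AB^\perp$ and $X=A^\perp B$; since the proof of Proposition~\ref{prop:CommLTP} is exactly the $C$/$D$ decomposition plus Lemma~\ref{lem:wcomm} plus weak commutativity, the two arguments perform the identical computation, only packaged at different levels. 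Your packaging is in fact the route the authors themselves point out in the remark following Proposition~\ref{prop:CCSLTPPCdeterm} (``Note that Proposition~\ref{prop:classmaxodet} follows from this by Proposition~\ref{prop:CommLTP}''), and it has the advantage of isolating the law of total probability as the exact bridge from unconditional to conditional data, which is conceptually illuminating given the theme of the paper; the paper's direct proof has the advantage of being self-contained modulo Lemma~\ref{lem:wcomm}. The closing steps --- $\phi(A|C_k)=\phi(B|C_k)=\phi(AB|C_k)$ from $A=AB+AB^\perp$, $B=AB+A^\perp B$, and then $\phi(AB|C_k)=\phi(AB|C_k)^2\in\set{0,1}$ from the screening-off~\eqref{eq:screen.original} --- coincide in both proofs.
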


\begin{proof}
Let us have the two projections
$C:=\sum_{k:\phi(C_k)\neq0} C_k$ and $D:=\sum_{k:\phi(C_k)=0} C_k$.
For perfect correlation~\eqref{eq:perfcorr} we have $\phi(AB^\perp)=0$.
Then $0=\phi(AB^\perp)=\phi(AB^\perp C)+\phi(AB^\perp D)$,
where the second term is zero by Lemma~\ref{lem:wcomm}.
Then we have
$0= \phi(AB^\perp C) =\phi\bigl(AB^\perp\sum_{k:\phi(C_k)\neq0}C_k\bigr)
=\sum_{k:\phi(C_k)\neq0}\phi( AB^\perp C_k^2)=\sum_{k:\phi(C_k)\neq0} \phi\bigl(C_kAB^\perp C_k\bigr)$,
where at the last step the weak commutativity~\eqref{eq:wcommutCCS} of $\set{C_k}_k$ was used.
Then we have that $\sum_{k:\phi(C_k)\neq0} \phi(C_kAB^\perp C_k)=0$,
which is a sum of the nonnegative terms $\phi(C_kAB^\perp C_k)\geq0$,
so $\phi(C_kAB^\perp C_k)=0$ if $\phi(C_k)\neq0$.
Then
$\phi(C_kAC_k)=\phi(C_kABC_k)+\phi(C_kAB^\perp C_k)=\phi(C_kABC_k)$ if $\phi(C_k)\neq0$,
leading to
$\phi(A|C_k)=\phi(AB|C_k)$ if $\phi(C_k)\neq0$ by~\eqref{eq:condphi}.
Switching the roles of $A$ and $B$ we also have
$\phi(B|C_k) = \phi(AB|C_k)$ if $\phi(C_k)\neq0$.
Then the~\eqref{eq:screen.original} expression of the screening-off leads to the deterministicity~\eqref{eq:determCCS.AB}.
\end{proof}

Note that this proof relies on the weak commutativity of the CCS~\eqref{eq:commutCCS},
since for noncommuting CCSs $0=\sum_{k:\phi(C_k)\neq0}\phi( AB^\perp C_k^2)\neq
\sum_{k:\phi(C_k)\neq0} \phi\bigl(C_kAB^\perp C_k\bigr)$ in general.

\textit{In discrete finite systems} (see Section~\ref{sec:prob.dxd}),
the projections $A,B$ are perfectly correlated~\eqref{eq:perfcorr} in a state given by $\rho$,
if and only if
\begin{subequations}
\label{eq:dxdperfcorrABstate}
\begin{equation}
\label{eq:dxdperfcorrABstate.rho}
(AB+A^\perp B^\perp)\rho(AB+A^\perp B^\perp)=\rho,
\end{equation}
or, in a pure state given by the state vector $\cket{\psi}$,
if and only if
\begin{equation}
\label{eq:dxdperfcorrABstate.psi}
(AB+A^\perp B^\perp)\cket{\psi}=\cket{\psi}.
\end{equation}
\end{subequations}

\textit{In two-qubit systems} (see Section~\ref{sec:prob.2x2}), 
the states giving perfect correlation for $A$ and $B$~\eqref{eq:2x2AB}
are given by the density operators~\eqref{eq:dxdperfcorrABstate.rho} of the form
\begin{subequations}
\label{eq:2x2perfcorrABstate}
\begin{equation}
\label{eq:2x2perfcorrABstate.gen}
    \rho = \frac12\bigl((1+r_3)\proj{00} + (r_1-ir_2) \cket{00}\bra{11}+ 
           (r_1+ir_2) \cket{11}\bra{00} + (1-r_3)\proj{11}\bigr)
\end{equation}
with the parameters $r_1,r_2,r_3\in\field{R}$, $r_1^2+r_2^2+r_3^2 \leq 1$.
Then $\phi(AB)=\phi(A)=\phi(B)=\frac12(1+r_3)$ and
$\Delta(A,B)=(1-r_3^2)/4$, being maximal for $r_3=0$.
The state described by $\rho$ is pure if and only if $r_1^2+r_2^2+r_3^2 = 1$,
then it is given by the state vector~\eqref{eq:dxdperfcorrABstate.psi} of the form
\begin{equation}
\label{eq:2x2perfcorrABstate.pure}
    \cket{\psi} = x\cket{00} + y\cket{11}
\end{equation}
with the parameters $x,y\in\field{C}$, $\abs{x}^2+\abs{y}^2=1$.
\end{subequations}
Then $\phi(AB)=\phi(A)=\phi(B)=\abs{x}^2$ and
$\Delta(A,B)=\abs{x}^2(1-\abs{x}^2)=\abs{y}^2(1-\abs{y}^2)=\abs{x}^2\abs{y}^2$, being maximal for $\abs{x}=\abs{y}=1/\sqrt{2}$.

% ******************************************************************************
\subsection{Examples}
\label{sec:CCSperfcorr.examples}

Now, we provide some examples for indeterministic CCSs of perfect correlations in two-qubit systems (see Section~\ref{sec:prob.2x2}). 
We do not construct new examples here, we just go through the examples already constructed in Section~\ref{sec:CCSprod},
from the point of view of the deterministicity of the CCS.
(See also Table~\ref{tab:examples} in Appendix~\ref{appsec:2x2.summary} for convenience.)
Recall that the conditional probabilities~\eqref{eq:2x2phiCkABpure}-\eqref{eq:2x2phiCkABatomic} make sense only for conditions of nonzero probability.
Thus, even for trivial atomic CCSs, where the screening-off is independent of the state,
the conditional probabilities still depend on the state, although only through the condition $\bra{\gamma_k}\rho\cket{\gamma_k}\neq0$,
see~\eqref{eq:2x2phiCkABatomic} in Lemma~\ref{lem:2x2condphiSpec}.
Our last four examples are the state specific ones, which are about maximal \eqref{eq:maxcorr} (hence perfect \eqref{eq:perfcorr}) correlation.

The \textit{first example} 
is the one-parameter family of product atomic (then strongly trivial) CCSs given by~\eqref{eq:CCSclassU}.
Here, by~\eqref{eq:2x2phiCkABatomic}, we have
\begin{equation}
\label{eq:CCSclassU.phiCkAB}
\phi(A|C_k)=(c^2,c^2,s^2,s^2)_k,\qquad 
\phi(B|C_k)=(c^2,s^2,c^2,s^2)_k,
\end{equation}
so this CCS is deterministic if and only if $\theta\in \pi\field{Z}$, otherwise it is indeterministic.
These are exactly the parameter values for which it is commuting.
The particular case~\eqref{eq:CCSclass} is $\theta\in\pi\field{Z}$, then $\phi(A|C_k)=(1,1,0,0)_k$ and $\phi(B|C_k)=(1,0,1,0)_k$,
so the CCS~\eqref{eq:CCSclass} is deterministic.
The other particular case~\eqref{eq:CCSGabor} is $\theta\in\pi(\field{Z}+1/2)$, then $\phi(A|C_k)=\phi(B|C_k)=(1/2,1/2,1/2,1/2)_k$
so the CCS~\eqref{eq:CCSGabor} is maximally indeterministic.
Recall again that, by~\eqref{eq:2x2phiCkABatomic} in Lemma~\ref{lem:2x2condphiSpec},
we have these independently of the state, as far as $\bra{\gamma_k}\rho\cket{\gamma_k}\neq0$,
which fails among the states~\eqref{eq:2x2perfcorrABstate} giving perfect correlation for $A$ and $B$ only in a subset of zero measure.
(For those states, for the given $k$, the conditional probabilities $\phi(A|C_k)$ and $\phi(B|C_k)$ are undefined, and the values written above are invalid.)

The \textit{second example}
is the one-parameter family of weakly trivial nonproduct atomic CCSs given by~\eqref{eq:CCStwist}.
Here, by~\eqref{eq:2x2phiCkABatomic}, we have
\begin{equation}
\label{eq:CCStwist.phiCkAB}
\phi(A|C_k)=(c^2,c^2,s^2,s^2)_k,\qquad 
\phi(B|C_k)=(c^2,s^2,c^2,s^2)_k,
\end{equation}
so this CCS is deterministic if and only if $\theta\in \pi\field{Z}$, otherwise it is indeterministic.
These conditional probabilities are the same as~\eqref{eq:CCSclassU.phiCkAB} for the CCS given by~\eqref{eq:CCSclassU},
since these are of the form~\eqref{eq:condphiatomic},
which is invariant to the transformation by the diagonal unitary $V$.
(Indeed,
$\phi(X|(V\proj{\gamma_k}V^\dagger))=\bra{\gamma_k}V^\dagger X V\cket{\gamma_k}=\bra{\gamma_k}X\cket{\gamma_k}$
if $[X,V]=0$, which holds for the subalgebra generated by $A$ and $B$.)
However, here we have now a nonproduct CCS if $\theta\not\in\pi\field{Z}$.
Recall again that, by~\eqref{eq:2x2phiCkABatomic} in Lemma~\ref{lem:2x2condphiSpec},
we have these independently of the state, as far as $\bra{\gamma_k}\rho\cket{\gamma_k}\neq0$,
which fails among the states~\eqref{eq:2x2perfcorrABstate} giving perfect correlation for $A$ and $B$ only in a subset of zero measure.
(For those states, for the given $k$, the conditional probabilities $\phi(A|C_k)$ and $\phi(B|C_k)$ are undefined, and the values written above are invalid.)

The \textit{third example}
is the one-parameter family of weakly trivial nonproduct atomic CCSs given by~\eqref{eq:CCSBell}.
Here, by~\eqref{eq:2x2phiCkABatomic}, we have
\begin{equation}
\label{eq:CCSBell.phiCkAB}
\phi(A|C_k)=(c^2,c^2,s^2,s^2)_k,\qquad
\phi(B|C_k)=(1/2,1/2,1/2,1/2)_k,
\end{equation}
so this CCS is never deterministic, although the conditional probabilities of $A$ can be sharp for $\theta\in \pi\field{Z}$.
If $\theta\in\pi(\field{Z}+1/2)$, then $\phi(A|C_k)=\phi(B|C_k)=(1/2,1/2,1/2,1/2)_k$.
Recall again that, by~\eqref{eq:2x2phiCkABatomic} in Lemma~\ref{lem:2x2condphiSpec},
we have these independently of the state, as far as $\bra{\gamma_k}\rho\cket{\gamma_k}\neq0$,
which fails among the states~\eqref{eq:2x2perfcorrABstate} giving perfect correlation for $A$ and $B$ only in a subset of zero measure.
(For those states, for the given $k$, the conditional probabilities $\phi(A|C_k)$ and $\phi(B|C_k)$ are undefined, and the values written above are invalid.)

The \textit{fourth example}
is the one-parameter family of weakly trivial nonproduct atomic CCSs given by~\eqref{eq:CCShyper}.
Here, by~\eqref{eq:2x2phiCkABatomic}, we have
\begin{equation}
\label{eq:CCShyper.phiCkAB}
\begin{split}
\phi(A|C_k)&=N^2(\ee^{+\xi}+\ee^{+\zeta},\ee^{-\xi}+\ee^{-\zeta},\ee^{+\xi}+\ee^{+\zeta},\ee^{-\xi}+\ee^{-\zeta})_k,\\
\phi(B|C_k)&=N^2(\ee^{+\xi}+\ee^{-\zeta},\ee^{-\xi}+\ee^{+\zeta},\ee^{-\xi}+\ee^{+\zeta},\ee^{+\xi}+\ee^{-\zeta})_k,
\end{split}
\end{equation}
so this CCS is never deterministic, as $N^2=1/(\ee^{+\xi}+\ee^{+\zeta}+\ee^{-\xi}+\ee^{-\zeta})$.
If $\xi=\zeta=0$, then $\phi(A|C_k)=\phi(B|C_k)=(1/2,1/2,1/2,1/2)_k$.
Recall again that, by~\eqref{eq:2x2phiCkABatomic} in Lemma~\ref{lem:2x2condphiSpec},
we have these independently of the state, as far as $\bra{\gamma_k}\rho\cket{\gamma_k}\neq0$,
which fails among the states~\eqref{eq:2x2perfcorrABstate} giving perfect correlation for $A$ and $B$ only in a subset of zero measure.
(For those states, for the given $k$, the conditional probabilities $\phi(A|C_k)$ and $\phi(B|C_k)$ are undefined, and the values written above are invalid.)

The \textit{fifth example}
is the one-parameter family of nontrivial (then nonproduct) atomic CCSs given by~\eqref{eq:CCSntrat}
in the system described by the state given by the state vector~\eqref{eq:Bell0},
in which $A$ and $B$ are maximally correlated~\eqref{eq:maxcorr}, so also perfectly correlated~\eqref{eq:perfcorr}.
Here, by~\eqref{eq:2x2phiCkABatomic}, we have
\begin{equation}
\label{eq:CCSntrat.phiCkAB}
\phi(A|C_k)=(1,-,-,0)_k,\qquad
\phi(B|C_k)=(1,-,-,0)_k,
\end{equation}
so this CCS is deterministic.
Note that the values for $k\in\set{1,2}$ are undefined, since $\phi(C_k)=0$ then.

The \textit{sixth example}
is the one-parameter family of nontrivial (then nonproduct) atomic CCSs given by~\eqref{eq:CCSntratU}
in the system described by the state given by the state vector~\eqref{eq:Bell0}
in which $A$ and $B$ are maximally correlated~\eqref{eq:maxcorr}, so also perfectly correlated~\eqref{eq:perfcorr}.
Here, by~\eqref{eq:2x2phiCkABatomic}, we have
\begin{equation}
\label{eq:CCSntratU.phiCkAB}
\phi(A|C_k)=(1/2,-,-,1/2)_k,\qquad
\phi(B|C_k)=(1/2,-,-,1/2)_k,
\end{equation}
so this CCS is indeterministic.
Note that the values for $k\in\set{1,2}$ are undefined, since $\phi(C_k)=0$ then.

The \textit{seventh example} 
is the one-parameter family of nontrivial nonatomic (rank-two) CCSs given by~\eqref{eq:CCS22ntrat}
in the system described by the state given by the state vector~\eqref{eq:Bell0}
in which $A$ and $B$ are maximally correlated~\eqref{eq:maxcorr}, so also perfectly correlated~\eqref{eq:perfcorr}.
Here, by~\eqref{eq:2x2phiCkABpure}, we have
\begin{equation}
\label{eq:CCS22ntrat.phiCkAB}
\phi(A|C_k)=(1,0)_k,\qquad
\phi(B|C_k)=(1,0)_k,
\end{equation}
as $\cket{\psi_+}=C_+\cket{\psi}=\cket{\gamma_0}/\sqrt2=\cket{00}/\sqrt2$ and
$\cket{\psi_-}=C_-\cket{\psi}=\cket{\gamma_3}/\sqrt2=\cket{11}/\sqrt2$,
since
$\skalp{\gamma_1}{\psi}=\skalp{\gamma_2}{\psi}=0$, see~\eqref{eq:CCSntrat},
so this CCS is deterministic.

The \textit{eighth example}
is the one-parameter family of nontrivial nonatomic (rank-two) CCSs given by~\eqref{eq:CCS22ntratU}
in the system described by the state given by the state vector~\eqref{eq:Bell0}
in which $A$ and $B$ are maximally correlated~\eqref{eq:maxcorr}, so also perfectly correlated~\eqref{eq:perfcorr}.
Here, by~\eqref{eq:2x2phiCkABpure}, we have
\begin{equation}
\label{eq:CCS22ntratU.phiCkAB}
\phi(A|C_k)=(1/2,1/2)_k,\qquad 
\phi(B|C_k)=(1/2,1/2)_k,
\end{equation}
as $\cket{\psi_+}=C_+\cket{\psi}=\cket{\gamma_0}/\sqrt2=\cket{{+}{+}}/\sqrt2$ and
$\cket{\psi_-}=C_-\cket{\psi}=\cket{\gamma_3}/\sqrt2=\cket{{-}{-}}/\sqrt2$,
since
$\skalp{\gamma_1}{\psi}=\skalp{\gamma_2}{\psi}=0$, see~\eqref{eq:CCSntrat},
so this CCS is maximally indeterministic.

% ******************************************************************************
% ******************************************************************************
\section{Limitations of atomic common cause systems for perfect correlations}
\label{sec:limitations}

In the previous three sections, we separately analyzed three features of the noncommuting CCSs.
In Section~\ref{sec:CCSprod}, we have shown examples for nontrivial noncommuting CCSs;
in Section~\ref{sec:CCSLTP} for noncommuting CCSs satisfying the law of total probability;
and in Section~\ref{sec:CCSperfcorr} for noncommuting indeterministic CCSs accounting for perfect correlations.
In this section, we consider the interplay between the above properties
and we show that the law of total probability in case of perfect correlation leads to the deterministicity of the CCS,
and also to the (weak) commutativity of the \emph{atomic} CCS.
All the discussions and Propositions will be formulated for general systems (see Section~\ref{sec:prob.gen}) and discrete finite systems (see Section~\ref{sec:prob.dxd}).
Next, we provide some examples for the illustration of the deterministicity and the commutativity of CCSs in these cases.
All the examples are formulated in two-qubit systems (see Section~\ref{sec:prob.2x2}).

% ******************************************************************************
\subsection{Deterministicity and commutativity for perfect correlations}
\label{sec:limitations.theory}

In the next two Propositions, we show how the prescription of the law of total probability confines the CCSs to be deterministic and, if atomic, to be weakly commutative~\eqref{eq:wcommutCCS},
that is, it can contain noncommuting causes $C_k$ of zero probability only.

\begin{prop}
\label{prop:CCSLTPPCdeterm}
In the general theory (see Section~\ref{sec:prob.gen}), 
for states giving perfect correlation for $A$ and $B$ \eqref{eq:perfcorr},
the CCSs $\set{C_k}_k$
obeying the law of total probability~\eqref{eq:LTP}
are deterministic~\eqref{eq:determCCS}.
\end{prop}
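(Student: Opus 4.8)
The plan is to follow the skeleton of the proof of Proposition~\ref{prop:classmaxodet}, but to let the law of total probability do the work that weak commutativity did there. The pivotal observation is that perfect correlation~\eqref{eq:perfcorr} forces $\phi(AB^\perp)=\phi(A^\perp B)=0$; the law of total probability then redistributes these vanishing probabilities onto the individual conditions $C_k$, and the screening-off condition finishes the job. Note that $A$ and $B$ are commuting events (correlation is only defined for such), so $AB^\perp$ and $A^\perp B$ are genuine projections, a fact I will exploit for positivity.

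First I would write the law of total probability in its form~\eqref{eq:LTP.simple} for $X=AB^\perp$, obtaining $\sum_k\phi(C_k AB^\perp C_k)=\phi(AB^\perp)=0$ by linearity. Since $AB^\perp$ is a projection, $C_k AB^\perp C_k=\Cc{(AB^\perp C_k)}(AB^\perp C_k)$, so each summand $\phi(C_k AB^\perp C_k)\geq0$ by positivity of $\phi$; a sum of nonnegative terms that vanishes forces each term to vanish, giving $\phi(AB^\perp|C_k)=0$ whenever $\phi(C_k)\neq0$. The identical argument with $X=A^\perp B$ gives $\phi(A^\perp B|C_k)=0$. Next I would use the operator identities $A=AB+AB^\perp$ and $B=AB+A^\perp B$ to write $\phi(A|C_k)=\phi(AB|C_k)+\phi(AB^\perp|C_k)$ and $\phi(B|C_k)=\phi(AB|C_k)+\phi(A^\perp B|C_k)$; the two vanishing conditionals collapse these to $\phi(A|C_k)=\phi(B|C_k)=\phi(AB|C_k)$. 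Feeding this into the screening-off condition~\eqref{eq:screen.original}, $\phi(AB|C_k)=\phi(A|C_k)\phi(B|C_k)$, and setting $p:=\phi(A|C_k)=\phi(B|C_k)=\phi(AB|C_k)$, I get $p=p^2$, hence $p\in\set{0,1}$, which is exactly deterministicity~\eqref{eq:determCCS.AB}.

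The step carrying the weight is the passage from $\sum_k\phi(C_k AB^\perp C_k)=0$ to the vanishing of every summand; this is where the positivity of $\phi$ together with $AB^\perp$ being a projection (and thus the commutativity of $A$ and $B$) is indispensable. The essential novelty compared with Proposition~\ref{prop:classmaxodet} is that \emph{no} commutativity of the CCS itself is invoked: the law of total probability supplies precisely the identity $\sum_k\phi(C_k AB^\perp C_k)=\phi(AB^\perp)$ that weak commutativity furnished there, so the conclusion now holds for genuinely noncommuting CCSs. This dovetails with Proposition~\ref{prop:CommLTP}, which exhibits weakly commuting CCSs as a special class already satisfying the law of total probability.
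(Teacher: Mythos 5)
Your proof is correct and follows essentially the same route as the paper's: both use the law of total probability together with $\phi(AB^\perp)=\phi(A^\perp B)=0$ and the positivity of $\phi$ to force $\phi(AB^\perp|C_k)=\phi(A^\perp B|C_k)=0$ for every $C_k$ of nonzero probability, and then let the screening-off condition deliver $\phi(A|C_k),\phi(B|C_k)\in\set{0,1}$. The only cosmetic difference is that the paper finishes via the balanced form~\eqref{eq:screen.balanced} and the normalization $\sum_{ij}\phi(A_iB_j|C_k)=1$, while you finish via~\eqref{eq:screen.original} and the identity $p=p^2$; both closings are equivalent.
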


Note that Proposition~\ref{prop:classmaxodet} follows from this by Proposition~\ref{prop:CommLTP}.

\begin{proof}
Let us use the notation $A_0:=A$, $A_1:=A^\perp$, $B_0:=B$, $B_1:=B^\perp$,
then the law of total probability~\eqref{eq:LTP.gen} takes the form 
\begin{equation}
\label{eq:LTPij}
\sum_{k\in K, \phi(C_k)\neq0} \phi(C_k)\phi(A_iB_j|C_k) = \phi(A_iB_j)\quad\text{for $i,j\in\set{0,1}$}.
\end{equation}
Recall that $\phi(AB^\perp)=0$ and $\phi(A^\perp B)=0$ for states describing perfect correlations~\eqref{eq:perfcorr},
so the right-hand side of~\eqref{eq:LTPij} is $(\phi(AB),0,0,\phi(A^\perp B^\perp))_{ij}$.
Since the left-hand side of~\eqref{eq:LTPij} is a convex combination (with weights $\phi(C_k)\geq0$, $\sum_k\phi(C_k)=1$)
of nonnegative terms $\phi(A_iB_j|C_k)\geq0$,
we have for these that 
\begin{equation}
\phi(A_iB_j|C_k) = \bigl(\phi(AB|C_k),0,0,\phi(A^\perp B^\perp|C_k)\bigr)_{ij}
\end{equation}
for all $k\in K$ for which $\phi(C_k)\neq0$.
Then the right-hand side of the screening-off condition~\eqref{eq:screen.balanced} is zero, which enforces that
also $\phi(A^\perp B^\perp|C_k)=0$ or $\phi(AB|C_k)=0$ on the left-hand side for all $k\in K$ for which $\phi(C_k)\neq0$,
so, using $\Id=\sum_{ij}A_iB_j$, we have either
\begin{equation}
\label{eq:CCSLTPPCcases}
\phi(A_iB_j|C_k)=(1,0,0,0)_{ij} \quad\text{or}\quad
\phi(A_iB_j|C_k)=(0,0,0,1)_{ij}
\end{equation}
for all $k\in K$ for which $\phi(C_k)\neq0$,
which is just the deterministicity~\eqref{eq:determCCS.X}.
\end{proof}

\begin{prop}
\label{prop:aCCSLTPPCComm}
For discrete finite systems (see Section~\ref{sec:prob.dxd}), 
for states giving perfect correlation for $A$ and $B$ \eqref{eq:perfcorr},
the \emph{atomic} CCSs $\set{C_k}_k$
obeying the law of total probability~\eqref{eq:LTP}
are weakly commuting~\eqref{eq:wcommutCCS}.
\end{prop}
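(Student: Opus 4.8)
The plan is to reduce weak commutativity to the deterministicity already established in Proposition~\ref{prop:CCSLTPPCdeterm}, via the eigenvector characterization of commuting atomic projections used in the proof of Proposition~\ref{prop:commatomicCCSwtriv}. First I would apply Proposition~\ref{prop:CCSLTPPCdeterm}: since the state gives perfect correlation for $A$ and $B$ and the atomic CCS $\set{C_k=\proj{\gamma_k}}_k$ obeys the law of total probability~\eqref{eq:LTP}, the CCS is deterministic, so by~\eqref{eq:determCCS.AB} we have $\phi(A|C_k),\phi(B|C_k)\in\set{0,1}$ for every $k\in K$ with $\phi(C_k)\neq0$.

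Next I would insert the atomic form of the conditional probabilities. By the second part of Lemma~\ref{lem:dxdcondphiSpec}, equation~\eqref{eq:condphiatomic}, we have $\phi(A|C_k)=\bra{\gamma_k}A\cket{\gamma_k}$ and $\phi(B|C_k)=\bra{\gamma_k}B\cket{\gamma_k}$ precisely when $\bra{\gamma_k}\rho\cket{\gamma_k}\neq0$, that is, when $\phi(C_k)\neq0$. Combined with the previous step, determinism reads $\bra{\gamma_k}A\cket{\gamma_k}\in\set{0,1}$ and $\bra{\gamma_k}B\cket{\gamma_k}\in\set{0,1}$ for all such $k$.

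The key step is then to recognize that for a projection $A$ and a unit vector $\cket{\gamma_k}$ one has $\bra{\gamma_k}A\cket{\gamma_k}=\norm{A\cket{\gamma_k}}^2$, and, using $1=\norm{\gamma_k}^2=\norm{A\cket{\gamma_k}}^2+\norm{A^\perp\cket{\gamma_k}}^2$, this value can be $1$ only if $A^\perp\cket{\gamma_k}=0$ (so $\cket{\gamma_k}\in\Ran(A)$) and can be $0$ only if $A\cket{\gamma_k}=0$ (so $\cket{\gamma_k}\in\Ran(A^\perp)$). In either case $\cket{\gamma_k}$ is an eigenvector of $A$, which is exactly the condition for $[A,\proj{\gamma_k}]=0$ exploited in the proof of Proposition~\ref{prop:commatomicCCSwtriv}. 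Running the same argument for $B$, I would obtain $[A,C_k]=0$ and $[B,C_k]=0$ for every $k$ with $\phi(C_k)\neq0$, which is precisely the weak commutativity~\eqref{eq:wcommutCCS}.

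I do not anticipate a genuine difficulty: the substance is that determinism of an atomic common cause forces its defining vector to be a joint eigenvector of the two projections $A$ and $B$. The only point demanding care is the restriction to causes $C_k$ of nonzero probability---these are the ones on which Proposition~\ref{prop:CCSLTPPCdeterm} yields determinism---so the conclusion is weak commutativity~\eqref{eq:wcommutCCS} and not the full commutativity~\eqref{eq:commutCCS}; noncommuting causes of zero probability remain allowed.
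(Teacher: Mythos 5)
Your proposal is correct and takes essentially the same route as the paper: the paper likewise obtains deterministicity from the law of total probability plus perfect correlation (reusing the intermediate step~\eqref{eq:CCSLTPPCcases} from the proof of Proposition~\ref{prop:CCSLTPPCdeterm} rather than citing its statement), and then uses atomicity to upgrade the $0$--$1$ conditional probabilities to an eigenvector/range statement yielding $[C_k,\cdot\,]=0$ for causes of nonzero probability. The only cosmetic difference is that you argue with $A$ and $B$ separately and conclude via~\eqref{eq:wcommutCCS.AB}, whereas the paper works with the four products $A_iB_j$ (via $C_kA_iB_jC_k=c_{ijk}C_k$) and concludes via~\eqref{eq:wcommutCCS.X}.
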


\begin{proof}
Let us use the notation $A_0:=A$, $A_1:=A^\perp$, $B_0:=B$, $B_1:=B^\perp$.
Recall from the proof of Proposition~\ref{prop:CCSLTPPCdeterm} that
for states giving perfect correlation for $A$ and $B$ \eqref{eq:perfcorr},
for the CCSs $\set{C_k}_k$
obeying the law of total probability~\eqref{eq:LTP} we have the two cases in~\eqref{eq:CCSLTPPCcases}.
It holds in general that $\Ran(C_k A_iB_j C_k)\subseteq\Ran(C_k)$.
Since the CCS is now also \emph{atomic},
we also have that
\begin{subequations}
\label{eq:atomicrelations}
\begin{equation}
\label{eq:atomicrelations.proj}
C_k A_iB_j C_k = c_{ijk} C_k
\end{equation}
for a $c_{ijk}\in[0,1]$, so 
\begin{equation}
\label{eq:atomicrelations.prob}
\phi(C_k A_iB_j C_k)=c_{ijk}\phi(C_k).
\end{equation}
\end{subequations}
If $\phi(C_k)\neq0$ then, by the definition of the conditional state~\eqref{eq:condphi},
the two cases in~\eqref{eq:CCSLTPPCcases} are
\begin{equation}
c_{ijk}=(1,0,0,0)_{ij} \quad\text{or}\quad
c_{ijk}=(0,0,0,1)_{ij}
\end{equation}
for all $k\in K$ by~\eqref{eq:atomicrelations.prob}.
So if $\phi(C_k)\neq0$ then either
\begin{equation}
C_k A_iB_j C_k=(C_k,0,0,0)_{ij} \quad\text{or}\quad
C_k A_iB_j C_k=(0,0,0,C_k)_{ij}
\end{equation}
for all $k\in K$ by~\eqref{eq:atomicrelations.proj}.
This leads to that if $\phi(C_k)\neq0$ then
either $\Ran(C_k)\subseteq\Ran(AB)$ or $\Ran(C_k)\subseteq\Ran(A^\perp B^\perp)$,
then $[C_k,A_iB_j]=0$ for all $i,j\in\set{0,1}$,
from which the weak commutativity follows by~\eqref{eq:wcommutCCS.X}.
\end{proof}

Note that the~\eqref{eq:atomicrelations} consequences of the atomicity of the CCS
are the key steps allowing us to achieve the result about commutativity
(which is the property of the ``whole'' \emph{projections} $A$, $B$ and $\set{C_k}_k$),
from the assumptions of screening-off~\eqref{eq:screen}, the law of total probability~\eqref{eq:LTP} and the perfect correlation~\eqref{eq:perfcorr}
(which are the properties of the \emph{expectation values} of the projections only).

For the convenience of the reader,
we also formulate the above Proposition and its proof in Dirac's bra-ket formalism.

\begin{prop}
\label{prop:aCCSLTPPCComm2}
For discrete finite systems (see Section~\ref{sec:prob.dxd}), 
for states giving perfect correlation for $A$ and $B$ \eqref{eq:dxdperfcorrABstate.rho},
the \emph{atomic} CCSs $\set{C_k=\proj{\gamma_k}}_k$
obeying the law of total probability~\eqref{eq:LTPrhovec}
are weakly commuting~\eqref{eq:wcommutCCS}.
\end{prop}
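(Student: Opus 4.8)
The plan is to transcribe the operator proof of Proposition~\ref{prop:aCCSLTPPCComm} into the bra-ket formalism, writing $A_0:=A$, $A_1:=A^\perp$, $B_0:=B$, $B_1:=B^\perp$ and exploiting that for an atomic CCS the conditional probabilities are simply $\phi(X|C_k)=\bra{\gamma_k}X\cket{\gamma_k}$ whenever $q_k:=\bra{\gamma_k}\rho\cket{\gamma_k}\neq0$, see~\eqref{eq:condphiatomic}. First I would write the law of total probability~\eqref{eq:LTPrhovec} for the two ``off-diagonal'' cells $X=AB^\perp$ and $X=A^\perp B$. For states giving perfect correlation~\eqref{eq:perfcorr} the right-hand sides vanish, $\Tr(\rho AB^\perp)=\Tr(\rho A^\perp B)=0$. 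Since $A$ and $B$ commute, $AB^\perp$ is a projection, so each summand $q_k\bra{\gamma_k}AB^\perp\cket{\gamma_k}=q_k\norm{AB^\perp\cket{\gamma_k}}^2$ is nonnegative, and a sum of nonnegative terms equal to zero forces every term to vanish. Hence for all $k$ with $q_k\neq0$ I obtain $AB^\perp\cket{\gamma_k}=0$ and, symmetrically, $A^\perp B\cket{\gamma_k}=0$.

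Next I would feed this into the screening-off condition in its atomic balanced form~\eqref{eq:dxdscreenatomic.balanced}: its right-hand side $\bra{\gamma_k}AB^\perp\cket{\gamma_k}\bra{\gamma_k}A^\perp B\cket{\gamma_k}$ is now zero, so $\norm{AB\cket{\gamma_k}}^2\norm{A^\perp B^\perp\cket{\gamma_k}}^2=0$, meaning at least one of $AB\cket{\gamma_k}$, $A^\perp B^\perp\cket{\gamma_k}$ also vanishes. Resolving the unit vector as $\cket{\gamma_k}=AB\cket{\gamma_k}+AB^\perp\cket{\gamma_k}+A^\perp B\cket{\gamma_k}+A^\perp B^\perp\cket{\gamma_k}$ and discarding the three vanishing pieces shows that, for each $k$ with $q_k\neq0$, either $\cket{\gamma_k}\in\Ran(AB)$ or $\cket{\gamma_k}\in\Ran(A^\perp B^\perp)$; in particular $\cket{\gamma_k}$ is an eigenvector with eigenvalue $0$ or $1$ of each of the four projections $A_iB_j$.

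Finally I would convert this eigenvector property into operator commutativity exactly as in the proof of Proposition~\ref{prop:commatomicCCSwtriv}: if $A_iB_j\cket{\gamma_k}=\lambda\cket{\gamma_k}$ with $\lambda\in\set{0,1}$, then both $A_iB_j\proj{\gamma_k}$ and $\proj{\gamma_k}A_iB_j$ equal $\lambda\proj{\gamma_k}$, so $[A_iB_j,C_k]=0$. Since this holds for all $i,j$ whenever $q_k\neq0$, weak commutativity follows by the equivalent form~\eqref{eq:wcommutCCS.X}. The main obstacle, and the point where atomicity is indispensable, is the passage from the vanishing expectation values to the statement that the whole vector $\cket{\gamma_k}$ lies in the range of a single cell projection: this is what upgrades the mere deterministicity of Proposition~\ref{prop:CCSLTPPCdeterm} (valid for arbitrary CCSs) to genuine commutativity of the rank-one $C_k$, and it is precisely the bra-ket analogue of the rank-one relations~\eqref{eq:atomicrelations} used in the operator proof.
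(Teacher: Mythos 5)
Your proposal is correct and follows essentially the same route as the paper's own proof: the law of total probability plus perfect correlation kills the off-diagonal cells for every $C_k$ of nonzero probability, the balanced screening-off condition then forces one of the two diagonal cells to vanish as well, and atomicity upgrades the vanishing expectation values to $\cket{\gamma_k}\in\Ran(AB)$ or $\cket{\gamma_k}\in\Ran(A^\perp B^\perp)$, whence $[C_k,A_iB_j]=0$. The only cosmetic difference is that you phrase the key step via $\bra{\gamma_k}P\cket{\gamma_k}=\norm{P\cket{\gamma_k}}^2$ for the cell projections, while the paper works with the expectation values directly; these are the same argument.
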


\begin{proof}
Let us use the notation $A_0:=A$, $A_1:=A^\perp$, $B_0:=B$, $B_1:=B^\perp$,
then the law of total probability~\eqref{eq:LTPrhovec} takes the form
\begin{equation}
\label{eq:LTPrhovecij}
\sum_{k\in K} q_k \bra{\gamma_k} A_iB_j \cket{\gamma_k} = \Tr(\rho A_iB_j)\quad\text{for $i,j\in\set{0,1}$}.
\end{equation}
For states describing perfect correlations~\eqref{eq:dxdperfcorrABstate.rho},
the right-hand side of~\eqref{eq:LTPrhovecij} above is $(\Tr(\rho AB),0,0,\Tr(\rho A^\perp B^\perp))_{ij}$.
Since the left-hand side of~\eqref{eq:LTPrhovecij} is
a convex combination (with weights $q_k=\bra{\gamma_k}\rho\cket{\gamma_k}\geq0$, $\sum_kq_k=1$)
of nonnegative terms $\bra{\gamma_k}A_iB_j\cket{\gamma_k}\geq0$,
we have for these that 
\begin{equation}
\bra{\gamma_k}A_iB_j\cket{\gamma_k} = \bigl(\bra{\gamma_k}AB\cket{\gamma_k},0,0,\bra{\gamma_k}A^\perp B^\perp\cket{\gamma_k}\bigr)_{ij} 
\end{equation}
for all $k\in K$ for which $q_k\neq0$.
Then the right-hand side of the screening-off condition~\eqref{eq:dxdscreenatomic.balanced} is zero, which enforces that also
$\bra{\gamma_k}A^\perp B^\perp\cket{\gamma_k}=0$ or $\bra{\gamma_k}AB\cket{\gamma_k}=0$ on the left-hand side for all $k\in K$ for which $q_k\neq0$,
so, using $\Id=\sum_{ij}A_iB_j$, we have either
\begin{equation}
\bra{\gamma_k}A_iB_j\cket{\gamma_k}=(1,0,0,0)_{ij} \quad\text{or}\quad
\bra{\gamma_k}A_iB_j\cket{\gamma_k}=(0,0,0,1)_{ij}
\end{equation}
for all $k\in K$ for which $q_k\neq0$.
This leads to that if $q_k\neq0$ then $\cket{\gamma_k}=AB\cket{\gamma_k}$ or $\cket{\gamma_k}=A^\perp B^\perp \cket{\gamma_k}$,
then $[C_k,A_iB_j]=0$ for all $i,j\in\set{0,1}$,
from which the weak commutativity follows by~\eqref{eq:wcommutCCS.X}.
\end{proof}

Proposition~\ref{prop:aCCSLTPPCComm} (Proposition~\ref{prop:aCCSLTPPCComm2}) somewhat weakens the concept of noncommuting common causes, since it shows that the law of total probability forces the CCSs to be ``almost'' commutative---``almost'' in the sense that all the noncommuting elements of the CCS should get zero probability. Recall, however, that Proposition~\ref{prop:aCCSLTPPCComm} holds only for atomic CCSs, and only for perfect correlations, thus its scope is rather limited. Hence, the natural question arises whether noncommuting nonatomic CCSs for perfect correlations, obeying the law of total probability are excluded. Below we show that this is not the case (see the \textit{second example} below).

% ******************************************************************************
\subsection{Examples}
\label{sec:limitations.examples}

Now we provide some examples in two-qubit systems (see Section~\ref{sec:prob.2x2})
to illustrate that the law of total probability in case of perfect correlation leads to 
the weak commutativity of the CCS in the atomic case,
but not in the nonatomic case.
(See also Table~\ref{tab:examples} in Appendix~\ref{appsec:2x2.summary} for convenience.)

The \textit{first example}
is a family of nontrivial atomic CCSs of the events $A$ and $B$ \eqref{eq:2x2AB}
which satisfies the law of total probabilities
in the system described by the states~\eqref{eq:2x2perfcorrABstate.gen} giving perfect correlations.
The immediate consequence of Proposition~\ref{prop:aCCSLTPPCComm}
for two-qubit systems (see Section~\ref{sec:prob.2x2}) is that,
for states $\rho$ giving perfect correlation~\eqref{eq:2x2perfcorrABstate.gen} of the events $A$ and $B$~\eqref{eq:2x2AB},
the atomic CCSs $\set{C_k=\proj{\gamma_k}}_k$
obeying the law of total probability~\eqref{eq:2x2LTP}
can differ from the classical CCS \eqref{eq:CCSclass}
only in causes of zero probability.
These CCSs are exactly those which are given by the vectors
\begin{subequations}
\label{eq:CCSntratC}
\begin{align}
    \cket{\gamma_0}&:=\cket{00},& \abs{\bra{\gamma_0}\rho\cket{\gamma_0}}^2&=(1+r_3)/2,\\
    \cket{\gamma_1}&:=c\cket{01}+s\cket{10},& \abs{\bra{\gamma_1}\rho\cket{\gamma_1}}^2&=0,\\
    \cket{\gamma_2}&:=-\Cc{s}\cket{01}+\Cc{c}\cket{10},& \abs{\bra{\gamma_2}\rho\cket{\gamma_2}}^2&=0,\\
    \cket{\gamma_3}&:=\cket{11},& \abs{\bra{\gamma_3}\rho\cket{\gamma_3}}^2&=(1-r_3)/2,
\end{align}
\end{subequations}
with $c,s\in\field{C}$, $\abs{c}^2+\abs{s}^2=1$,
covering also the classical case~\eqref{eq:CCSclass} for $c=0$ or $s=0$ up to phases and ordering,
and also the CCS given by~\eqref{eq:CCSntrat} for $c,s\in\field{R}$, where the angle parametrization was used.
It is straightforward to check that $\set{C_k=\proj{\gamma_k}}_k$ is an atomic partition.
It consists of nonproduct projections if and only if $c\neq0$ and $s\neq0$, which can be checked by~\eqref{eq:2x2NP},
as $\det_{ij}(\skalp{ij}{\gamma_1})=-\Cc{\det_{ij}(\skalp{ij}{\gamma_2})}=-cs$.
So it is a strongly trivial CCS if $c=0$ or $s=0$.
It is a CCS of the events $A$ and $B$~\eqref{eq:2x2AB} for the state~\eqref{eq:2x2perfcorrABstate.gen},
which can be checked by~\eqref{eq:2x2SCR},
as $\det_{ij}(|\skalp{ij}{\gamma_0}|^2)=\det_{ij}(|\skalp{ij}{\gamma_3}|^2)=0$,
and although $\det_{ij}(|\skalp{ij}{\gamma_1}|^2)=\det_{ij}(|\skalp{ij}{\gamma_2}|^2)=-\abs{cs}^2$,
$C_1$ and $C_2$ are of zero probability,
$\abs{\bra{\gamma_1}\rho\cket{\gamma_1}}^2=\abs{\bra{\gamma_2}\rho\cket{\gamma_2}}^2=0$
because of the perfect correlation~\eqref{eq:2x2perfcorrABstate.pure}.
This does not hold for any state, where $A$ and $B$ are not perfectly correlated,
so $\set{C_k=\proj{\gamma_k}}_k$ is a nontrivial CCS if $c\neq0$ and $s\neq0$.
It is a commuting CCS of the events $A$ and $B$ \eqref{eq:2x2AB} if and only if $c=0$ or $s=0$, by~\eqref{eq:2x2commutCCS},
otherwise it is weakly commuting~\eqref{eq:wcommutCCS},
since the commutativity~\eqref{eq:commutCCS} fails only for $C_1$ and $C_2$, which are of zero probability.
The law of total probability holds for this CCS, which follows from the weak commutativity by Proposition~\ref{prop:CommLTP},
although it may also be illustrative to check this directly, by~\eqref{eq:2x2LTPatomic},
since $\rho\cket{\gamma_1}=\rho\cket{\gamma_2}=0$.
Note that this is a deterministic CCS, by Proposition~\ref{prop:CCSLTPPCdeterm},
although it may also be illustrative to check this directly, by~\eqref{eq:2x2phiCkABatomic}.

The \textit{second example}
is a family of nontrivial nonatomic CCSs of the events $A$ and $B$ \eqref{eq:2x2AB}
which satisfies the law of total probabilities
in the system described by the states~\eqref{eq:2x2perfcorrABstate.gen} giving perfect correlations.
Note that atomisticity of the CCS was a key property in Proposition~\ref{prop:aCCSLTPPCComm}.
For nonatomic CCSs $\set{C_k}_k$ of the events $A$ and $B$, Proposition~\ref{prop:aCCSLTPPCComm} does not hold,
that is, perfect correlation~\eqref{eq:dxdperfcorrABstate.psi} for the events $A$ and $B$~\eqref{eq:2x2AB}
together with the law of total probability~\eqref{eq:LTPrho}
do not lead to the weak commutativity of the CCS.
The simplest example for this is given in two-qubit systems (see Section~\ref{sec:prob.2x2}) by
\begin{equation}
\label{eq:CCS22ntratC}
    C_+ := \proj{\gamma_0} + \proj{\gamma_1},\qquad
    C_- := \proj{\gamma_2} + \proj{\gamma_3},\qquad
    \Tr(\rho C_\pm)=(1\pm r_3)/2,
\end{equation}
with the $\cket{\gamma_k}$ vectors given in \eqref{eq:CCSntratC},
covering also the weakly trivial CCSs given in~\eqref{eq:CCStrivAB.2} for $c=0$ and for $s=0$,
and the CCS given by~\eqref{eq:CCS22ntrat} for $c,s\in\field{R}$, where the angle parametrization was used.
So $\set{C_+,C_-}$ is a weakly trivial CCS if $c=0$ or $s=0$.
It is a rank-two CCS of the events $A$ and $B$~\eqref{eq:2x2AB} for the state~\eqref{eq:2x2perfcorrABstate.gen},
which can be checked by~\eqref{eq:dxdscreen.balanced}
since
$\rho\cket{\gamma_1}=\rho\cket{\gamma_2}=0$,
because of the perfect correlation~\eqref{eq:2x2perfcorrABstate.pure}.
This does not hold for any state, where $A$ and $B$ are not perfectly correlated,
so $\set{C_k=\proj{\gamma_k}}_k$ is a nontrivial CCS if $c\neq0$ and $s\neq0$.
It is a commuting CCS of the events $A$ and $B$ \eqref{eq:2x2AB} if and only if $c=0$ or $s=0$, by~\eqref{eq:2x2commutCCS},
otherwise it is noncommuting.
The law of total probability holds for this CCS, by~\eqref{eq:LTPrho22},
since $\rho\cket{\gamma_1}=\rho\cket{\gamma_2}=0$.
Note that this is a deterministic CCS, by Proposition~\ref{prop:CCSLTPPCdeterm},
although it may also be illustrative to check this directly, by~\eqref{eq:dxdcondphi}.

% ******************************************************************************
% ******************************************************************************
\section{Conclusions}
\label{sec:concl}
In the paper we revisited the concept of noncommuting common cause, a concept which was introduced into the literature as a generalization of the classical concept of the common cause. Applying noncommuting common causes in causal explanations had plenty of beneficial consequences, among them the rehabilitation of the Common Cause Principle in local physical theories with locally finite degrees of freedom and the existence of a local common causal explanation of the EPR-Bell scenario. 

The concept of noncommuting common causes, however, received sharp criticism. The two most important criticisms were the triviality objection and the lack of explanatory force. In reply to the first objection, in the paper we have shown that the existence of strongly trivial CCSs compromise noncommutative common causal explanations just as little as the existence of classical atomic CCSs compromise classical common causal explanations. Strongly trivial CCSs form only a negligibly small part of the huge set of CCSs. In reply to the second objection, we have shown that although the law of total probability does not a priori hold for noncommuting CCSs, still it can hold for certain noncommuting CCSs. Since the satisfaction of the law of total probability is an indispensable part of a common causal explanation, it can be used as an extra filter in the search for noncommutative CCSs. 

In the paper, we also touched upon a recent debate on the role of the EPR argument in Bell's theorems. We have shown that deterministic common causes cannot be derived from perfect correlations if the underlying event space is not commutative. In other words, perfect correlations can have  noncommuting common causes which are indeterministic. 

Although our paper intends to be a kind of defense of noncommutative common causal explanations, we also admitted some limitations. Perfect correlations can have a noncommutative atomic common causal explanation which satisfies  the law of total probability only in the weak sense, that if if the noncommuting part of the CCS gets zero probability. In the general nonatomic case, however, the law of total probability does not constrain the CCSs to be weakly commutative and this opens up a way to explain quantum correlations by noncommuting common causes. Noncommutative common causes screen off correlations just like commutative ones and if they also satisfy the law of total probability for the given state, they can be interpreted as a finer description of the situation in question. And this is all we require from a neat common causal explanation. 

Unfortunately, it is this latter feature of a finer description which is lacking in the noncommutative common causal explanation given by  \citep{Hofer-Szabo-2012,Hofer-Szabo-2013b}. Namely, if their CCS satisfied the law of total probability, then Bell's inequalities would follow by simply summing up for the common causes. Thus, we think that the furthest one can go in the common causal explanation of the  EPR-Bell scenario is the one given by \citet{Hofer-Szabo-2012,Hofer-Szabo-2013b}, namely, (local) noncommuting common causes violating the law of total probability.

\vspace{.3in}
\noindent
\textbf{Acknowledgments.} This work has been supported by the Hungarian National Research, Development and Innovation Office (grant nos. K-134275, K-134983 and KKP-133827). We thank Péter Vecsernyés for valuable discussions.

\appendix
% ******************************************************************************
% ******************************************************************************
\section{The properties of the two-qubit CCSs}
\label{appsec:2x2}

% ******************************************************************************
\subsection{Summary of the two-qubit examples}
\label{appsec:2x2.summary}

In Table~\ref{tab:examples} we collect the properties of the CCSs of the events $A$ and $B$ \eqref{eq:2x2AB}
in two-qubit systems (see Section~\ref{sec:prob.2x2}).
It is illustrative to recall Propositions~\ref{prop:commatomicCCSwtriv}, \ref{prop:prodvecCCS}, \ref{prop:CommLTP}, \ref{prop:classmaxodet}, \ref{prop:CCSLTPPCdeterm}, \ref{prop:aCCSLTPPCComm} and \ref{prop:aCCSLTPPCComm2}
to see how some of the different columns of the table are related.

\begin{sidewaystable}
\renewcommand{\arraystretch}{1.2}
\centering
\begin{tabular}{cccccccc}
\hline
partition & CCS of \eqref{eq:2x2AB} & atomic & commuting & 
  product & trivial &
  law of total prob. & deterministic \\
\hline
\hline
\eqref{eq:CCStrivAB.2}   & yes & no     & yes &
  (yes) & weakly &
  yes & yes \\
%\hline
\eqref{eq:CCSclass},\eqref{eq:CCStrivAB.4}     & yes & yes    & yes &
  yes & strongly &
  yes & yes \\
%\hline
\eqref{eq:CCSGabor}      & yes & yes    & no  &
  yes & strongly &
  yes & no \\
%\hline
\eqref{eq:CCSclassU}     & yes & yes    & iff $\theta\in\pi\field{Z}$ &
  yes & strongly &
  \tlcell{for $\cket{\psi}$ in \eqref{eq:CCSclassULTPstate},}{Appendix~\ref{appsec:2x2.CCSLTP}} & iff $\theta\in\pi\field{Z}$\\
%\hline
\eqref{eq:CCStwist}      & yes & yes    & iff $\theta\in\pi\field{Z}$ &
  iff $\theta\in\pi\field{Z}$ & \tlcell{weakly,}{str.~iff $\theta\in\pi\field{Z}$}  &
  \tlcell{for $\cket{\psi}$ in \eqref{eq:CCStwistLTPstate},}{Appendix~\ref{appsec:2x2.CCSLTP}} & iff $\theta\in\pi\field{Z}$\\
%\hline
\eqref{eq:CCSBell}       & yes & yes    & no &
  iff $\theta\in\pi\field{Z}$ & \tlcell{weakly,}{str.~iff $\theta\in\pi\field{Z}$}  &
  ? & no \\
%\hline
\eqref{eq:CCShyper}      & yes & yes    & no &
  no & weakly &
  ? & no \\
%\hline
\eqref{eq:CCSntrat}      & for $\cket{\psi}$ in~\eqref{eq:Bell0} & yes    & \tlcell{weakly,}{yes iff $\theta\in\pi\field{Z}$} &
  iff $\theta\in\pi\field{Z}$ & \tlcell{non-,}{str.~iff $\theta\in\pi\field{Z}$} &
  for $\cket{\psi}$ in~\eqref{eq:Bell0} & yes \\
%\hline
\eqref{eq:CCSntratU}     & for $\cket{\psi}$ in~\eqref{eq:Bell0} & yes    & no &
  iff $\theta\in\pi\field{Z}$ & \tlcell{non-,}{str.~iff $\theta\in\pi\field{Z}$} &
  no & no \\
%\hline
\eqref{eq:CCS22ntrat}  & for $\cket{\psi}$ in~\eqref{eq:Bell0} & no & iff $\theta\in\pi\field{Z}$ &
  (iff $\theta\in\pi\field{Z}$) & \tlcell{non-,}{wk.~iff $\theta\in\pi\field{Z}$} &
  for $\cket{\psi}$ in~\eqref{eq:Bell0} & yes \\
%\hline
\eqref{eq:CCS22ntratU} & for $\cket{\psi}$ in~\eqref{eq:Bell0} & no & no &
  (iff $\theta\in\pi\field{Z}$) & non- &
  no & no \\
%\hline
\eqref{eq:CLTP}          & no & yes     & no &
  no & (no) &
  for $\cket{\psi}$ in \eqref{eq:CLTPstate} & (no) \\
\eqref{eq:CCSntratC}      & for $\rho$ in~\eqref{eq:2x2perfcorrABstate.gen} & yes    & \tlcell{weakly,}{yes iff $c\;\text{or}\,s=0$} &
  iff $c\;\text{or}\,s=0$ & \tlcell{non-,}{str.~iff $c\;\text{or}\,s=0$} &
  for $\rho$ in~\eqref{eq:2x2perfcorrABstate.gen} & yes \\
\eqref{eq:CCS22ntratC}  & for $\rho$ in~\eqref{eq:2x2perfcorrABstate.gen} & no & \tlcell{weakly,}{yes iff $c\;\text{or}\,s=0$} &
  (iff $c\;\text{or}\,s=0$) & \tlcell{non-,}{wk.~iff $c\;\text{or}\,s=0$} &
  for $\rho$ in~\eqref{eq:2x2perfcorrABstate.gen} & yes \\
\hline
\end{tabular}
\caption{Summary of the properties of the examples constructed in the main text.
(CCSs in \eqref{eq:CCStrivAB} are understood for two-qubit systems.
(yes)/(no)/(\dots): irrelevant.) 
Note that the CCSs in \eqref{eq:CCSntratC} and \eqref{eq:CCS22ntratC} are given by 
the more general (complex) parametrization of \eqref{eq:CCSntrat} and \eqref{eq:CCS22ntrat}.}
\label{tab:examples}
\end{sidewaystable}

% ******************************************************************************
\subsection{The law of total probability for the CCSs given in (\ref{eq:CCSclassU})}
\label{appsec:2x2.CCSLTP}

We have a one-parameter family of CCSs~\eqref{eq:CCSclassU} of the events $A$ and $B$ \eqref{eq:2x2AB}
and a one-parameter family of state vectors~\eqref{eq:CCSclassULTPstate}.
The question is if this two-parameter freedom is enough to find parameter values,
for which the state and CCS obey the law of total probability~\eqref{eq:2x2LTP}.

The $q_k = \abs{\skalp{\psi}{\gamma_k}}^2$ probabilities of the CCS are 
\begin{subequations}
\begin{align}
q_0 = q_3 &= \frac12\bigabs{a(c^2-s^2)+b2cs}^2=\frac12\bigabs{a\cos(\theta)+b\sin(\theta)}^2,\\
q_1 = q_2 &= \frac12\bigabs{a2cs-b(c^2-s^2)}^2=\frac12\bigabs{a\sin(\theta)-b\cos(\theta)}^2,
\end{align}
\end{subequations}
derived from~\eqref{eq:CCSclassU} and~\eqref{eq:CCSclassULTPstate}.
Using these, the law of total probability~\eqref{eq:2x2LTP} reads as
\begin{equation}
\bigabs{a\cos(\theta)+b\sin(\theta)}^2
\begin{bmatrix}
c^4+s^4\\
2c^2s^2\\
2c^2s^2\\
c^4+s^4
\end{bmatrix}
+\bigabs{a\sin(\theta)-b\cos(\theta)}^2
\begin{bmatrix}
2c^2s^2\\
c^4+s^4\\
c^4+s^4\\
2c^2s^2
\end{bmatrix}=
\begin{bmatrix}
\abs{a}^2\\
\abs{b}^2\\
\abs{b}^2\\
\abs{a}^2
\end{bmatrix},
\end{equation}
also from~\eqref{eq:CCSclassU} and~\eqref{eq:CCSclassULTPstate}.
Using $1^2=(c^2+s^2)^2=c^4+s^4+2c^2s^2$
and $2c^2s^2=\frac12\sin^2(\theta)$ and
$c^4+s^4=1-\frac12\sin^2(\theta)=\frac12(1+\cos^2(\theta))$,
we end up with the equations
\begin{subequations}
\begin{align}
\bigabs{a\cos(\theta)+b\sin(\theta)}^2\frac12(1+\cos^2(\theta)) + \bigabs{a\sin(\theta)-b\cos(\theta)}^2\frac12\sin^2(\theta) = \abs{a}^2,\\
\bigabs{a\cos(\theta)+b\sin(\theta)}^2\frac12\sin^2(\theta) + \bigabs{a\sin(\theta)-b\cos(\theta)}^2\frac12(1+\cos^2(\theta)) = \abs{b}^2.
\end{align}
\end{subequations}
Writing out the products (considering that the parameters are real here),
and collecting the coefficients of $a$ and $b$, we end up with
\begin{subequations}
\label{eq:quadr0}
\begin{align}
\label{eq:quadr0.1}
\begin{split}
& a^2 \Bigl(\cos^2(\theta)\frac12(1+\cos^2(\theta)) +\sin^2(\theta)\frac12\sin^2(\theta) \Bigr)\\
&\qquad+2ab \cos(\theta)\sin(\theta)\Bigl(\frac12(1+\cos^2(\theta))-\frac12\sin^2(\theta) \Bigr)\\
&\qquad+b^2 \Bigl(\sin^2(\theta)\frac12(1+\cos^2(\theta))+\cos^2(\theta)\frac12\sin^2(\theta) \Bigr) = a^2,
\end{split}\\
\label{eq:quadr0.2}
\begin{split}
& a^2 \Bigl(\cos^2(\theta)\frac12\sin^2(\theta)+\sin^2(\theta)\frac12(1+\cos^2(\theta)) \Bigr)\\
&\qquad-2ab \cos(\theta)\sin(\theta)\Bigl(\frac12(1+\cos^2(\theta))-\frac12\sin^2(\theta) \Bigr)\\
&\qquad+b^2 \Bigl(\sin^2(\theta)\frac12\sin^2(\theta)+\cos^2(\theta)\frac12(1+\cos^2(\theta)) \Bigr) = b^2.
\end{split}
\end{align}
\end{subequations}
Recasting these we have
\begin{subequations}
\begin{equation}
\label{eq:quadr}
\begin{bmatrix}a&b\end{bmatrix}
\begin{bmatrix} u&v\\ v&-u \end{bmatrix}
\begin{bmatrix} a\\b \end{bmatrix} =0,\qquad
\begin{bmatrix}a&b\end{bmatrix}
\begin{bmatrix} -u&-v\\ -v&u \end{bmatrix}
\begin{bmatrix} a\\b \end{bmatrix} =0,
\end{equation}
where
\begin{align}
\label{eq:u}
\begin{split}
u &= \cos^2(\theta)\frac12(1+\cos^2(\theta)) +\frac12\sin^4(\theta) -1 \\
  &= - \sin^2(\theta)\frac12(1+\cos^2(\theta))-\frac12\cos^2(\theta)\sin^2(\theta),
\end{split}\\
\label{eq:v}
v &= \cos(\theta)\sin(\theta)\Bigl(\frac12(1+\cos^2(\theta))-\frac12\sin^2(\theta) \Bigr),
\end{align}
\end{subequations}
so the two equations are identical.
It is important to see that
the coefficient of $b^2$ is the opposite of the coefficient of $a^2$
in both~\eqref{eq:quadr0.1} and~\eqref{eq:quadr0.2}, which is the equality in~\eqref{eq:u}.
(It can be seen by straightforward trigonometry.)
This ensures the existence of real solutions for this \emph{homogeneous} quadratic system.
We also simplify $u$ and $v$,
\begin{subequations}
\label{eq:simpluv}
\begin{align}
u &= -\frac12\sin^2(\theta)(1+2\cos^2(\theta)),\\
v &= \cos^3(\theta)\sin(\theta),
\intertext{and we also have}
u^2+v^2 &= \frac14\sin^2(\theta)(1+3\cos^2(\theta)).
\end{align}
\end{subequations}
Now let us have the angle parametrization also for the state vector~\eqref{eq:CCSclassULTPstate} as
$a:=\cos(\xi/2)$, $b:=\sin(\xi/2)$ for $0\leq\xi\leq\pi$.
Then the quadratic equations are of the form
\begin{equation}
\begin{split}
\begin{bmatrix}a&b\end{bmatrix}
\begin{bmatrix} u&v\\ v&-u \end{bmatrix}
\begin{bmatrix} a\\b \end{bmatrix}
&= u \cos^2(\xi/2) + v 2\cos(\xi/2)\sin(\xi/2) - u \sin^2(\xi/2)\\
&= u \cos(\xi) + v \sin(\xi) = 0.
\end{split}
\end{equation}
This expresses the orthogonality of the vectors
$[u,v]$ and $[\cos(\xi),\sin(\xi)]$.
(Expressing $[u,v]/\sqrt{u^2+v^2} = [\cos(\Theta),\sin(\Theta)]$,
we have that the law of total probability holds for the states for which $\xi=\Theta(\theta)\pm \pi/2$;
however, we proceed without the angle $\Theta$.)
To find the $\xi$ parameters of the state, we have
\begin{equation}
\begin{bmatrix}
\cos(\xi)\\
\sin(\xi)
\end{bmatrix}
=\begin{bmatrix}
v/\sqrt{u^2+v^2}\\
-u/\sqrt{u^2+v^2}\\
\end{bmatrix},
\end{equation}
(the opposite would also be suitable, this choice is to fulfill $0\leq\xi\leq\pi$)
then
\begin{equation}
\label{eq:theta}
\begin{bmatrix} a\\ b \end{bmatrix}
=\begin{bmatrix} \cos(\xi/2)\\ \sin(\xi/2) \end{bmatrix}
=\begin{bmatrix}
\sqrt{\frac{1+v/\sqrt{u^2+v^2}}{2}}\\
\sqrt{\frac{1-v/\sqrt{u^2+v^2}}{2}}
\end{bmatrix}
=\begin{bmatrix}
\sqrt{\frac{1+2\cos^3(\theta)/\sqrt{1+3\cos^2(\theta)}}{2}}\\
\sqrt{\frac{1-2\cos^3(\theta)/\sqrt{1+3\cos^2(\theta)}}{2}}
\end{bmatrix},
\end{equation}
by~\eqref{eq:simpluv}.
The function $\theta\mapsto\xi(\theta)$ is continuous, $\xi(0)=0$, $\xi(\pi)=\pi$,
so for all the $\theta$ parameters of the CCS there exists a state of suitable parameter $\xi$
(and vice-versa),
see the plots in Figure~\ref{fig:theta}.

\begin{figure}
    \centering
\includegraphics{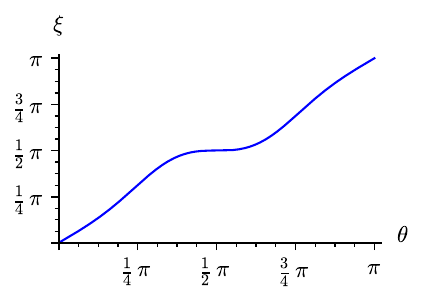}
\includegraphics{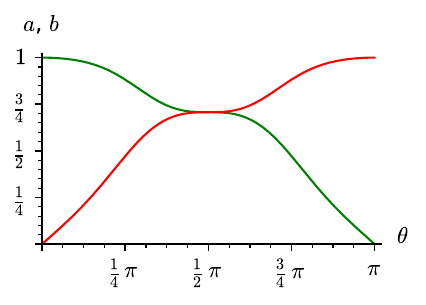} 
    \caption{Plots of the functions given implicitly in~\eqref{eq:theta}.
    (Blue: $\xi(\theta)$, green $a(\theta)=\cos(\xi(\theta)/2)$, red $b(\theta)=\sin(\xi(\theta)/2)$.)}
    \label{fig:theta}
\end{figure}

The CCS given by~\eqref{eq:CCSclassUspec} and the state vector given in~\eqref{eq:CCSclassULTPstatespec} are just these,
evaluated for the value $\theta=\pi/4$.
(The value $\theta=\pi/2$ looks more symmetric, however, $\Delta_\psi(A,B)=0$ then.)
Then $\cos(\theta)=\sin(\theta)=1/\sqrt{2}$,
$c=\cos(\theta/2)=\sqrt{\frac{1+\cos(\theta)}{2}}=\sqrt{\frac{\sqrt{2}+1}{2\sqrt{2}}}$,
$s=\sin(\theta/2)=\sqrt{\frac{1-\cos(\theta)}{2}}=\sqrt{\frac{\sqrt{2}-1}{2\sqrt{2}}}$,
leading to~\eqref{eq:CCSclassUspec}.
We also have $u=-1/2$ and $v=1/4$, 
then $\cos(\xi)=1/\sqrt{5}$ and $\sin(\xi)=2/\sqrt{5}$,
then $a=\cos(\xi/2)=\sqrt{\frac{1+\cos(\xi)}{2}}=\sqrt{\frac{\sqrt{5}+1}{2\sqrt{5}}}$ and
     $b=\sin(\xi/2)=\sqrt{\frac{1-\cos(\xi)}{2}}=\sqrt{\frac{\sqrt{5}-1}{2\sqrt{5}}}$,
so we have the state vector~\eqref{eq:CCSclassULTPstatespec}.

\newpage
% ******************************************************************************
% ******************************************************************************
\bibliography{bibliography.bib}{}

% ******************************************************************************
% ******************************************************************************
\end{document}